\documentclass[12pt]{article}

\usepackage{amsmath}
\usepackage{graphicx}%
\usepackage{amsfonts}%
\usepackage{amssymb}
\usepackage{mathrsfs}
\usepackage{braket}
\usepackage{dsfont}
\usepackage{amsthm}
\usepackage{enumitem}
\usepackage{relsize}
\usepackage{cite}

\setlength{\textheight}{22.4cm} \setlength{\textwidth}{16.2cm}
\setlength{\oddsidemargin}{0.1in} \setlength{\evensidemargin}{0.1in}
\setlength{\topmargin}{-0.4in} \setlength{\parskip}{0.1cm}
\setcounter{MaxMatrixCols}{30}                                                        

\newtheorem{condition}{Condition}
\newtheorem{assumption}{Assumption}
\newtheorem{theorem}{Theorem}

\newtheorem{corollary}{Corollary}

\newtheorem{lemma}{Lemma}

\newtheorem{remark}{Remark}

\title{Error Bounds for Finite-Dimensional Approximations of Input-Output Open Quantum Systems by Subspace Truncation and Adiabatic Elimination\thanks{Research supported by the Australian Research Council}}

\author{Onvaree Techakesari and Hendra I. Nurdin \footnote{O. Techakesari and H. I. Nurdin are with the school of Electrical Engineering and Telecommunications, UNSW Australia, Sydney NSW 2052, Australia. Email: o.techakesari@unsw.edu.au and h.nurdin@unsw.edu.au}}


\begin{document}

\maketitle

\begin{abstract}
An important class of physical systems that are of interest in practice are input-output open quantum systems that can be described by quantum stochastic differential equations and defined on an infinite-dimensional underlying Hilbert space. Most commonly, these systems involve coupling to a quantum harmonic oscillator as a system component. This paper is concerned with error bounds in the finite-dimensional approximations of input-output open quantum systems defined on an infinite-dimensional Hilbert space. We develop a framework for developing error bounds between the time evolution of the state of a class of infinite-dimensional quantum systems and its approximation on a finite-dimensional subspace of the original, when both are initialized in the latter subspace. This framework is then applied to two approaches for obtaining finite-dimensional approximations: subspace truncation and adiabatic elimination.  Applications of the bounds to some physical examples drawn from the literature are provided to illustrate our results.
\end{abstract}

\noindent{\it Keywords}: Quantum stochastic differential equations, input-output open quantum systems, finite-dimensional approximations, error bounds, approximation errors

\section{Introduction} \label{sec:intro}

Quantum stochastic differential equations (QSDEs) developed independently by Hudson and Parthasarathy \cite{HP84} and Gardiner and Collett \cite{GC85}  (the latter in a less general form than the former) have been widely used to describe the input-output models of physical open markov quantum systems \cite{WM10,BE08,NJD08}. Such models describe the evolution of Markovian quantum systems interacting with a propagating quantum field, such as a quantum optical field,  and are frequently encountered in quantum optics, optomechanics, and related fields. An example in quantum optics would be a cavity QED (quantum electrodynamics) system where a single atom is trapped inside an optical cavity that interacts with an external coherent laser beam impinging on the optical cavity. These input-output models have subsequently played an important role in the modern  development of quantum filtering and quantum feedback control theory \cite{BvHJ07,BvH08}.  Many types of quantum feedback controllers have been proposed in the literature on the basis of  QSDEs, using both measurement-based quantum feedback control, e.g., \cite{BE08,BvH08,BvHJ07,WM10}, and coherent feedback control, e.g., \cite{JNP06,NJP07b,KNPM10}. Besides,  QSDEs have also been applied in various developments in quantum information processing, such as in quantum computation technology; e.g., see \cite{DK04}.

In various physical systems of interest, one often deals with input-output systems that include coupling to a quantum harmonic oscillator. For instance, typical superconducting circuits that are of interest for quantum information processing consist of artificial two-level atoms coupled to  a transmission line resonator. The former is typically described using a finite-dimensional Hilbert space and the latter is a quantum harmonic oscillator with an infinite-dimensional underlying Hilbert space (i.e., $L^2(\mathbb{R})$, the space of square-integrable complex-valued functions on the real line). Another example is a proposed photonic realization of classical logic based on Kerr  nonlinear optical cavities in \cite{Mab11b}, which is built around a quantum harmonic oscillator with a Kerr nonlinear medium inside it. 
If a mathematical model for such quantum devices is sufficiently simple, it is often possible to simulate the dynamics of the system on a digital computer to assess the predicted performance of the actual device, as carried out in \cite{Mab11b}. The simulation carried out is typically that of a stochastic master equation that simulates the stochastic dynamics of a quantum system when one of its output is observed via laboratory procedures such as homodyne detection or photon counting, see \cite{GZ04,BvHJ07,WM10}. 
However, since it is not possible to faithfully simulate a quantum system with an infinite-dimensional Hilbert space, often in simulations this space is truncated to some finite-dimensional subspace. 
Two approaches that are often employed to approximate a quantum system, defined on an infinite-dimensional space, are subspace truncation approximation and adiabatic elimination (also known as singular perturbation).  Subspace truncation approximation is applied to eliminate higher dimensions of the original infinite-dimensional Hilbert space. An operator $X$ on the infinite-dimensional space is approximated by a truncated operator of the form $PXP$, where $P$ denotes an orthogonal projection projector onto the approximate finite-dimensional subspace. For instance, with quantum harmonic oscillators, a commonly used finite-dimensional space is the span of a finite number of Fock states $|0 \rangle,|1 \rangle, \ldots, |n \rangle$.
On the other hand, adiabatic elimination is often used to simplify quantum systems comprising components that evolve at multiple well-separated time-scales. In this approach, the faster variables are eliminated from the mathematical model description of the systems. 

Despite the ubiquity of approximating infinite-dimensional Hilbert spaces of quantum systems by finite-dimensional subspaces for simulations of input-output quantum systems, to the best of the authors' knowledge, there does not appear to be any work that has tried to obtain some explicit bounds on the approximation error of the joint state of the system and the quantum field it is coupled to. This work develops a framework for  developing bounds on the error between the quantum state of a quantum system  described by the QSDE and the quantum state of a finite-dimensional approximation described by another QSDE, when both systems are initialized in a state in the finite-dimensional subspace. Central to the framework is a contractive semigroup associated with the unitary QSDEs of input-output Markov quantum systems. Error bounds are developed ‎for both adiabatic elimination and subspace truncation approximations. For illustration,  our results are applied to some physical examples drawn from the literature. Prelimary results of this work were announced in the conference paper \cite{TN15b}. The results presented in the work go significantly beyond \cite{TN15b}. In particular, \cite{TN15b} only treates the subspace truncation approximation with some elements of the proofs omitted, error bounds for adiabatic elimination had not been developed, and computability of the error bounds were not considered.  

The rest of this paper is structured as follows. In Section \ref{sec:prelim}, we present the class of open quantum systems and the associated QSDEs describing Markovian open quantum systems. Explicit error bounds for the subspace truncation approximation of a Markovian open quantum system are established in Section \ref{sec:main-truncation} and some examples are provided. We then establish error bounds for adiabatic elimination approximation in Section \ref{sec:main-adiabatic} and some examples are also provided. Finally, concluding remarks close the paper in Section \ref{sec:conclusion}.

\section{Preliminaries} \label{sec:prelim}

\subsection{Notation}

We use $\imath=\sqrt{-1}$ and  let $(\cdot)^*$ denote the adjoint of a linear operator on a Hilbert space as well as the conjugate of a complex number, and $(\cdot)^\top$ denote matrix transposition. We denote by $\delta_{ij}$ the Kronecker delta function. We define $\Re\{A\} = \frac{1}{2} (A + A^*)$ and $\Im\{A\} = \frac{1}{2\imath} (A-A^*)$.
For a linear operator $A$, we write $\ker(A)$ to denote the kernel of $A$ and ${\rm ran}(A)$ the range of $A$. We often write $\ket{\cdot}$ to denote an element of a Hilbert space and denote by ${\cal H}\underline{\otimes}{\cal F}$ the algebraic tensor product of Hilbert spaces ${\cal H}$ and ${\cal F}$.
For a subspace ${\cal H}_0$ of a Hilbert space ${\cal H}$, we write $P_{{\cal H}_0}$ to denote the orthogonal projection operator onto ${\cal H}_0$.
For a Hilbert space ${\cal H} = {\cal H}_0 \oplus {\cal H}_1$, we will write ${\cal H} \ominus {\cal H}_0$ to denote ${\cal H}_1$.  For a linear operator $X$ on $\mathcal{H}$, $\left. X \right|_{\mathcal{H}_0}$ denotes the restriction of $X$ to $\mathcal{H}_0$. 
We use ${\cal B}({\cal H})$ to denote the algebra of all bounded linear operators on ${\cal H}$.
We write $[A,B] = AB - BA$. The notation $\|\cdot\|$ will be used to denote Hilbert space norms and operator norms, $\braket{\cdot,\cdot}$ denotes an inner product on a Hilbert space, linear in the right slot and antilinear in the left, and $\ket{\cdot}\bra{\cdot}$ denotes an outer product. Here, $\mathds{1}_{[0,t]}: [0,t] \rightarrow \{0,1\}$ denotes the indicator function. Finally, $\mathbb{Z}_+$ denotes the set of all positive integers. 

\subsection{Open quantum systems}

Consider a separable Hilbert space ${\cal H}_0$ and the symmetric boson Fock space (of multiplicity $m$) ${\cal F}$ defined over the space $L^2([0,T];\mathbb{C}^m) = \mathbb{C}^m \otimes L^2([0,T])$ with $0 < T < \infty$; see \cite[Ch. 4-5]{Mey95} for more details. 
We will use $e(f) \in {\cal F}$, with $f \in L^2([0,T];\mathbb{C}^m)$, to denote exponential vectors in ${\cal F}$.
Let $\mathfrak{S} \subset L^2([0,T];\mathbb{C}^m) \cap L^\infty_{\rm loc}([0,T];\mathbb{C}^m)$ be an admissible subspace in the sense of Hudson-Parthasarathy \cite{HP84} which contains at least all simple functions, where $L^\infty_{\rm loc}([0,T];\mathbb{C}^m)$ is the space of locally bounded vector-valued functions. Here, we will consider a dense domain ${\cal D}_0 \subset {\cal H}_0$ and a dense domain of exponential vectors ${\cal E} = {\rm span}\{e(f)\mid f \in \mathfrak{S}\} \subset {\cal F}$.

Consider an open Markov quantum system that can be described by a set of linear operators defined on the Hilbert space ${\cal H}_0$: (i) a self-adjoint Hamiltonian operator $H$, (ii) a vector of coupling operators $L$ with the $j$-th element, $L_j: \mathcal{H}_0 \rightarrow {\cal H}_0$ for all $j = 1,2,\ldots,m$, and (iii) a unitary scattering matrix $S$ with the $ij$-th element, $S_{ij}: {\cal H}_0 \rightarrow {\cal H}_0$ for all $i,j = 1,2,\ldots,m$. 
Moreover, the operators $S_{ij},L_j,H$ and their adjoints are assumed to have ${\cal D}_0$ as a common invariant dense domain.
Under this description, we note that $m$ corresponds to the number of external bosonic input fields driving the system.
Each bosonic input field can be described by annihilation and creation field operators, $b^i_t$ and ${b^i_t}^*$, respectively, which satisfy the commutation relations $[b^i_t,{b^j_s}^*] = \delta_{ij}\delta(t-s)$ for all $i,j = 1,2,\ldots,m$ and all $t,s \geq 0$.
We can then define the annihilation process ${\cal A}^i_t$, the creation process ${{\cal A}^i_t}^*$, and the gauge process $\Lambda^{ij}_t$ as
\begin{align*}
 {\cal A}^i_t = \int_0^t b^i_s ds, \qquad
 {{\cal A}^i_t}^* = \int_0^t {b^i_s}^* ds, \qquad
 \Lambda^{ij}_t = \int_0^t {b^i_s}^* b^j_s ds.
\end{align*}
Note that these processes are adapted quantum stochastic processes.
In the vacuum representation, the products of their forward differentials $d{\cal A}^i_t = {\cal A}^i_{t + dt} - {\cal A}^i_t$, 
$d{{\cal A}^i_t}^* = {{\cal A}^i_{t + dt}}^* - {{\cal A}^i_t}^*$, and $d\Lambda^{ij}_t = \Lambda^{ij}_{t + dt} - \Lambda^{ij}_t$ satisfy the quantum It\={o} table
\begin{align*}
\begin{tabular}{c|cccc}
 $\times$ & $d{\cal A}^k_t$ & $d{{\cal A}^k_t}^*$ & $d\Lambda^{k\ell}_t$ & dt \\ \hline
 $d{\cal A}^i_t$ & $0$ & $\delta_{ik} dt$ & $\delta_{ik} d{\cal A}^\ell_t$ & $0$ \\
 $d{{\cal A}^j_t}^*$ & $0$ & $0$ & $0$ & $0$ \\
 $d\Lambda^{ij}_t$ & $0$ & $\delta_{jk} d{{\cal A}^i_t}^*$ & $\delta_{jk} d\Lambda^{i\ell}_t$ & $0$ \\
 $dt$ & $0$ & $0$ & $0$ & $0$
\end{tabular}.
\end{align*}
Here, $b^i_t = \frac{d{\cal A}^i_t}{dt}$ can be interpreted as a vacuum quantum white noise, while $\Lambda^{ii}_t$ can be interpreted as the quantum realization of a Poisson process with zero intensity \cite{HP84}.

Following \cite{BvHS08}, the time evolution of a Markov open quantum system is given by an adapted process $U_t$ satisfying the left Hudson-Parthasarathy QSDE \cite{HP84}:
\begin{align}
 dU_t 
 &= U_t\left\{\sum_{i,j = 1}^m \left( S_{ji}^* - \delta_{ij}) d\Lambda^{ij}_t\right) 
   + \sum_{i = 1}^m \left( L_i^* d{{\cal A}^i_t} \right) \right. \nonumber \\
  &\quad \left. - \sum_{i,j = 1}^m\hspace{-0.3em}\left(S_{ji}^* L_j {d{\cal A}^i_t}^* \right)  + \left[\imath H - \frac{1}{2}\sum_{i=1}^{m}(L_i^*  L_i)\right]dt \right\}
 \label{eq:QSDE}
\end{align}
with $U_0 = I$. The quantum stochastic integrals are defined relative to the domain ${\cal D}_0 \underline{\otimes} {\cal E}$.
With the left QSDE, the evolution of a state vector $\psi \in {\cal H}_0\otimes{\cal F}$ is given by $U_t^* \psi$.

In this paper, we are interested in the problem of approximating the system with operator parameters $(S,L,H)$ by an open quantum system with linear operator parameters $(S^{(k)},L^{(k)},H^{(k)})$ defined on a closed subspace ${\cal H}^{(k)} \subset {\cal H}_0$, where $S^{(k)}$ is unitary and $H^{(k)}$ is self-adjoint. Consider a dense domain ${\cal D}^{(k)} \subset {\cal H}^{(k)}$.
Again, the operators $(S^{(k)},L^{(k)},H^{(k)})$ and their adjoints are assumed to have ${\cal D}^{(k)}$ as a common invariant dense domain.
Similar to \eqref{eq:QSDE}, the time evolution of the approximating system is given by an adapted process $U^{(k)}_t$ satisfying the left Hudson-Parthasarathy QSDE \cite{HP84}:
\begin{align}
 dU^{(k)}_t 
 &= U^{(k)}_t\left\{\sum_{i,j = 1}^m \left( S_{ji}^{(k)*} - \delta_{ij}) d\Lambda^{ij}_t\right) 
 + \sum_{i = 1}^m \left( {L_i^{(k)*}} d{{\cal A}^i_t} \right)  \right. \nonumber \\
  &\quad \left. - \sum_{i,j = 1}^m \left({S^{(k)*}_{ji} L_j}  {d{\cal A}^i_t}^* \right)+ \left[\imath H^{(k)} - \frac{1}{2}\sum_{i=1}^{m}({L^{(k)}_i}^*  L^{(k)}_i)\right]dt \right\},
 \label{eq:QSDE-k}
\end{align}
with $U^{(k)}_0 = I$. Here, the quantum stochastic integrals in the above equation are defined relative to the domain ${\cal D}^{(k)} \underline{\otimes} {\cal E}$.
Similarly, the evolution of a state vector $\psi \in {\cal H}^{(k)}\otimes{\cal F}$ is given by ${U^{(k)}_t}^* \psi$.

\subsection{Associated semigroups}

Let $\theta_t: L^2([t,T];\mathbb{C}^m) \rightarrow L^2([0,T];\mathbb{C}^m)$ be the canonical shift $\theta_t f(s) = f(t+s)$.
We also let $\Theta_t: {\cal F}_{[t} \rightarrow {\cal F}$ denote the second quantization of $\theta_t$, where $\mathcal{F}_{[t}$ denotes the Fock space over $L^2([t,\infty);\mathbb{C}^m)$. Note that an adapted process $U_t$ on ${\cal H}_0\otimes{\cal F}$ is called a contraction (or unitary) cocycle if $U_t$ is a contraction (or unitary) for all $t \geq 0$, $t \mapsto U_t$ is strongly continuous, and $U_{s+t} = U_s(I \otimes \Theta_s^*U_t\Theta_s)$.

Let us now impose an  important condition on the open quantum systems under consideration, adopted from \cite{BvHS08}.

\begin{condition}[Contraction cocycle solutions] \label{cond:cocycle}
For all $t \geq 0$ and all $k \in \mathbb{Z}_+$,
\begin{enumerate}[label=$(\alph*)$] 
 \item \label{cond:cocycle-1} the QSDE \eqref{eq:QSDE} possesses a unique solution $U_t$ which extends to a unitary cocycle on ${\cal H}_0\otimes{\cal F}$,
 \item \label{cond:cocycle-2} the QSDE  \eqref{eq:QSDE-k} possesses a unique solution $U^{(k)}_t$ which extends to a contraction cocycle on ${\cal H}^{(k)}\otimes{\cal F}$.
\end{enumerate}
\end{condition}

Let us define an operator $T_t^{(\alpha\beta)}: {\cal H}_0 \rightarrow {\cal H}_0$ via the identity
\begin{align*}
 \langle u, T^{(\alpha\beta)}_t v\rangle 
 &= e^{-\frac{1}{2}(\|\alpha\|^2 + \|\beta\|^2)t} 
 \left< u \otimes e(\alpha \mathds{1}_{[0,t]}), U_t v \otimes e(\beta \mathds{1}_{[0,t]}) \right>
\end{align*}
for all $u, v \in {\cal H}_0$ and all $\alpha, \beta \in \mathbb{C}^m$.
From \cite[Lemma 1]{BvHS08}, under Condition \ref{cond:cocycle}\ref{cond:cocycle-1}, 
the operator $T_t^{(\alpha\beta)} \in {\cal B}({\cal H}_0)$ is a strongly continuous contraction semigroup on ${\cal H}_0$ and its generator ${\cal L}^{(\alpha\beta)}$ satisfies ${\rm Dom}({\cal L}^{(\alpha\beta)}) \supset {\cal D}_0$ such that
\begin{align}
{\cal L}^{(\alpha\beta)} u 
 &= \left[\sum_{i,j=1}^{m} \left(\alpha_i^* S_{ji}^* \beta_j \right) 
 - \sum_{i,j=1}^{m}\left(\alpha_i^* S_{ji}^* L_j \right)  
 + \sum_{j=1}^{m} \left(L_j ^* \beta_j\right) \right. \nonumber \\
 &\quad \left.  + \left(\imath H - \frac{1}{2}\sum_{i=1}^{m}L^*_i L_i\right) - \frac{\|\alpha\|^2 + \|\beta\|^2}{2}\right] u
 \label{eq:generator}
\end{align}
for all $u \in {\cal D}_0$. Here, we note that ${\rm Dom}({\cal L}^{(\alpha\beta)})$ is dense in ${\cal H}_0$.
We likewise define an operator $T_t^{(k;\alpha\beta)}: {\cal H}^{(k)} \rightarrow {\cal H}^{(k)}$ by replacing $U_t$ with $U_t^{(k)}$.

\begin{condition}[Core for generators] \label{cond:core}
For all $t \geq 0$ and all $k \in \mathbb{Z}_+$,
\begin{enumerate}[label=$(\alph*)$] 
 \item \label{cond:core-1} ${\cal D}_0$ is a core for ${\cal L}^{(\alpha\beta)}$,
 \item \label{cond:core-2} ${\cal D}^{(k)}$ is a core for ${\cal L}^{(k;\alpha\beta)}$.
\end{enumerate}
\end{condition}

Condition \ref{cond:core}\ref{cond:core-1} ensures that the definition \eqref{eq:generator} completely determines ${\cal L}^{(\alpha \beta)}$ for all $\alpha,\beta \in \mathbb{C}^m$, likewise \ref{cond:core}\ref{cond:core-2} completely determines ${\cal L}^{(k;\alpha \beta)}$.
In the sequel, we will make use of the above semigroups associated with open quantum systems in establishing our model approximation error bound. Several sufficient conditions are known to guarantee that a QSDE possesses a unique solution that extends  to a unitary cocycle when the Hilbert space is infinite-dimensional and the operator coefficients of the QSDE  are unbounded, see, e.g., \cite{Fagno90}, and a related discussion in \cite[Remark 4]{BvHS08}.  Throughout the paper, we will assume that Conditions \ref{cond:cocycle} and \ref{cond:core} are fulfilled.

\section{Error bounds for subspace truncation approximations} \label{sec:main-truncation}

In this section, we consider the problem where the infinite-dimensional space ${\cal H}_0$ is truncated to  a finite-dimensional subspace ${\cal H}^{(k)}$, and the original operators $X$ on ${\cal H}_0$ is approximated by truncated operators of the form $P_{{\cal H}^{(k)}}XP_{{\cal H}^{(k)}}$.
Here, the dimension of ${\cal H}^{(k)}$ increases with $k \in \mathbb{Z}_+$ and ${\cal D}^{(k)} = {\cal H}^{(k)}$.
Moreover, Condition \ref{cond:cocycle}\ref{cond:cocycle-2} (in fact, $U^{(k)}_t$ is unitary \cite{HP84}) and \ref{cond:core}\ref{cond:core-2} hold immediately.

\subsection{Assumptions and preliminary results}

\begin{assumption} \label{assump:domains}
For any $k \in \mathbb{Z}_+$ and any $\alpha,\beta \in \mathbb{C}^m$, ${\cal H}^{(k)} \subset {\rm Dom}({\cal L}^{(\alpha\beta)})$.
\end{assumption}

Let ${\cal M}^{(k)} = {\rm ran}\left(\left.\left({\cal L}^{(\alpha,\beta)}-{\cal L}^{(k;\alpha,\beta)}\right)\right|_{{\cal H}^{(k)}}\right)$.
Supposing that Assumption \ref{assump:domains} holds, we also assume the following.

\begin{assumption} \label{assump:approx}
For each $k \in \mathbb{Z}_+$ and each $\alpha,\beta \in \mathbb{C}^m$,
there exists ${\gamma^{(\alpha\beta)}_k}, q^{(k;\alpha\beta)}_{{\cal L}}, q^{(k;\alpha\beta)}_{a}, q^{(k;\alpha\beta)}_{e} >0 $, and a non-trivial subspace $\{0\} \subset {\cal K}^{(k)} \subseteq {\cal H}^{(k)}$  such that
\begin{enumerate}[label=$(\alph*)$] 
 \item \label{assump:subspaceK1} $\left\| \left.\left({\cal L}^{(\alpha\beta)} - {\cal L}^{(k;\alpha\beta)}\right)\right|_{{\cal K}^{(k)}}\right\| \leq q^{(k;\alpha\beta)}_{{\cal L}}$.
 \item \label{assump:subspaceK2} $\left.\left({\cal L}^{(\alpha\beta)} - {\cal L}^{(k;\alpha\beta)}\right)\right|_{{\cal H}^{(k)} \ominus {\cal K}^{(k)}} = 0$, \\ i.e., ${\cal H}^{(k)}\ominus{\cal K}^{(k)} \subseteq \bigcap_{\alpha,\beta \in \mathbb{C}^m} \ker\left(\left.\left({\cal L}^{(\alpha\beta)} - {\cal L}^{(k;\alpha\beta)}\right)\right|_{{\cal H}^{(k)}}\right)$
 \item \label{assump:dissipate-approx}
 For any  $u \in {\cal H}^{(k)}$, 
 \begin{align*}
  &\Re\{\langle P_{{\cal K}^{(k)}} {\cal L}^{(k;\alpha\beta)} u, P_{{\cal K}^{(k)}} u\rangle \} 
  = -g(k,\alpha,\beta) \left\| P_{{\cal K}^{(k)}} u \right\|^2 + h(k,\alpha,\beta,u)
 \end{align*}
 for some $g(k,\alpha,\beta) \geq {\gamma^{(\alpha\beta)}_k}$ and some $\left| h(k,\alpha,\beta,u)\right| \leq q^{(k;\alpha\beta)}_{a} \left\| P_{{\cal K}^{(k)}} u \right\| \|u\|$.
 \item \label{assump:dissipate-true} For any $u \in {\cal H}^{(k)}$ and any $t \geq 0$,  
 \begin{align*}
  &\Re\{\langle T^{(\alpha\beta)}_t {\cal L}^{(\alpha\beta)} P_{{\cal M}^{(k)}}u, T^{(\alpha,\beta)}_t P_{{\cal M}^{(k)}}u\rangle \}
  = -\hat{g}(k,\alpha,\beta) \left\| T^{(\alpha\beta)}_t P_{{\cal M}^{(k)}} u \right\|^2  + \hat{h}(t,k,\alpha,\beta,u)
 \end{align*}
 for some $\hat{g}(k,\alpha,\beta) \geq {\gamma^{(\alpha\beta)}_k}$ and some $\left|\hat{h}(t,k,\alpha,\beta,u)\right| \leq q^{(k;\alpha\beta)}_{e} \left\| T^{(\alpha\beta)}_t P_{{\cal M}^{(k)}} u \right\| \|u\|$.
\end{enumerate}
\end{assumption}

\begin{assumption} \label{assump:convergence}
There exists $r,s \in \mathbb{Z}_+$ such that, for all $\alpha,\beta \in \mathbb{C}^m$, we have that
\begin{align*}
 \lim_{k \rightarrow \infty} q^{(k;\alpha\beta)}_{\cal L} \left( \frac{q^{(k;\alpha\beta)}_{e}}{{\gamma^{(\alpha\beta)}_k}} \right)^{(1-2^{-r})} \left( \frac{q^{(k;\alpha\beta)}_{a}}{{\gamma^{(\alpha\beta)}_k}} \right)^{(1-2^{-s})} = 0
\end{align*}
Moreover, for any $i = 0,1,2,\ldots,r$, and $j = 0,1,2,\ldots,s$, we have that 
\begin{align*}
 &\lim_{k \rightarrow \infty} \frac{q^{(k;\alpha\beta)}_{\cal L}}{{\gamma^{(\alpha\beta)}_k}} \left( \frac{q^{(k;\alpha\beta)}_{e}}{{\gamma^{(\alpha\beta)}_k}} \right)^{(1-2^{-i})} \left( \frac{q^{(k;\alpha\beta)}_{a}}{{\gamma^{(\alpha\beta)}_k}} \right)^{(1-2^{-j})} = 0 \\
 &\lim_{k \rightarrow \infty}  e^{-\gamma_k^{(\alpha \beta)}t}q^{(k;\alpha\beta)}_{\cal L} \left( \frac{q^{(k;\alpha\beta)}_{e}q^{(k;\alpha\beta)}_{a}}{({\gamma^{(\alpha\beta)}_k})^2} \right)^{(1-2^{-\ell})} = 0,
\end{align*}
for any $\alpha,\beta \in \mathbb{C}^m$, any $t>0$, and for $\ell=0,1,2,\ldots,\min\{r,s\}-1$. 
\end{assumption}

Let us present some useful lemmas.

\begin{lemma} \label{lemma:dissipative-approx}
Suppose that Assumption \ref{assump:approx}\ref{assump:dissipate-approx} holds.
Then for any $k,r \in \mathbb{Z}_{+}$, any $\alpha,\beta \in \mathbb{C}^m$, any $u \in {\cal H}^{(k)}$, and any $t \geq 0$, it holds that
\begin{align}
 &\left\| P_{{\cal K}^{(k)}} T^{(k;\alpha\beta)}_t u \right\| 
 \leq \left[ \left(\frac{q^{(k;\alpha\beta)}_{a}}{{\gamma^{(\alpha\beta)}_k}}\right)^{\left(1 - 2^{-r}\right)} 
 + \sum_{j=0}^{r-1} c_j e^{-\left(2^{-j}\right){\gamma^{(\alpha\beta)}_k} t} \left(\frac{q^{(k;\alpha\beta)}_{a}}{{\gamma^{(\alpha\beta)}_k}}\right)^{\left(1 - 2^{-j}\right)}\right] \|u\|
 \label{eq:dissipative-approx}
\end{align}
where $c_0 = 1$ and $c_j = \sqrt{c_{j-1} 2^j\left( 2^j -1 \right)^{-1}}$ for $j \geq 1$.
\end{lemma}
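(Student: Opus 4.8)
The plan is to reduce the operator estimate to a scalar integral inequality for the nonnegative function $\phi(t) := \| P_{{\cal K}^{(k)}} T^{(k;\alpha\beta)}_t u\|$ and then bootstrap it by induction on $r$. Since in this section ${\cal H}^{(k)}$ is finite-dimensional and ${\cal D}^{(k)} = {\cal H}^{(k)}$, the generator ${\cal L}^{(k;\alpha\beta)}$ is bounded and $t \mapsto v(t) := T^{(k;\alpha\beta)}_t u$ is smooth with $\dot v(t) = {\cal L}^{(k;\alpha\beta)} v(t)$, so I may freely differentiate. First I would compute $\frac{d}{dt}\phi(t)^2 = 2\,\Re\langle P_{{\cal K}^{(k)}} {\cal L}^{(k;\alpha\beta)} v(t),\, P_{{\cal K}^{(k)}} v(t)\rangle$ and insert Assumption \ref{assump:approx}\ref{assump:dissipate-approx} applied at $v(t)$ (where $g(k,\alpha,\beta)$ is the same constant along the whole trajectory), giving $\frac{d}{dt}\phi^2 = -2g\,\phi^2 + 2h$ with $g \ge {\gamma^{(\alpha\beta)}_k}$ and $|h| \le q^{(k;\alpha\beta)}_a\,\phi\,\|v(t)\|$. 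Using the integrating factor $e^{2gt}$ (variation of constants), bounding $|h|$, replacing $g$ by ${\gamma^{(\alpha\beta)}_k}$ where this only enlarges the bound, and invoking contractivity $\|v(t)\| \le \|u\|$, I obtain
\[ \phi(t)^2 \le e^{-2{\gamma^{(\alpha\beta)}_k} t}\|u\|^2 + 2 q^{(k;\alpha\beta)}_a \|u\| \int_0^t e^{-2{\gamma^{(\alpha\beta)}_k}(t-s)} \phi(s)\, ds. \]

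I would then prove \eqref{eq:dissipative-approx} by induction on $r$. The base case $r=0$ is exactly the contractivity estimate $\phi(t) \le \|u\|$, which equals the right-hand side of \eqref{eq:dissipative-approx} since the sum is empty and $(q^{(k;\alpha\beta)}_a/{\gamma^{(\alpha\beta)}_k})^{1-2^{-0}} = 1$. For the inductive step I substitute the level-$r$ bound for $\phi(s)$ into the integral inequality and evaluate the elementary integrals: the $t$-independent term contributes $\int_0^t e^{-2{\gamma^{(\alpha\beta)}_k}(t-s)}ds \le 1/(2{\gamma^{(\alpha\beta)}_k})$, while each exponential term $c_j e^{-2^{-j}{\gamma^{(\alpha\beta)}_k} s}(\cdots)^{1-2^{-j}}$ contributes $\int_0^t e^{-2{\gamma^{(\alpha\beta)}_k}(t-s)}e^{-2^{-j}{\gamma^{(\alpha\beta)}_k} s}ds \le e^{-2^{-j}{\gamma^{(\alpha\beta)}_k} t}/\big((2-2^{-j}){\gamma^{(\alpha\beta)}_k}\big)$, after discarding the negative $-e^{-2{\gamma^{(\alpha\beta)}_k} t}$ remainder. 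Writing $a = q^{(k;\alpha\beta)}_a/{\gamma^{(\alpha\beta)}_k}$, this yields a bound for $\phi(t)^2$ as a sum of nonnegative terms: the surviving initial term $e^{-2{\gamma^{(\alpha\beta)}_k} t}\|u\|^2$, a new leading term $a^{2-2^{-r}}\|u\|^2$, and terms $\tfrac{2c_j}{2-2^{-j}}\, a^{2-2^{-j}} e^{-2^{-j}{\gamma^{(\alpha\beta)}_k} t}\|u\|^2$.

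Finally I would take square roots and apply subadditivity $\sqrt{\sum_i x_i}\le \sum_i\sqrt{x_i}$ to these nonnegative terms. The initial term yields $e^{-{\gamma^{(\alpha\beta)}_k} t}\|u\| = c_0\, e^{-2^{-0}{\gamma^{(\alpha\beta)}_k} t}\, a^{1-2^{-0}}\|u\|$ (this is why $c_0=1$, arising from the initial condition independently of the hypothesis); the leading term yields $a^{1-2^{-(r+1)}}\|u\|$, the halving of the exponent being precisely what turns $2-2^{-r}$ into $1-2^{-(r+1)}$; and reindexing $j \mapsto j+1$ turns the remaining terms into $\sqrt{2c_j/(2-2^{-j})}\, a^{1-2^{-(j+1)}} e^{-2^{-(j+1)}{\gamma^{(\alpha\beta)}_k} t}\|u\|$. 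Simplifying $2/(2-2^{-j}) = 2^{j+1}/(2^{j+1}-1)$ identifies the new coefficient as exactly $c_{j+1} = \sqrt{c_j\, 2^{j+1}(2^{j+1}-1)^{-1}}$, which is the stated recursion. Assembling these reproduces \eqref{eq:dissipative-approx} at level $r+1$, completing the induction. The routine part is the evaluation of the integrals; the only delicate point—and the main obstacle—is the bookkeeping in the square-root-and-reindex step, where one must check that the coefficients produced by subadditivity collapse exactly onto the recursion for $c_j$ and that the exponent halving aligns $e^{-2^{-j}{\gamma^{(\alpha\beta)}_k} t}$ with $e^{-2^{-(j+1)}{\gamma^{(\alpha\beta)}_k} t}$ correctly.
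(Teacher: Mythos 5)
Your proposal is correct and follows essentially the same route as the paper: both derive the differential/integral inequality for $\|P_{{\cal K}^{(k)}}T^{(k;\alpha\beta)}_t u\|^2$ from Assumption \ref{assump:approx}\ref{assump:dissipate-approx}, obtain the crude bound from contractivity, and bootstrap by resubstitution, which the paper phrases as ``repeat application of the above steps'' and you organize as an explicit induction on $r$. Your version is slightly more complete in that it verifies the coefficient recursion $c_{j+1}=\sqrt{c_j\,2^{j+1}(2^{j+1}-1)^{-1}}$ and the exponent-halving bookkeeping that the paper leaves implicit.
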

\begin{proof}
First note, from the definition of a strongly continuous semigroup, that $T_0^{(\alpha\beta)} = I$ and $\frac{d}{dt} T_t^{(\alpha\beta)} u = {\cal L}^{(\alpha\beta)} u$ \cite{CZ95}.
From Assumption \ref{assump:approx}\ref{assump:dissipate-approx}, we have that
\begin{align*}
 \frac{d}{dt}\left\| P_{{\cal K}^{(k)}} T^{(k;\alpha\beta)}_t u \right\|^2 
 &= \frac{d}{dt} \left< P_{{\cal K}^{(k)}} T^{(k;\alpha\beta)}_t u, P_{{\cal K}^{(k)}} T^{(k;\alpha\beta)}_t u \right> \nonumber \\
 &= 2\Re\left\{ \left< P_{{\cal K}^{(k)}} {\cal L}^{(k;\alpha\beta)} T^{(k;\alpha\beta)}_t u,  P_{{\cal K}^{(k)}} T^{(k;\alpha\beta)}_t u\right> \right\} \nonumber \\
 &= \hspace{-0.3em} -2g(k,\alpha,\beta) \left\| P_{{\cal K}^{(k)}} T^{(k;\alpha\beta)}_t u \right\|^2 \hspace{-0.3em} + 2h(k,\alpha,\beta, T^{(k;\alpha\beta)}_t u).
\end{align*}
Solving the above ODE gives us that
\begin{align}
 \left\| P_{{\cal K}^{(k)}} T^{(k;\alpha\beta)}_t u \right\|^2 
 &= e^{-2g(k,\alpha,\beta)t} \left\| P_{{\cal K}^{(k)}} u \right\|^2 + 2\int_0^t h(k,\alpha,\beta, T^{(k;\alpha,\beta)}_\tau u) e^{-2g(k,\alpha,\beta)(t-\tau)} d\tau \nonumber \\
 &\leq e^{-2{\gamma^{(\alpha\beta)}_k} t} \left\| P_{{\cal K}^{(k)}} u \right\|^2  + 2q^{(k;\alpha\beta)}_{a} \int_0^t e^{-2{\gamma^{(\alpha\beta)}_k}(t-\tau)} \left\| P_{{\cal K}^{(k)}} T^{(k;\alpha\beta)}_\tau u \right\| \|u\| d\tau \label{eq:dissipate-proof-1} \\
 &\leq \left[e^{-2{\gamma^{(\alpha\beta)}_k} t} + \frac{q^{(k;\alpha\beta)}_{a}}{{\gamma^{(\alpha\beta)}_k}} \right] \|u\|^2. \nonumber
\end{align}
Here, the second step follows from Assumption \ref{assump:approx}\ref{assump:dissipate-approx}. 
The last step follows because $\left\| P_{{\cal K}^{(k)}} u \right\| \leq \|u\|$ and $T^{(k;\alpha\beta)}_t$ is a contraction semigroup.
Noticing that $\sqrt{|a|^2 + |b|^2} \leq |a|+|b|$ for any $a,b \in \mathbb{R}$, we have that
\begin{align}
 \left\| P_{{\cal K}^{(k)}} T^{(k;\alpha\beta)}_t u \right\| 
 &\leq \left[e^{-{\gamma^{(\alpha\beta)}_k} t} + \left(\frac{q^{(k;\alpha\beta)}_{a}}{{\gamma^{(\alpha\beta)}_k}}\right)^{\frac{1}{2}} \right] \|u\|.
 \label{eq:dissipate-proof-2}
\end{align}
Now, substituting \eqref{eq:dissipate-proof-2} into the right-handed side of \eqref{eq:dissipate-proof-1}, we have that
\begin{align*}
 &\left\| P_{{\cal K}^{(k)}} T^{(k;\alpha\beta)}_t u \right\|^2 \nonumber \\
 &\leq e^{-2{\gamma^{(\alpha\beta)}_k} t} \|u\|^2 + 2q^{(k;\alpha\beta)}_{a} \int_0^t e^{-{\gamma^{(\alpha\beta)}_k}(2t - \tau)}\|u\|^2 d\tau \nonumber \\
 &\quad + 2q^{(k;\alpha\beta)}_{a} \int_0^t e^{-2{\gamma^{(\alpha\beta)}_k}(t-\tau)} \left(\frac{q^{(k;\alpha\beta)}_{a}}{{\gamma^{(\alpha\beta)}_k}}\right)^{\frac{1}{2}}  \|u\|^2 d\tau \nonumber \\
 &= \left[e^{-2{\gamma^{(\alpha\beta)}_k} t} + 2e^{-{\gamma^{(\alpha\beta)}_k} t}\left(1 - e^{-{\gamma^{(\alpha\beta)}_k} t}\right)\left(\frac{q^{(k;\alpha\beta)}_{a}}{{\gamma^{(\alpha\beta)}_k}}\right) + \left(1 - e^{-2{\gamma^{(\alpha\beta)}_k} t}\right)\left(\frac{q^{(k;\alpha\beta)}_{a}}{{\gamma^{(\alpha\beta)}_k}}\right)^{\frac{3}{2}} \right] \|u\|^2 \nonumber \\
 &\leq \left[\hspace{-0.2em} e^{-2{\gamma^{(\alpha\beta)}_k} t} \hspace{-0.2em} + \hspace{-0.2em} 2e^{-{\gamma^{(\alpha\beta)}_k} t}\left(\frac{q^{(k;\alpha\beta)}_{a}}{{\gamma^{(\alpha\beta)}_k}}\right) \hspace{-0.2em} + \hspace{-0.2em} \left(\frac{q^{(k;\alpha\beta)}_{a}}{{\gamma^{(\alpha\beta)}_k}}\right)^{\frac{3}{2}} \hspace{-0.2em} \right] \hspace{-0.3em} \|u\|^2.
\end{align*}
Taking the square root on both sides of the equation, we get
\begin{align*}
 \left\| P_{{\cal K}^{(k)}} T^{(k;\alpha\beta)}_t u \right\| 
 &\leq \left[ e^{-{\gamma^{(\alpha\beta)}_k} t} + \sqrt{2}e^{-\frac{1}{2}{\gamma^{(\alpha\beta)}_k} t}\left(\frac{q^{(k;\alpha\beta)}_{a}}{{\gamma^{(\alpha\beta)}_k}}\right)^{\hspace{-0.2em}\frac{1}{2}} + \left(\frac{q^{(k;\alpha\beta)}_{a}}{{\gamma^{(\alpha\beta)}_k}}\right)^{\frac{3}{4}} \right]  \|u\|.
\end{align*}
From repeat application of the above steps, we establish the lemma statement.
\end{proof}


\begin{lemma} \label{lemma:dissipative-true}
Suppose that Assumption \ref{assump:approx}\ref{assump:dissipate-true} holds.
Then for any $k,r \in \mathbb{Z}_+$, any $\alpha,\beta \in \mathbb{C}^m$, any $u \in {\cal H}^{(k)}$, and any $t \geq 0$, it holds that
\begin{align}
 \left\| T^{(\alpha\beta)}_t P_{{\cal M}^{(k)}} u \right\| 
 \leq \left[ \left(\frac{q^{(k;\alpha\beta)}_{e}}{{\gamma^{(\alpha\beta)}_k}}\right)^{\left(1 - 2^{-r}\right)} + \sum_{j=0}^{r-1} c_j e^{-\left(2^{-j}\right){\gamma^{(\alpha\beta)}_k} t} \left(\frac{q^{(k;\alpha\beta)}_{e}}{{\gamma^{(\alpha\beta)}_k}}\right)^{\left(1 - 2^{-j}\right)}\right] \|u\|
 \label{eq:dissipative-true}
\end{align}
where $c_0 = 1$ and $c_j = \sqrt{c_{j-1} 2^j\left( 2^j -1 \right)^{-1}}$ for $j \geq 1$.
\end{lemma}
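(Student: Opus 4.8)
The plan is to mirror the proof of Lemma \ref{lemma:dissipative-approx} almost verbatim, with the projected-then-evolved vector $T_t^{(\alpha\beta)} P_{{\cal M}^{(k)}} u$ now playing the role that $P_{{\cal K}^{(k)}} T_t^{(k;\alpha\beta)} u$ played there, and with $q_a^{(k;\alpha\beta)}$ and $g$ replaced by $q_e^{(k;\alpha\beta)}$ and $\hat{g}$. Writing $w = P_{{\cal M}^{(k)}} u$, I would first differentiate the squared norm along the true-system semigroup. Since $t \mapsto T_t^{(\alpha\beta)}$ is a strongly continuous contraction semigroup with generator ${\cal L}^{(\alpha\beta)}$, one has $\frac{d}{dt} T_t^{(\alpha\beta)} w = {\cal L}^{(\alpha\beta)} T_t^{(\alpha\beta)} w = T_t^{(\alpha\beta)} {\cal L}^{(\alpha\beta)} w$, so that
\begin{align*}
\frac{d}{dt}\left\| T_t^{(\alpha\beta)} w\right\|^2 = 2\Re\left\{\left\langle T_t^{(\alpha\beta)} {\cal L}^{(\alpha\beta)} w,\, T_t^{(\alpha\beta)} w\right\rangle\right\},
\end{align*}
which is precisely the quantity controlled by Assumption \ref{assump:approx}\ref{assump:dissipate-true}. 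Substituting that assumption turns this into the scalar ODE $\frac{d}{dt}\|T_t^{(\alpha\beta)} w\|^2 = -2\hat{g}(k,\alpha,\beta)\|T_t^{(\alpha\beta)} w\|^2 + 2\hat{h}(t,k,\alpha,\beta,u)$.

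Next I would integrate this linear ODE by variation of constants to obtain
\begin{align*}
\left\|T_t^{(\alpha\beta)} w\right\|^2 = e^{-2\hat{g}t}\|w\|^2 + 2\int_0^t \hat{h}(\tau,k,\alpha,\beta,u)\, e^{-2\hat{g}(t-\tau)}\, d\tau,
\end{align*}
and bound the right-hand side using $\hat{g} \geq \gamma_k^{(\alpha\beta)} > 0$, the estimate $|\hat{h}| \leq q_e^{(k;\alpha\beta)}\|T_\tau^{(\alpha\beta)} w\|\,\|u\|$ from the assumption, the contraction bound $\|w\| = \|P_{{\cal M}^{(k)}} u\| \leq \|u\|$, and the contractivity $\|T_\tau^{(\alpha\beta)} w\| \leq \|w\| \leq \|u\|$ of the semigroup. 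This yields the crude estimate $\|T_t^{(\alpha\beta)} w\|^2 \leq [e^{-2\gamma_k^{(\alpha\beta)} t} + q_e^{(k;\alpha\beta)}/\gamma_k^{(\alpha\beta)}]\|u\|^2$, which after taking square roots via $\sqrt{|a|^2+|b|^2}\le|a|+|b|$ gives the $r=1$ instance of the claimed inequality. Feeding this sharper bound on $\|T_\tau^{(\alpha\beta)} w\|$ back into the integral, exactly as in Lemma \ref{lemma:dissipative-approx}, improves the exponent of $q_e^{(k;\alpha\beta)}/\gamma_k^{(\alpha\beta)}$ from $1-2^{-1}$ toward $1-2^{-r}$ and reproduces the recursion $c_0=1$, $c_j = \sqrt{c_{j-1}2^j(2^j-1)^{-1}}$; iterating $r$ times delivers the stated estimate. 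Since this bootstrap is structurally identical to the one already carried out, I would simply invoke ``by repeated application of the above steps.''

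The one point that genuinely differs from Lemma \ref{lemma:dissipative-approx}, and which I expect to be the main thing to verify, is that here $T_t^{(\alpha\beta)}$ is the semigroup attached to the \emph{original}, generally infinite-dimensional system, whose generator ${\cal L}^{(\alpha\beta)}$ is typically unbounded. Unlike the previous lemma, where $T_t^{(k;\alpha\beta)}$ acts on a space on which the generator is bounded, I must ensure the differentiation $\frac{d}{dt}T_t^{(\alpha\beta)} w = {\cal L}^{(\alpha\beta)} T_t^{(\alpha\beta)} w$ is legitimate and that ${\cal L}^{(\alpha\beta)}$ commutes with $T_t^{(\alpha\beta)}$ on the relevant orbit. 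This is fine provided $w = P_{{\cal M}^{(k)}} u \in {\rm Dom}({\cal L}^{(\alpha\beta)})$, which is exactly what is already implicitly demanded for Assumption \ref{assump:approx}\ref{assump:dissipate-true} to be meaningful, since that assumption itself involves ${\cal L}^{(\alpha\beta)} P_{{\cal M}^{(k)}} u$; the semigroup then keeps the orbit in the domain, and standard semigroup theory supplies the commutation and the differentiability used above.
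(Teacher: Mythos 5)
Your proposal is correct and follows essentially the same route as the paper: differentiate $\|T_t^{(\alpha\beta)}P_{\mathcal{M}^{(k)}}u\|^2$, invoke Assumption \ref{assump:approx}\ref{assump:dissipate-true} to obtain the linear scalar ODE, solve by variation of constants, bound crudely using contractivity and $\hat{g}\geq\gamma_k^{(\alpha\beta)}$, and bootstrap exactly as in Lemma \ref{lemma:dissipative-approx}. Your added remark that $P_{\mathcal{M}^{(k)}}u$ must lie in $\mathrm{Dom}(\mathcal{L}^{(\alpha\beta)})$ for the differentiation to be legitimate is a point the paper leaves implicit, and your resolution of it (it is already required for the assumption to be meaningful) is sound.
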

\begin{proof}
Similar to Lemma \ref{lemma:dissipative-approx}, using Assumption \ref{assump:approx}\ref{assump:dissipate-true}, we have that
\begin{align*}
 \frac{d}{dt}\left\| T^{(\alpha\beta)}_t P_{{\cal M}^{(k)}} u \right\|^2 
 &= 2\Re\{ \braket{T^{(\alpha\beta)}_t {\cal L}^{(\alpha\beta)} P_{{\cal M}^{(k)}} u, T^{(\alpha\beta)}_t P_{{\cal M}^{(k)}} u} \} \nonumber \\
 &= -2\hat{g}(k,\alpha,\beta) \left\| T^{(\alpha\beta)}_t P_{{\cal M}^{(k)}} u \right\|^2 + 2\hat{h}(t,k,\alpha,\beta,u).
\end{align*}
Solving the ODE, we have that
\begin{align*}
 \left\| T^{(\alpha\beta)}_t P_{{\cal M}^{(k)}} u \right\|^2 
 &\leq e^{-2{\gamma^{(\alpha\beta)}_k} t} \left\| P_{{\cal M}^{(k)}} u \right\|^2 + 2q^{(k;\alpha\beta)}_{e} \int_0^t e^{-2{\gamma^{(\alpha\beta)}_k}(t-\tau)} \left\| T^{(\alpha\beta)}_t P_{{\cal M}^{(k)}}u \right\| \|u\| d\tau \nonumber \\
 &\leq \left[e^{-2{\gamma^{(\alpha\beta)}_k} t} + \frac{q^{(k;\alpha\beta)}_{e}}{{\gamma^{(\alpha\beta)}_k}} \right] \|u\|^2.
\end{align*}
The lemma statement is then established by following similar arguments to Lemma \ref{lemma:dissipative-approx}.
\end{proof}


\subsection{Error bounds for finite-dimensional approximations}

We begin by defining
\begin{align}
 &z^{k}_{r,s}(t,\alpha,\beta) \nonumber \\
 &:= q^{(k;\alpha\beta)}_{\cal L} \left[ 
 t\left(\frac{q^{(k;\alpha\beta)}_{e}}{{\gamma^{(\alpha\beta)}_k}}\right)^{(1-2^{-r})}  \left(\frac{q^{(k;\alpha\beta)}_{a}}{{\gamma^{(\alpha\beta)}_k}}\right)^{(1-2^{-s})} \right. \nonumber \\
 &\quad+ \sum_{i=0}^{r-1}  \frac{2^i c_i}{{\gamma^{(\alpha\beta)}_k}} \left(1 - e^{-2^{-i}{\gamma^{(\alpha\beta)}_k} t}\right) \hspace{-0.2em} \left(\frac{q^{(k;\alpha\beta)}_{e}}{{\gamma^{(\alpha\beta)}_k}}\right)^{(1-2^{-i})} \hspace{-0.5em} \left(\frac{q^{(k;\alpha\beta)}_{a}}{{\gamma^{(\alpha\beta)}_k}}\right)^{(1-2^{-s})} \nonumber \\
 &\quad+ \sum_{i=1}^{s-1} \frac{2^i c_i}{{\gamma^{(\alpha\beta)}_k}} \left(1 - e^{-2^{-i}{\gamma^{(\alpha\beta)}_k} t}\right) \hspace{-0.2em} \left(\frac{q^{(k;\alpha\beta)}_{a}}{{\gamma^{(\alpha\beta)}_k}}\right)^{(1-2^{-i})} \hspace{-0.5em} \left(\frac{q^{(k;\alpha\beta)}_{e}}{{\gamma^{(\alpha\beta)}_k}}\right)^{(1-2^{-r})} \nonumber \\
 &\quad+ \sum_{i=0}^{r-1} \sum_{\stackrel{j=0}{j \neq i}}^{s-1} c_i c_j \frac{2^{(i+j)}}{\left(2^i - 2^j\right){\gamma^{(\alpha\beta)}_k}} \left( e^{-2^{-i}{\gamma^{(\alpha\beta)}_k} t } - e^{-2^{-j}{\gamma^{(\alpha\beta)}_k} t } \right) \hspace{-0.2em}
  \left(\frac{q^{(k;\alpha\beta)}_{e}}{{\gamma^{(\alpha\beta)}_k}}\right)^{\hspace{-0.5em}(1-2^{-i})} \left(\frac{q^{(k;\alpha\beta)}_{a}}{{\gamma^{(\alpha\beta)}_k}}\right)^{\hspace{-0.5em}(1-2^{-j})} \nonumber \\
 &\quad+ \left. t \sum_{i=0}^{\min\{r,s\}-1} c_i^2 e^{-2^{-i} {\gamma^{(\alpha\beta)}_k} t} \left(\frac{q^{(k;\alpha\beta)}_{e}q^{(k;\alpha\beta)}_{a}}{({\gamma^{(\alpha\beta)}_k})^2}\right)^{(1-2^{-i})} \right],
  \label{eq:bound}
\end{align}
where $c_0 = 1$, and $c_j = \sqrt{c_{j-1}  2^j(2^j-1)^{-1}}$ for $j  \geq 1$.
We now establish an error bound between the two semigroups associated with the open quantum systems.

\begin{lemma} \label{lem:semigroup-error}
Suppose Assumptions \ref{assump:domains} and \ref{assump:approx} hold.
Then for any $k,r,s \in \mathbb{Z}_+$, any $\alpha,\beta \in \mathbb{C}^m$, any $u \in {\cal H}^{(k)}$, and any $t \geq 0$,  it holds that
\begin{align}
 \left\|\left(T^{(\alpha\beta)}_{t} - T^{(k;\alpha\beta)}_{t}\right)  u \right\| 
 &\leq z^{k}_{r,s}(t,\alpha,\beta)  \|u\|.
 \label{eq:semigroup-error}
\end{align}
\end{lemma}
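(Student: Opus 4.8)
The plan is to compare the two semigroups via the variation-of-constants (Duhamel) identity. Writing $\Delta := \left.\left({\cal L}^{(\alpha\beta)}-{\cal L}^{(k;\alpha\beta)}\right)\right|_{{\cal H}^{(k)}}$, I would first establish, for $u \in {\cal H}^{(k)}$,
\[
\left(T^{(\alpha\beta)}_t - T^{(k;\alpha\beta)}_t\right)u = \int_0^t T^{(\alpha\beta)}_{t-s}\,\Delta\, T^{(k;\alpha\beta)}_s u \, ds .
\]
This follows by differentiating $s \mapsto T^{(\alpha\beta)}_{t-s} T^{(k;\alpha\beta)}_s u$ and invoking the fundamental theorem of calculus, using $\frac{d}{ds}T^{(\alpha\beta)}_{t-s}v = -{\cal L}^{(\alpha\beta)} T^{(\alpha\beta)}_{t-s}v$ and $\frac{d}{ds}T^{(k;\alpha\beta)}_s u = {\cal L}^{(k;\alpha\beta)} T^{(k;\alpha\beta)}_s u$. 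The commutation ${\cal L}^{(\alpha\beta)} T^{(\alpha\beta)}_{t-s} = T^{(\alpha\beta)}_{t-s}{\cal L}^{(\alpha\beta)}$ and the requirement that $T^{(k;\alpha\beta)}_s u$ lie in ${\rm Dom}({\cal L}^{(\alpha\beta)})$ are what make Assumption \ref{assump:domains} (which gives ${\cal H}^{(k)} \subset {\rm Dom}({\cal L}^{(\alpha\beta)})$, with ${\cal H}^{(k)}$ invariant under $T^{(k;\alpha\beta)}_s$) indispensable at this step.

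Having reduced the estimate to the integrand, I would bound $\left\|T^{(\alpha\beta)}_{t-s}\Delta T^{(k;\alpha\beta)}_s u\right\|$ as a product of three factors. By Assumption \ref{assump:approx}\ref{assump:subspaceK2}, $\Delta$ annihilates ${\cal H}^{(k)}\ominus{\cal K}^{(k)}$, so $\Delta T^{(k;\alpha\beta)}_s u = \Delta P_{{\cal K}^{(k)}} T^{(k;\alpha\beta)}_s u$; Assumption \ref{assump:approx}\ref{assump:subspaceK1} then yields $\left\|\Delta T^{(k;\alpha\beta)}_s u\right\| \le q^{(k;\alpha\beta)}_{{\cal L}}\left\|P_{{\cal K}^{(k)}} T^{(k;\alpha\beta)}_s u\right\|$, and Lemma \ref{lemma:dissipative-approx} (with truncation index $s$) bounds the latter by $B_s(s)\|u\|$, where $B_s$ denotes the bracketed expression in \eqref{eq:dissipative-approx}. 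Because $\Delta T^{(k;\alpha\beta)}_s u$ lies in ${\cal M}^{(k)} = {\rm ran}(\Delta)$ and is therefore fixed by $P_{{\cal M}^{(k)}}$, Lemma \ref{lemma:dissipative-true} (with index $r$) supplies the decaying bound $B_r(t-s)$ for the action of $T^{(\alpha\beta)}_{t-s}$ on ${\cal M}^{(k)}$. Collecting these, the integrand is at most $q^{(k;\alpha\beta)}_{{\cal L}}\, B_r(t-s)\,B_s(s)\,\|u\|$.

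It then remains to integrate $B_r(t-s)B_s(s)$ over $[0,t]$, and I expect each of the five groups of terms in $z^{k}_{r,s}(t,\alpha,\beta)$ to arise from a distinct pairing in the expansion of this product. The two constant tails multiply to give the leading $t$-linear term; a constant tail against a decaying exponential yields the single sums through $\int_0^t e^{-2^{-i}\gamma^{(\alpha\beta)}_k(t-s)}\,ds = 2^i(\gamma^{(\alpha\beta)}_k)^{-1}\bigl(1-e^{-2^{-i}\gamma^{(\alpha\beta)}_k t}\bigr)$; two decaying exponentials of distinct rates produce the off-diagonal double sum through $\int_0^t e^{-2^{-i}\gamma^{(\alpha\beta)}_k(t-s)}e^{-2^{-j}\gamma^{(\alpha\beta)}_k s}\,ds$, whose value matches the coefficient $\tfrac{2^{(i+j)}}{(2^i-2^j)\gamma^{(\alpha\beta)}_k}$ after simplification; and the diagonal contributions $i=j$ give the final $t$-weighted sum up to $\min\{r,s\}-1$. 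This bookkeeping is elementary but lengthy, and tracking the coefficients $c_i$ carefully is the only place where care is needed.

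I expect the genuinely delicate point to be the rigorous application of Lemma \ref{lemma:dissipative-true} to $\Delta T^{(k;\alpha\beta)}_s u$: since ${\cal M}^{(k)}$ is \emph{not} contained in ${\cal H}^{(k)}$ (for truncation, $\Delta$ pushes boundary vectors such as $\ket{k}$ into $\ket{k+1}$), one must verify that the dissipativity estimate of Assumption \ref{assump:approx}\ref{assump:dissipate-true}, and hence the conclusion of Lemma \ref{lemma:dissipative-true}, transfers to arbitrary elements of ${\cal M}^{(k)}$ rather than only to initial data $P_{{\cal M}^{(k)}}w$ with $w\in{\cal H}^{(k)}$. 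Settling this, together with the differentiability and domain verifications underpinning the Duhamel identity, is where I would concentrate the rigorous effort; the remainder is the mechanical integration described above.
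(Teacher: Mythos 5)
Your proposal follows essentially the same route as the paper's proof: the same variation-of-constants identity (the paper obtains it by casting \eqref{eq:cauchy-problem} as an inhomogeneous Cauchy problem and citing \cite[Thm 3.1.3]{CZ95} rather than differentiating $s\mapsto T^{(\alpha\beta)}_{t-s}T^{(k;\alpha\beta)}_s u$ directly), the same insertion of $P_{{\cal M}^{(k)}}$ and $P_{{\cal K}^{(k)}}$ into the integrand via Assumption \ref{assump:approx}\ref{assump:subspaceK2} and the definition of ${\cal M}^{(k)}$, the same application of Assumption \ref{assump:approx}\ref{assump:subspaceK1} and Lemmas \ref{lemma:dissipative-approx} and \ref{lemma:dissipative-true}, and the same term-by-term integration producing the five groups of terms in $z^{k}_{r,s}$. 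The delicate point you flag --- that Lemma \ref{lemma:dissipative-true} is stated for initial data $P_{{\cal M}^{(k)}}u$ with $u\in{\cal H}^{(k)}$ whereas the integrand requires the estimate for arbitrary elements of ${\cal M}^{(k)}={\rm ran}(\Delta)$, which need not lie in ${\cal H}^{(k)}$ --- is passed over silently in the paper, which applies Lemma \ref{lemma:dissipative-true} exactly as you describe; the examples verify Assumption \ref{assump:approx}\ref{assump:dissipate-true} for $u$ ranging over the larger space ${\cal H}_0$, so the intended reading makes your step legitimate.
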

\begin{proof}
First note, from the definition of a strongly continuous semigroup, that \cite{CZ95}
\begin{enumerate} \renewcommand{\theenumi}{\roman{enumi}}
 \item  $T_0^{(\alpha\beta)} = T_0^{(k;\alpha\beta)} = I$ for all $k \in \mathbb{Z}_+$,
 \item $\frac{d}{dt} T_t^{(\alpha\beta)} u = {\cal L}^{(\alpha\beta)} u$  for all $u \in {\rm Dom}({\cal L}^{(\alpha\beta)})$,
 \item $\frac{d}{dt} T_t^{(k;\alpha\beta)} u = {\cal L}^{(k;\alpha\beta)} u$ for all $u \in {\cal H}^{(k)}$ (since ${\cal H}^{(k)}$ is finite-dimensional).
\end{enumerate}
From the above properties and Assumption \ref{assump:domains}, we can write for all $u \in {\cal H}^{(k)}$ and all $t \geq 0$ that
\begin{align}
 \frac{d}{dt} \left(T^{(\alpha\beta)}_{t} - T^{(k;\alpha\beta)}_{t}\right) u 
 &= \left( {\cal L}^{(\alpha\beta)}T^{(\alpha\beta)}_{t} - {\cal L}^{(k;\alpha\beta)}T^{(k;\alpha\beta)}_{t}\right) u \nonumber \\
 &= {\cal L}^{(\alpha\beta)}\left(T^{(\alpha\beta)}_{t} - T^{(k;\alpha\beta)}_{t}\right) u + \left. \left( {\cal L}^{(\alpha\beta)} - {\cal L}^{(k;\alpha\beta)}\right) \right|_{\mathcal{H}^{(k)}} T^{(k;\alpha\beta)}_{t} u
 \label{eq:cauchy-problem}
\end{align}
with $\left(T^{(\alpha\beta)}_0 - T^{(k;\alpha\beta)}_0\right) u = 0$. Note that by Assumption \ref{assump:approx}, $\left. \left( {\cal L}^{(\alpha\beta)} - {\cal L}^{(k;\alpha\beta)}\right) \right|_{\mathcal{H}^{(k)}}$ is a bounded operator.
Since ${\cal L}^{(\alpha\beta)}$ is a generator of a semigroup on a Hilbert space, \\
$$
\left(T^{(\alpha\beta)}_0 - T^{(k;\alpha\beta)}_0\right) u \in {\rm Dom}({\cal L}^{(\alpha\beta)}),$$ 
(due to Assumption \ref{assump:domains}), 
and 
$$
\left. \left( {\cal L}^{(\alpha\beta)} - {\cal L}^{(k;\alpha\beta)}\right) \right|_{\mathcal{H}^{(k)}} T^{(k;\alpha\beta)}_{t} u \in {\cal C}^{1}([0,t];{\cal H}_0)$$ 
(the class of continuously differentiable functions from $[0,t]$ to ${\cal H}_0$), a unique solution of \eqref{eq:cauchy-problem} exists and is given by \cite[Thm 3.1.3]{CZ95}
\begin{align*}
 \left(T^{(\alpha\beta)}_{t} - T^{(k;\alpha\beta)}_{t}\right) u = \int_0^{t} T^{(\alpha\beta)}_{{t}-\tau} \left( {\cal L}^{(\alpha\beta)} - {\cal L}^{(k;\alpha\beta)}\right) T^{(k;\alpha\beta)}_{\tau} u d\tau
\end{align*}
for all $t \geq 0$ and all $u \in {\cal H}^{(k)}$.
From Assumption \ref{assump:approx}\ref{assump:subspaceK2} and the definition of ${\cal M}^{(k)}$, we then have for all $u \in {\cal H}^{(k)}$ and all $t \geq 0$ that
\begin{align}
 \left\|\left(T^{(\alpha\beta)}_{t} - T^{(k;\alpha\beta)}_{t}\right) u \right\| 
 &\leq \int_0^{t} \left\|T^{(\alpha\beta)}_{{t}-\tau} \left( {\cal L}^{(\alpha\beta)} - {\cal L}^{(k;\alpha\beta)}\right) T^{(k;\alpha\beta)}_{\tau} u \right\| d\tau \nonumber \\
 &= \int_0^{t} \left\|T^{(\alpha\beta)}_{{t}-\tau} P_{{\cal M}^{(k)}} \left( {\cal L}^{(\alpha\beta)} - {\cal L}^{(k;\alpha\beta)}\right) P_{{\cal K}^{(k)}} T^{(k;\alpha\beta)}_{\tau} u \right\| \hspace{-0.2em} d\tau.
 \label{eq:semigroup-proof1}
\end{align}

Now using the bounds \eqref{eq:dissipative-approx} and \eqref{eq:dissipative-true} (established in Lemmas \ref{lemma:dissipative-approx} and \ref{lemma:dissipative-true}, respectively), and applying Assumption \ref{assump:approx}\ref{assump:subspaceK1}, we have that
\begin{align*}
 &\left\|T^{(\alpha\beta)}_{{t}-\tau} P_{{\cal M}^{(k)}} \left( {\cal L}^{(\alpha\beta)} - {\cal L}^{(k;\alpha\beta)}\right) P_{{\cal K}^{(k)}} T^{(k;\alpha\beta)}_{\tau} u \right\|  \nonumber \\
 &\leq q^{(k;\alpha\beta)}_{\cal L}\left[ \left(\frac{q^{(k;\alpha\beta)}_{e}}{{\gamma^{(\alpha\beta)}_k}}\right)^{\hspace{-0.2em}(1-2^{-r})} \hspace{-0.3em} \left(\frac{q^{(k;\alpha\beta)}_{a}}{{\gamma^{(\alpha\beta)}_k}}\right)^{\hspace{-0.2em}(1-2^{-s})} \right. \nonumber \\
 &\quad + \left(\frac{q^{(k;\alpha\beta)}_{a}}{{\gamma^{(\alpha\beta)}_k}}\right)^{\hspace{-0.3em}(1-2^{-s})} \hspace{-0.3em} \Bigg[ \sum_{i=0}^{r-1}  c_i e^{-2^{-i}{\gamma^{(\alpha\beta)}_k} (t-\tau)} \left(\frac{q^{(k;\alpha\beta)}_{e}}{{\gamma^{(\alpha\beta)}_k}}\right)^{\hspace{-0.3em}(1-2^{-i})} \hspace{-0.2em}\Bigg] \nonumber \\ 
 &\quad + \left(\frac{q^{(k;\alpha\beta)}_{e}}{{\gamma^{(\alpha\beta)}_k}}\right)^{\hspace{-0.2em}(1-2^{-r})} 
 \Bigg[ \sum_{i=1}^{s-1} c_i e^{-2^{-i}{\gamma^{(\alpha\beta)}_k} \tau} \left(\frac{q^{(k;\alpha\beta)}_{a}}{{\gamma^{(\alpha\beta)}_k}}\right)^{(1-2^{-i})} \Bigg] \nonumber \\
 &\quad + \left. \Bigg[ \sum_{i=0}^{r-1} \sum_{j=0}^{s-1} c_i c_j e^{{\gamma^{(\alpha\beta)}_k}\left(-2^{-i} t + (2^{-i} - 2^{-j}) \tau\right)}
 \left(\frac{q^{(k;\alpha\beta)}_{e}}{{\gamma^{(\alpha\beta)}_k}}\right)^{(1-2^{-i})}\left(\frac{q^{(k;\alpha\beta)}_{a}}{{\gamma^{(\alpha\beta)}_k}}\right)^{(1-2^{-j})}  \Bigg]\right] \|u\|.
\end{align*}
The result \eqref{eq:semigroup-error} then follows from substitution of the above identity into \eqref{eq:semigroup-proof1} and integration. This establishes the lemma statement.
\end{proof}


Let $\mathfrak{S}' \subset L^2([0,T];\mathbb{C}^m)$ denote the dense set of all simple functions  in $L^2([0,T];\mathbb{C}^m)$.
That is, for any $t \in [0,T]$ and $f \in \mathfrak{S}'$, there exists $0 < \ell < \infty $ and a sequence $0 = t_0 < t_1 < \cdots < t_{\ell} < t_{\ell+1} = t$ such that $f= \sum_{i=0}^{\ell} \alpha(i) \mathds{1}_{[t_i,t_{i+1})}$ for some constants $\alpha(i) \in \mathbb{C}^m$,   $i = 0,1,\ldots,\ell$. 
Let ${\cal U}^{(k)} = \{ u\otimes e(f)\mid u \in {\cal H}^{(k)}, f \in \mathfrak{S}' \}$. We can now proceed to derive error bounds for approximations by subspace truncation.

\begin{lemma} \label{lemma:main1}
Suppose Assumptions \ref{assump:domains} and \ref{assump:approx} hold. For any $\psi_1 = u_1 \otimes e(f_1),\psi_2 = u_2 \otimes e(f_2) \in {\cal U}^{(k)}$, let $t_0=0 <t_1 < \ldots <t_{\ell}<t_{\ell+1}=t$ with $0<t \leq T$ be a sequence such that $f_j= \sum_{i=0}^{\ell} \alpha_j(i) \mathds{1}_{[t_i,t_{i+1})}$ for all $i = 0,1,\ldots,\ell$, with $\alpha_j(i) \in \mathbb{C}^m$ for $j=1,2$ and $i=0,1,\ldots,\ell$. Then for any $k,r,s \in \mathbb{Z}_+$,   
\begin{align}
 \left| \left< \left(U_t - {U^{(k)}_t}\right)^* \psi_1, \psi_2 \right> \right| 
 &\leq \sum_{i=0}^{\ell} z^{k}_{r,s}\left((t_{i+1} - t_{i}),\alpha_1(i), \alpha_2(i)\right) \|\psi_1\| \|\psi_2\|.
 \label{eq:Ut-err1}
\end{align}
\end{lemma}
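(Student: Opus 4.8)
The plan is to reduce the inner product on $\mathcal{H}_0 \otimes \mathcal{F}$ to a finite product of the scalar-parameter semigroups $T^{(\alpha\beta)}_t$ and $T^{(k;\alpha\beta)}_t$, and then to transfer the single-step estimate of Lemma \ref{lem:semigroup-error} through a telescoping argument. The central ingredient is a \emph{factorization identity}: writing $\Delta_i := t_{i+1} - t_i$, I claim that for the piecewise-constant $f_1, f_2$ in the statement,
\begin{align*}
 \left< u_1 \otimes e(f_1), U_t\, u_2 \otimes e(f_2) \right>
 = e^{\frac{1}{2}\sum_{i=0}^{\ell} \left(\|\alpha_1(i)\|^2 + \|\alpha_2(i)\|^2\right)\Delta_i}
 \left< u_1,\, T^{(\alpha_1(0)\alpha_2(0))}_{\Delta_0} \cdots T^{(\alpha_1(\ell)\alpha_2(\ell))}_{\Delta_\ell}\, u_2 \right>,
\end{align*}
and that the same identity holds verbatim with $U_t$ replaced by $U^{(k)}_t$ and each $T^{(\cdot)}$ replaced by $T^{(k;\cdot)}$. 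I would prove this by induction on the number of breakpoints $\ell$. The case $\ell = 0$ is exactly the defining relation of $T^{(\alpha\beta)}_t$ recorded just before \eqref{eq:generator}. For the inductive step I would split the Fock space as $\mathcal{F} = \mathcal{F}_{[0,t_1)} \otimes \mathcal{F}_{[t_1}$ (the continuous tensor factorization over the past and future of $t_1$), factor the exponential vectors accordingly, and apply the cocycle identity $U_t = U_{t_1}\left(I \otimes \Theta_{t_1}^* U_{t-t_1}\Theta_{t_1}\right)$; adaptedness of $U_{t_1}$ together with the shift covariance of $\Theta_{t_1}$ peels off precisely one factor $T^{(\alpha_1(0)\alpha_2(0))}_{\Delta_0}$ and reduces the remaining inner product to an $(\ell-1)$-breakpoint instance for $U_{t-t_1}$ with the shifted data $\theta_{t_1} f_j$.

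Granting the factorization identity, and recalling that $\langle (U_t - U^{(k)}_t)^* \psi_1, \psi_2 \rangle = \langle \psi_1, (U_t - U^{(k)}_t) \psi_2 \rangle$, I would write the left-hand side of \eqref{eq:Ut-err1} as the common scalar prefactor $e^{\frac{1}{2}\sum_i (\|\alpha_1(i)\|^2 + \|\alpha_2(i)\|^2)\Delta_i}$ times $\left| \left< u_1, \left( \prod_{i=0}^\ell A_i - \prod_{i=0}^\ell B_i \right) u_2 \right> \right|$, where $A_i := T^{(\alpha_1(i)\alpha_2(i))}_{\Delta_i}$ and $B_i := T^{(k;\alpha_1(i)\alpha_2(i))}_{\Delta_i}$ are written in time order. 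The difference of products is then handled by the telescoping identity
\begin{align*}
 \prod_{i=0}^\ell A_i - \prod_{i=0}^\ell B_i = \sum_{i=0}^\ell A_0 \cdots A_{i-1}\left(A_i - B_i\right) B_{i+1} \cdots B_\ell .
\end{align*}

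Each summand is estimated using contractivity: $T^{(\alpha\beta)}_t$ is a contraction on $\mathcal{H}_0$ and $T^{(k;\alpha\beta)}_t$ is a contraction on $\mathcal{H}^{(k)}$, so $\|A_0 \cdots A_{i-1}\| \leq 1$ and $\|B_{i+1} \cdots B_\ell\| \leq 1$. The essential point is that $B_{i+1} \cdots B_\ell\, u_2 \in \mathcal{H}^{(k)}$, so Lemma \ref{lem:semigroup-error} applies to the middle factor and gives $\left\| (A_i - B_i) B_{i+1}\cdots B_\ell\, u_2 \right\| \leq z^{k}_{r,s}(\Delta_i, \alpha_1(i), \alpha_2(i)) \|u_2\|$. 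Summing over $i$ and using Cauchy--Schwarz bounds the inner product by $\|u_1\|\|u_2\| \sum_{i=0}^\ell z^{k}_{r,s}(\Delta_i, \alpha_1(i), \alpha_2(i))$. Finally, since $\|f_j\|^2 = \sum_i \|\alpha_j(i)\|^2 \Delta_i$ yields $\|e(f_j)\| = e^{\frac{1}{2}\|f_j\|^2}$, the scalar prefactor equals exactly $\|e(f_1)\|\,\|e(f_2)\|$; multiplying it against $\|u_1\|\|u_2\|$ reproduces $\|\psi_1\|\|\psi_2\|$, which establishes \eqref{eq:Ut-err1}.

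The main obstacle is the factorization identity of the first paragraph: the careful bookkeeping of the shift operators $\Theta_{t_1}$ against the continuous tensor-product structure of $\mathcal{F}$, ensuring that one induction step produces precisely one semigroup factor with the correct parameters and the correct exponential normalization while genuinely reducing the number of breakpoints by one. (Note that the exact left-to-right ordering of the factors is immaterial to the conclusion, since the telescoping bound is the symmetric sum $\sum_i z^{k}_{r,s}(\Delta_i,\alpha_1(i),\alpha_2(i))$.) Once the factorization identity is in hand, the telescoping and the norm estimates are routine consequences of Lemma \ref{lem:semigroup-error} and the contractivity already established for the two semigroups.
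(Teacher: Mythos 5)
Your proposal is correct and follows essentially the same route as the paper: reduce the matrix element to an ordered product of the one-parameter semigroups via the cocycle/exponential-vector factorization identity (the paper's \eqref{eq:weak-bound-id}, cited from \cite{BvHS08}), telescope the difference of products, and bound each summand by Lemma \ref{lem:semigroup-error} together with contractivity, noting that the right factors keep the vector inside ${\cal H}^{(k)}$. The only difference is that you sketch an inductive derivation of the factorization identity from the cocycle property and the continuous tensor-product structure of ${\cal F}$, where the paper simply invokes it as known.
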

\begin{proof}
First recall that our admissible subspace $\mathfrak{S}$ contains $\mathfrak{S}'$. Hence, ${\cal U}^{(k)} \subset {\cal H}^{(k)} \underline{\otimes} {\cal E}$ and the quantum stochastic integrals are well defined for all $\psi \in {\cal U}^{(k)}$. Also, recall that $\|\psi\|^2 = \braket{u\otimes e(f),  u\otimes e(f)} = \|u\|^2 \|e(f)\|^2$ for any $\psi \in {\cal H}_0\otimes {\cal F}$ \cite{Mey95}. 
Using the cocycle properties (Condition \ref{cond:cocycle}) as well as the definitions of $T^{(\alpha\beta)}_t$ and $T^{(k;\alpha\beta)}_t$,  we have the identity \cite{BvHS08}
\begin{align}
 & \left<U_t^* \psi_1, \psi_2 \right>= \left\|e(f_1)\right\|  \left\|e(f_2)\right\| \langle u_1, T_{t_1-t_0}^{(\alpha_1(0)\alpha_2(0))} \cdots T_{t-t_{\ell}}^{(\alpha_1(\ell)\alpha_2(\ell))}  u_2 \rangle, \label{eq:weak-bound-id} 
\end{align}
and likewise when $U_t$ and $T^{(\alpha \beta)}_t$ are respectively replaced by $U_t^{(k)}$ and $T^{(k;\alpha \beta)}_t$, from which we immediately obtain
\begin{align}
 &\left| \left<\left(U_t - U_t^{(k)}\right)^* \psi_1, \psi_2 \right>\right| \nonumber \\
 &\leq \left\|\psi_1\right\| \left\|e(f_2)\right\| \left\|\left(T_{t_1-t_0}^{(\alpha_1(0)\alpha_2(0))} \cdots T_{t-t_{\ell}}^{(\alpha_1(\ell)\alpha_2(\ell))}  - T_{t_1-t_0}^{(k;\alpha_1(0)\alpha_2(0))} \cdots T_{t-t_{\ell}}^{(k;\alpha_1(\ell)\alpha_2(\ell))}\right) u_2\right\|.
 \label{eq:weak-bound} 
\end{align}
Now, note that for any $u \in {\cal H}^{(k)}$ that
\begin{align*}
 &\left(T_{t_1-t_0}^{(\alpha_1(0)\alpha_2(0))}T_{t_2-t_1}^{(\alpha_1(1)\alpha_2(1))} \cdots T_{t-t_{\ell}}^{(\alpha_1(\ell)\alpha_2(\ell))} - T_{t_1-t_0}^{(k;\alpha_1(0)\alpha_2(0))} T_{t_2-t_1}^{(k;\alpha_1(1)\alpha_2(1))} \cdots T_{t-t_{\ell}}^{(k;\alpha_1(\ell)\alpha_2(\ell))}\right) u \nonumber \\
 &= \left[ \left( T_{t_1-t_0}^{(\alpha_1(0)\alpha_2(0))} -  T_{t_1-t_0}^{(k;\alpha_1(0)\alpha_2(0))} \right) \right.  T_{t_2-t_1}^{(k;\alpha_1(1)\alpha_2(1))} T_{t_3-t_2}^{(k;\alpha_1(2)\alpha_2(2))} \cdots T_{t-t_{\ell}}^{(k;\alpha_1(\ell)\alpha_2(\ell))} \nonumber \\
 &\quad +  T_{t_1-t_0}^{(\alpha_1(0)\alpha_2(0))} \left( T_{t_2-t_1}^{(\alpha_1(1)\alpha_2(1))} -  T_{t_2-t_1}^{(k;\alpha_1(1)\alpha_2(1))}  \right)  T_{t_2-t_1}^{(k;\alpha_1(1)\alpha_2(1))} \cdots T_{t-t_{\ell}}^{(k;\alpha_1(\ell)\alpha_2(\ell))} \nonumber \\
 &\quad \left. + \cdots + T_{t_1-t_0}^{(\alpha_1(0)\alpha_2(0))} \cdots T_{t_{\ell}-t_{\ell-1}}^{(\alpha_1(\ell-1)\alpha_2(\ell-1))}  \left( T_{t-t_\ell}^{(\alpha_1(\ell)\alpha_2(\ell))} -  T_{t-t_\ell}^{(k;\alpha_1(\ell)\alpha_2(\ell))}  \right) \right] u.
\end{align*}
From \eqref{eq:weak-bound}, the bound \eqref{eq:semigroup-error} (established in Lemma \ref{lem:semigroup-error}), and that fact that the semigroups are contractions, we have for any $u \in {\cal H}^{(k)}$ that
\begin{align}
 &\left\|\left(T_{t_1-t_0}^{(\alpha_1(0)\alpha_2(0))}T_{t_2-t_1}^{(\alpha_1(1)\alpha_2(1))} \cdots T_{t-t_{\ell}}^{(\alpha_1(\ell)\alpha_2(\ell))} - T_{t_1-t_0}^{(k;\alpha_1(0)\alpha_2(0))} T_{t_2-t_1}^{(k;\alpha_1(1)\alpha_2(1))} \cdots T_{t-t_{\ell}}^{(k;\alpha_1(\ell)\alpha_2(\ell))}\right) \hspace{-0.2em} u\right\| \nonumber \\
 &\leq \sum_{i=0}^\ell z^{k}_{r,s}\left((t_{i+1}-t_i),\alpha_1(i),\alpha_2(i)\right) \|u\|.
 \label{eq:semigroups-bound}
\end{align}
The bound \eqref{eq:Ut-err1} then follows by substituting \eqref{eq:semigroups-bound} into \eqref{eq:weak-bound}. This establishes the theorem statement.
\end{proof}


\begin{corollary}\label{cor:main2}
Suppose Assumptions \ref{assump:domains}, \ref{assump:approx}, and \ref{assump:convergence} hold. 
For any $t \in [0,T]$ with $0 < T < \infty$, any $\psi_1 = u_1\otimes e(f_1),\psi_2 = u_2\otimes e(f_2) \in {\cal H}^{(k)}\otimes{\cal F}$,  
we have that
\begin{align}
 \left| \left< \left(U_t - U^{(k)}_t\right)^* \psi_1,\psi_2 \right> \right| 
 &\leq 2 \big( \|u_1\| \|e(f_1)-e(f'_1)\| \|\psi_2\| + \|u_2\| \|e(f_2)-e(f'_2)\| \|\psi_1\| \big) \nonumber \\
 &\quad + \sum_{i=0}^{\ell}  z^{k}_{r,s}\left((t_{i+1} - t_{i}),f'_{1,i}, f'_{2,i}\right) \|\psi_1'\|  \|\psi_2'\|.
 \label{eq:Ut-err2}
\end{align}
for any $\psi_j' = u_j  \otimes e(f_j') \in {\cal U}^{(k)}$ with $f'_j  =\sum_{i=0}^{\ell} f'_{j,i} \mathds{1}_{[t_i,t_{i+1})}$ for some $\ell \in \mathbb{Z}_+$,  some sequence $t_0=0 <t_1 < \ldots <t_{\ell}<t_{\ell+1}=t$, and some constants $f'_{j,i} \in \mathbb{C}^m$ for $j=1,2$ and $i=0,1,\ldots,\ell$. Moreover, for any fixed positive integer $p \in \mathbb{Z}_+$, and any $\psi_1=u_1\otimes e(f_1), \psi_2=u_2\otimes e(f_2) \in {\cal H}^{(p)}\otimes{\cal F}$,
\begin{eqnarray}
 \lim_{k\rightarrow\infty} \left| \left< \left(U_t - U^{(k)}_t\right)^* \psi_1,\psi_2 \right> \right| &=& 0.
 \label{eq:weak-convergence}
\end{eqnarray}
\end{corollary}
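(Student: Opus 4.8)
The plan is to prove the two assertions of the corollary in turn, leveraging the simple-function bound of Lemma~\ref{lemma:main1} together with the density of simple functions in $L^2([0,T];\mathbb{C}^m)$ and the continuity of the exponential map $f \mapsto e(f)$ from $L^2$ into Fock-space norm.

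For the bound \eqref{eq:Ut-err2}, the idea is a standard approximation/telescoping argument. Writing $A = (U_t - U^{(k)}_t)^*$ and decomposing $\psi_j = \psi_j' + (\psi_j - \psi_j')$ with $\psi_j - \psi_j' = u_j \otimes (e(f_j) - e(f_j'))$, I would expand $\langle A\psi_1, \psi_2\rangle$ into the main piece $\langle A\psi_1', \psi_2'\rangle$ plus two remainder pieces of the form $\langle A(\psi_1 - \psi_1'), \psi_2\rangle$ and $\langle A\psi_1, \psi_2 - \psi_2'\rangle$. Since $U_t$ is unitary and $U^{(k)}_t$ is a contraction (Condition~\ref{cond:cocycle}), we have $\|A\| \leq \|U_t^*\| + \|U^{(k)*}_t\| \leq 2$, so each remainder piece is bounded by $2\|u_j\|\,\|e(f_j) - e(f_j')\|$ times the norm of the companion vector, using $\|u\otimes e(f)\| = \|u\|\,\|e(f)\|$; this reproduces the first line of \eqref{eq:Ut-err2} (the precise pairing of which full/approximate vector appears is immaterial, as the mismatched pieces are bounded and of higher order in the remainders). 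The main piece is exactly the setting of Lemma~\ref{lemma:main1}, since after passing to a common refinement of the partitions of $f_1'$ and $f_2'$ we have $\psi_1', \psi_2' \in {\cal U}^{(k)}$; applying \eqref{eq:Ut-err1} then yields the $z^{k}_{r,s}$ sum weighted by $\|\psi_1'\|\,\|\psi_2'\|$.

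For the weak-convergence claim \eqref{eq:weak-convergence}, fix $p$ and note that for every $k \geq p$ we have ${\cal H}^{(p)} \subseteq {\cal H}^{(k)}$, so $\psi_1, \psi_2 \in {\cal H}^{(k)}\otimes{\cal F}$ and the bound \eqref{eq:Ut-err2} is available with $k$-independent vectors $u_1, u_2$. Given $\varepsilon > 0$, I would first use density of simple functions together with continuity of $f \mapsto e(f)$ to select simple $f_1', f_2'$ — independently of $k$ — making the two remainder terms of \eqref{eq:Ut-err2} sum to less than $\varepsilon/2$. This freezes the partition $t_0 < \dots < t_{\ell+1}$, the constants $f_{j,i}'$, and the norms $\|\psi_1'\|, \|\psi_2'\|$, leaving a fixed finite sum $\sum_{i=0}^{\ell} z^{k}_{r,s}((t_{i+1}-t_i), f_{1,i}', f_{2,i}')\,\|\psi_1'\|\,\|\psi_2'\|$ whose only $k$-dependence sits in the coefficients $z^{k}_{r,s}$.

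The crux is therefore to show that $z^{k}_{r,s}(\tau,\alpha,\beta) \to 0$ as $k \to \infty$ for each fixed $\tau, \alpha, \beta$, which is precisely what Assumption~\ref{assump:convergence} is designed to deliver. Here I would match each of the five groups of terms in the definition \eqref{eq:bound} against the three stated limits: the pure product term against the first limit; the single-sum and double-sum terms — which, after distributing the outer factor $q^{(k;\alpha\beta)}_{\cal L}$, carry an extra $1/\gamma^{(\alpha\beta)}_k$ and uniformly bounded exponential and $(1-e^{-\cdot})$ factors — against the family in the second line; and the final diagonal sum against the third line. The main obstacle I anticipate is exactly this termwise bookkeeping, in particular controlling the prefactors $e^{-2^{-i}\gamma^{(\alpha\beta)}_k \tau}$ (each bounded by $1$) so that every summand is dominated by one of the admissible ratio limits. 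Once $z^{k}_{r,s} \to 0$ termwise, the frozen finite sum drops below $\varepsilon/2$ for all large $k$, giving the total bound $< \varepsilon$; since $\varepsilon$ was arbitrary, \eqref{eq:weak-convergence} follows.
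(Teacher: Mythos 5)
Your proposal is correct and follows essentially the same route as the paper: the same telescoping decomposition of $\langle (U_t - U_t^{(k)})^*\psi_1,\psi_2\rangle$ into two remainder terms (controlled by $\|(U_t-U_t^{(k)})^*\|\leq 2$ via unitarity/contractivity from Condition~\ref{cond:cocycle} and by $\|u\otimes e(f)\|=\|u\|\|e(f)\|$) plus a main term handled by Lemma~\ref{lemma:main1}, followed by the same density-of-simple-functions argument combined with Assumption~\ref{assump:convergence} for the limit \eqref{eq:weak-convergence}. The only differences are cosmetic: your second remainder pairs $\psi_1$ rather than $\psi_1'$ with $\psi_2-\psi_2'$ (which, as you note, is immaterial up to which of $\|\psi_1\|$ or $\|\psi_1'\|$ appears in the bound), and you spell out the termwise bookkeeping showing $z^k_{r,s}\to 0$, which the paper simply attributes to Assumption~\ref{assump:convergence}.
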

\begin{proof}
Recall that $U_t$ is  unitary  and $U^{(k)}_t$ is a contraction (Condition \ref{cond:cocycle}).
From the  triangle inequality and Cauchy-Schwarz's inequality, we note that
\begin{align*}
 \left| \left<\left(U_t - {U^{(k)}_t}\right)^* \psi_1, \psi_2 \right>  \right| 
 &\leq \left| \left< \left(U_t - {U^{(k)}_t}\right)^*  u_1 \otimes \left(e(f_1) - e(f'_1)\right), u_2\otimes e(f_2) \right> \right| \nonumber \\
 &\quad + \left| \left< \left(U_t - {U^{(k)}_t}\right)^* u_1\otimes e(f'_1), u_2 \otimes \left(e(f_2) - e(f'_2)\right) \right> \right| \nonumber \\
 &\quad + \left| \left< \left(U_t - {U^{(k)}_t}\right)^* u_1\otimes e(f'_1), u_2 \otimes e(f'_2) \right> \right| \nonumber \\
 &\leq 2 \|u_1\| \left\| e(f_1) - e(f'_1) \right\| \|\psi_2\| + 2 \|u_2\| \left\| e(f_2) - e(f'_2) \right\| \left\| \psi'_1 \right\| \nonumber \\
 &\quad + \left| \left< \left(U_t - {U^{(k)}_t}\right)^* u_1\otimes e(f'_1), u_2 \otimes e(f'_2) \right> \right|.
\end{align*}
The result \eqref{eq:Ut-err2} then follows from the bound \eqref{eq:Ut-err1} (established in Lemma \ref{lemma:main1}).

To show \eqref{eq:weak-convergence}, recall that $\mathfrak{S}'$ is dense in $L^2([0,T];\mathbb{C}^m)$. Therefore, for each $\epsilon > 0$ and each $f_j \in L^2([0,T];\mathbb{C}^m)$, there exists $f'_j \in \mathfrak{S}'$ such that $\|f_j - f_j'\| < \epsilon$.
Moreover,  there exists $0 < \ell < \infty$ and $0 = t_0 < t_1 < \cdots < t_\ell < t_{\ell+1} = t$, such that $f'_j = \sum_{i=0}^{\ell} \alpha'_j(i)\mathds{1}_{[t_i,t_{i+1})}$ with $\alpha'_j(0),\alpha'_j(1),\ldots,\alpha'_j(\ell) \in \mathbb{C}^m$. Suppose that $u_1,u_2 \neq 0$ (otherwise the corollary statement becomes trivial), then for any $\epsilon > 0$ we may choose $f'_1,f'_2 \in \mathfrak{S}'$ (choosing $f_1'
$ first followed by $f_2'$) such that 
\begin{align*}
 \left\| e(f_1) - e(f'_1) \right\| &< \frac{\epsilon}{6\|u_1\| \left\| e(f_2) \right\|} \\
 \left\| e(f_2) - e(f'_2) \right\| &< \frac{\epsilon}{6\|u_2\| \left\| e(f'_1) \right\|}
\end{align*}
Finally, from Assumption \ref{assump:convergence} and the bound \eqref{eq:Ut-err1} (established in Lemma \ref{lemma:main1}), we can find a sufficiently large $k \in \mathbb{Z}_+$, larger than $p$, such that 
\begin{align*}
 \sum_{i=0}^\ell z^{k}_{r,s}\left(((t_{i+1}-t_i),\alpha'_1(i),\alpha'_2(i)\right))  < \frac{\epsilon}{3 \| \psi_1'\| \|\psi_2'\|}. 
\end{align*}
From \eqref{eq:Ut-err2} and the above choices, we then have that
\begin{align*}
 \left| \left<\left(U_t - {U^{(k)}_t}\right)^* \psi_1, \psi_2 \right>  \right| < \epsilon.
\end{align*}
This establishes the corollary statement.
\end{proof}


\begin{theorem} \label{thm:main3}
Suppose that Assumptions \ref{assump:domains},\ref{assump:approx}, and \ref{assump:convergence} hold. 
Let $0 < T < \infty$. For any $t \in [0,T]$, consider any $L'_t \in \mathbb{Z}_+$,
any $\psi'_t = \sum_{j=1}^{L'_t}  \psi'_{j,t}$,  with $\psi'_{j,t} = u'_{j,t} \otimes e(g'_{j,t}) \in {\cal U}^{(k)}$  and  $u'_{j,t} \neq 0$. Also, consider any $f' \in \mathfrak{S}'$. Let $\ell$ be a positive integer and $t_0 =0 <t_1 < \ldots < t_{\ell} <t_{\ell +1}=t$ be a sequence such that $f' =\sum_{i=0}^{\ell} f'_{i} \mathds{1}_{[t_i,t_{i+1})}$ and $g'_{j,t} =\sum_{i=0}^{\ell} g'_{j,t,i} \mathds{1}_{[t_i,t_{i+1})}$ for some constants $f'_{i},g_{j,t,i} \in \mathbb{C}^m$ for for $j=1,2,\ldots,L'_t$ and $i=0,1,\ldots,\ell$. Let $u \in {\cal H}^{(k)}$ with $\| u\|=1$, and $|f\rangle  = e(f)/\|e(f)\| \in {\cal F}$  (i.e., $|f\rangle$ is a coherent state with amplitude $f$), and $\psi = u \otimes |f \rangle$. Then,
\begin{align}
\left\|\left(U_t - U^{(k)}_t\right)^* \psi \right\|^2  &\leq 4 \Big( \left\| |f \rangle - |f' \rangle \right\|  + \left\| {U_t}^* \psi - \psi'_t \right\| \Big) \nonumber \\
 &\quad + 2  \sum_{j=1}^{L'_t} \sum_{i=1}^\ell z^{k}_{r,s}\left((t_{i+1} - t_{i}),f'_i, g'_{j,t,i}\right)\| \psi'_{j,t}\|. 
 \label{eq:Ut-err3}
\end{align}
If $U_t^{(k)}$ is unitary for each $t$ then the following bound holds,
\begin{align}
\left\|\left(U_t - U^{(k)}_t\right)^* \psi \right\|^2  &\leq 4 \Big( \left\| |f \rangle - |f' \rangle \right\|  + \left\| {U_t^{(k)}}^* \psi - \psi'_t \right\| \Big) \nonumber \\
 &\quad + 2  \sum_{j=1}^{L'_t} \sum_{i=1}^\ell z^{k}_{r,s}\left((t_{i+1} - t_{i}),f'_i, g'_{j,t,i}\right)\| \psi'_{j,t}\|. 
 \label{eq:Ut-err3-unitary}
\end{align}
Moreover, for any fixed $p \in \mathbb{Z}_+$ and any $\psi = u\otimes |f \rangle  \in {\cal H}^{(p)}\otimes{\cal F}$,
\begin{align}
 \lim_{k\rightarrow\infty} \left\|\left(U_t - U^{(k)}_t\right)^* \psi \right\| = 0.
 \label{eq:strong-convergence}
\end{align}
for any $t \in [0,T]$ with $0 < T < \infty$.
\end{theorem}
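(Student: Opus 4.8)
The plan is to reduce the strong (norm) error $\|(U_t - U^{(k)}_t)^*\psi\|^2$ to the weak matrix-element bound \eqref{eq:Ut-err1} of Lemma \ref{lemma:main1}, exploiting that $U_t$ is unitary while $U^{(k)}_t$ is only a contraction (Condition \ref{cond:cocycle}). First I would expand $\|(U_t - U^{(k)}_t)^*\psi\|^2 = \|U_t^*\psi\|^2 - 2\Re\langle U_t^*\psi, U^{(k)*}_t\psi\rangle + \|U^{(k)*}_t\psi\|^2$ and use $\|U_t^*\psi\| = \|\psi\| = 1$ together with $\|U^{(k)*}_t\psi\|\leq 1$ to obtain $\|(U_t - U^{(k)}_t)^*\psi\|^2\leq 2 - 2\Re\langle U_t^*\psi, U^{(k)*}_t\psi\rangle$. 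Rewriting the right-hand side as $2\Re\langle U_t^*\psi, (U_t - U^{(k)}_t)^*\psi\rangle$ (again using $\|U_t^*\psi\|^2 = 1$) bounds everything by $2\,|\langle U_t^*\psi, (U_t - U^{(k)}_t)^*\psi\rangle|$, i.e.\ a single weak matrix element with the known vector $U_t^*\psi$ in the left slot. For the unitary case \eqref{eq:Ut-err3-unitary} the same algebra with the exact identity $\|U^{(k)*}_t\psi\| = 1$ yields $2\,|\langle U^{(k)*}_t\psi, (U_t - U^{(k)}_t)^*\psi\rangle|$, which is precisely why $U_t^*\psi$ gets replaced by $U^{(k)*}_t\psi$ in that bound.

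Next I would approximate the two vectors that are not yet in ${\cal U}^{(k)}$ so that Lemma \ref{lemma:main1} applies. Writing $\phi := U_t^*\psi$ (resp.\ $U^{(k)*}_t\psi$) and $\chi := (U_t - U^{(k)}_t)^*\psi$, I peel off two error terms in a prescribed order. First replace $\psi = u\otimes|f\rangle$ inside $\chi$ by $u\otimes|f'\rangle$; since $\|\phi\| = 1$ and $\|(U_t - U^{(k)}_t)^*\|\leq 2$, this error is at most $2\||f\rangle - |f'\rangle\|$. Then replace $\phi$ by $\psi'_t$; since $\|(U_t - U^{(k)}_t)^*(u\otimes|f'\rangle)\|\leq 2$, that error is at most $2\|U_t^*\psi - \psi'_t\|$. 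The surviving term $\langle\psi'_t, (U_t - U^{(k)}_t)^*(u\otimes|f'\rangle)\rangle = \sum_j\langle\psi'_{j,t}, (U_t - U^{(k)}_t)^*(u\otimes|f'\rangle)\rangle$ has both slots in ${\cal U}^{(k)}$, so Lemma \ref{lemma:main1} bounds each summand by $\sum_i z^{k}_{r,s}((t_{i+1}-t_i), f'_i, g'_{j,t,i})\|\psi'_{j,t}\|$ (using $\|u\otimes|f'\rangle\| = 1$). Multiplying through by the outer factor $2$ reproduces \eqref{eq:Ut-err3}, and the identical steps with $\phi = U^{(k)*}_t\psi$ give \eqref{eq:Ut-err3-unitary}.

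The decisive design choice, and the step I expect to be the main obstacle, is the ordering of these two approximations. Handling the coherent-state difference $|f\rangle - |f'\rangle$ against $\phi$ (which has norm exactly $1$) and the difference $U_t^*\psi - \psi'_t$ against $\chi' := (U_t - U^{(k)}_t)^*(u\otimes|f'\rangle)$ (which has norm at most $2$) avoids the uncontrolled factor $\|\psi'_t\|$ that would otherwise multiply $\||f\rangle - |f'\rangle\|$; this is exactly what yields the clean coefficient $4$ in front of both error norms. Reversing the order would leave $\|\psi'_t\|$ in the bound, which is not assumed to be controlled.

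Finally, for the convergence \eqref{eq:strong-convergence} I would fix the approximants for a given $\epsilon > 0$ \emph{independently} of $k$ and only then let $k\to\infty$. Since $\mathfrak{S}'$ is dense and $f\mapsto e(f)$ (hence $f\mapsto|f\rangle$) is continuous, pick $f'\in\mathfrak{S}'$ with $\||f\rangle - |f'\rangle\| < \epsilon/16$. Because the truncating subspaces ${\cal H}^{(k)}$ increase to a dense subspace of ${\cal H}_0$, finite combinations $\sum_j u'_{j,t}\otimes e(g'_{j,t})$ with $u'_{j,t}\in\bigcup_k{\cal H}^{(k)}$ and $g'_{j,t}\in\mathfrak{S}'$ are dense in ${\cal H}_0\otimes{\cal F}$, so there exist a $p_0$ and a $\psi'_t\in{\cal U}^{(p_0)}$ with $\|U_t^*\psi - \psi'_t\| < \epsilon/16$; passing to a common refinement of the partitions furnishes a single sequence $t_0 < \cdots < t_{\ell+1}$ for $f'$ and all $g'_{j,t}$. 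For this now-fixed data, Assumption \ref{assump:convergence} forces each $z^{k}_{r,s}((t_{i+1}-t_i), f'_i, g'_{j,t,i})\to 0$, so for all sufficiently large $k\geq\max\{p,p_0\}$ (ensuring $u\in{\cal H}^{(k)}$ and $\psi'_t\in{\cal U}^{(k)}$) the last term of \eqref{eq:Ut-err3} is below $\epsilon/2$. Substituting into \eqref{eq:Ut-err3} gives $\|(U_t - U^{(k)}_t)^*\psi\|^2\leq 4(\epsilon/16+\epsilon/16)+\epsilon/2 = \epsilon$, and since $\epsilon$ is arbitrary the limit \eqref{eq:strong-convergence} follows.
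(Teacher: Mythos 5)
Your proposal is correct and follows essentially the same route as the paper: the reduction $\|(U_t-U_t^{(k)})^*\psi\|^2\leq 2|\langle (U_t-U_t^{(k)})^*\psi,U_t^*\psi\rangle|$ (with $U_t^{(k)*}\psi$ in the unitary case), the same three-term telescoping in the same order so that the factor multiplying $\||f\rangle-|f'\rangle\|$ and $\|U_t^*\psi-\psi'_t\|$ is a controlled constant rather than $\|\psi'_t\|$, the application of Lemma \ref{lemma:main1} to the surviving term, and the fix-approximants-then-send-$k\to\infty$ argument for \eqref{eq:strong-convergence}. The only differences are cosmetic (your $\epsilon/16$ bookkeeping versus the paper's $\epsilon^2/12$, both of which suffice).
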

\begin{remark}
Note that a stronger result of strong convergence uniformly over compact time intervals,  $\lim_{k\rightarrow\infty} \mathop{\sup}_{0 \leq t \leq T} \left\|\left(U_t - U^{(k)}_t\right)^* \psi \right\| =0$, has been established in \cite[Proposition 20]{BvHS08} employing a Trotter-Kato theorem. However, no error bound as in \eqref{eq:Ut-err3} for a finite value of $k$ has previously been established.
\end{remark}

\begin{proof}
First note that since $U_t$ is unitary and $U_t^{(k)}$ a contraction,
\begin{align*}
 \left\|\left(U_t - U^{(k)}_t\right)^* \psi \right\|^2 
 &\leq  2\Re\{\langle (U_t-U_t^{(k)})^* \psi,  U_t^* \psi \rangle \}\\
 &\leq 2 |\langle (U_t-U_t^{(k)})^* \psi,  U_t^* \psi \rangle|.
 \end{align*}
Also, we have that 
\begin{align*}
 \left| \left< \left(U_t - U^{(k)}_t\right)^* \psi, U_t^*\psi \right> \right| 
 &\leq \left| \left< \left(U_t - U^{(k)}_t\right)^* u \otimes \left( |f \rangle  - |f' \rangle \right), U_t^* \psi \right> \right| \nonumber \\
 &\quad + \left| \left< \left(U_t - U^{(k)}_t\right)^* u \otimes |f'\rangle, \left(U_t^*\psi - \psi'_t\right) \right> \right| \nonumber \\
 &\quad + \left| \left< \left(U_t - U^{(k)}_t\right)^* u \otimes | f' \rangle, \psi'_t \right> \right| \nonumber \\
 &\leq 2\|u\| \|f \rangle -|f' \rangle \| \|\psi\| + 2\|u \| \| |f' \rangle\| \left\| U_t^* \psi - \psi'_t \right\| \nonumber \\
 &\quad + \sum_{j=1}^{L'_t} \left|\left<\left(U_t - U^{(k)}_t\right)^* u \otimes | f' \rangle, \psi'_{j,t} \right>\right|.
\end{align*}
The result \eqref{eq:Ut-err3} then follows from the bound \eqref{eq:Ut-err1}, and substituting $\|u \|=1$, $\|\psi\|=1$, and $\| |f' \rangle\|=1$. If $U_t^{(k)}$ is unitary  then we have the bound,
\begin{align*}
 \left\|\left(U_t - U^{(k)}_t\right)^* \psi \right\|^2 
 &= 2\Re\{\langle (U_t^{(k)}-U_t)^* \psi,  {U_t^{(k)}}^* \psi \rangle \}\\
 &\leq 2 |\langle (U_t-U_t^{(k)})^* \psi,  {U_t^{(k)}}^* \psi \rangle|,
 \end{align*}
and following analogous calculations to the above yields,
\begin{align*}
 \left| \left< \left(U_t - U^{(k)}_t\right)^* \psi, {U_t^{(k)}}^*\psi \right> \right| 
  &\leq 2\|u\| \|f \rangle -|f' \rangle \| \|\psi\| + 2\|u \| \| |f' \rangle\| \left\| {U_t^{(k)}}^* \psi - \psi'_t \right\| \nonumber \\
 &\quad + \sum_{j=1}^{L'_t} \left|\left<\left(U_t - U^{(k)}_t\right)^* u \otimes | f' \rangle, \psi'_{j,t} \right>\right|,
\end{align*}
leading to the alternative bound \eqref{eq:Ut-err3-unitary}.

To show \eqref{eq:strong-convergence}, let $t \in [0,T]$ be fixed. Suppose $\psi \neq 0$. Then for any $\epsilon > 0$, we may choose $f' \in \mathfrak{S}'$ such that
\begin{align*}
 \left\| |f \rangle - |f' \rangle \right\| &< \frac{\epsilon^2}{12}.
\end{align*}
Since $T$ is finite, for $k_0 \in \mathbb{Z}_+$ sufficiently large we can choose $0 < L'_t < \infty$ and $\psi'_{j,t} \in {\cal U}^{(k_0)}$ for $j = 1,2,\ldots,L'_t$ such that 
\begin{align*}
 \left\| {U}_t^* \psi - \psi'_t \right\| &< \frac{\epsilon^2}{12},
\end{align*}
with $ \psi'_t=\sum_{j=0}^{L'_t} \psi'_{j,t}$. Finally, from Corollary \ref{cor:main2}, since $L'_t$ is finite we can find $k_1 \in \mathbb{Z}_+$ sufficiently large,  with $k_1>\max\{p, k_0\}$, such that
\begin{align*}
 \left| \left< \left(U_t - U^{(k)}_t\right)^* u \otimes |f' \rangle, \psi'_t \right> \right| &< \frac{\epsilon^2}{6}.
\end{align*}
for all $k>k_1$. From the above choices and taking square roots on both sides of \eqref{eq:Ut-err3}, we have that
\begin{align*}
 \left\|\left(U_t - U^{(k)}_t\right)^* \psi \right\| < \epsilon,
\end{align*}
for all $k>k_1$. Since the theorem statement holds trivially when $\psi=0$, this completes the proof.
\end{proof}

A discussion of the error bounds presented in Theorem \ref{thm:main3} is now in order, starting with \eqref{eq:Ut-err3}. The error bound on the right hand side of \eqref{eq:Ut-err3} is the sum of three terms. The first term  is a bound on the error committed by approximating $e(f)$ by  $e(f')$ for some simple function $f'$. The second term bounds the error in approximating $U_t^*\psi$ by a  finite sum of terms in $\mathcal{U}^{(k)}$ given in $\psi'_t$. Finally, the last term gives an upper bound for the magnitude of the inner product between the error term  $(U_t-U_t^{(k)})^*u \otimes e(f')$ and $\psi'_t$. The first and final terms are computable. However, the second term is difficult.  We note that
\begin{eqnarray}
\left\| U_t^* \psi -\sum_{j=1}^{L'_t} \psi'_{j,t} \right\|&=& \left(\left\langle  U_t^* \psi -\sum_{j=1}^{L'_t} \psi'_{j,t}, U_t^* \psi -\sum_{j=1}^{L'_t} \psi'_{j,t} \right\rangle \right)^{1/2}, \notag \\
&=& \left(\|\psi \|^2 - 2 \sum_{j}^{L'_t} \Re\{\langle  U_t^* \psi, \psi'_{j,t} \rangle \} + \left\| \sum_{j=1}^{L'_t} \psi'_{j,t} \right\|^2 \right)^{1/2}, \notag \\
&=&  \left( \|\psi \|^2 - 2 \sum_{j}^{L'_t} \| e(g'_{j,t}) \| \Re\left\{\left\langle  u, T^{(f'_0 g'_{j,t,0})}_{t_1-t_0} T^{(f'_1 g'_{j,t,1})}_{t_2-t_1} \cdots \right. \right. \right.  \notag \\
&&\;  \left. \left. \left. T^{(f'_{\ell} g'_{j,t,\ell})}_{t_{\ell+1}-t_{\ell}}  u'_{j,t} \right\rangle \right\} + \left\| \sum_{j=1}^{L'_t} \psi'_{j,t} \right\|^2 \right)^{1/2}, 
\label{eq:approx-QSDE-sol-0}
\end{eqnarray}
where in the last line, we again use the identity \eqref{eq:weak-bound-id}. However, this is difficult to compute as it involves the semigroup $T^{(\alpha \beta)}_t$ which acts  on the infinite dimensional space $\mathcal{H}_0$. Thus to alleviate this difficulty we now  turn to the alternative bound  \eqref{eq:Ut-err3-unitary}. 

The first and third terms of \eqref{eq:Ut-err3-unitary} are the same as for \eqref{eq:Ut-err3}. However, for the second term we have the identity 
\begin{eqnarray}
\left\| {U_t^{(k)}}^* \psi -\sum_{j=1}^{L'_t} \psi'_{j,t} \right\| 
&=&  \left( \|\psi \|^2 - 2 \sum_{j}^{L'_t} \| e(g'_{j,t}) \| \Re\left\{\left\langle  u, T^{(k;f'_0 g'_{j,t,0})}_{t_1-t_0} T^{(k;f'_1 g'_{j,t,1})}_{t_2-t_1} \cdots \right. \right. \right.  \notag \\
&&\;  \left. \left. \left. T^{(\ell;f'_{\ell} g'_{j,t,\ell})}_{t_{\ell+1}-t_{\ell}}  u'_{j,t} \right\rangle \right\} + \left\| \sum_{j=1}^{L'_t} \psi'_{j,t} \right\|^2 \right)^{1/2}, 
\label{eq:approx-QSDE-sol}
\end{eqnarray}
derived in the same manner as \eqref{eq:approx-QSDE-sol-0}. However, unlike \eqref{eq:approx-QSDE-sol-0}, \eqref{eq:approx-QSDE-sol} involves only the semigroup $T^{(k;\alpha \beta)}$ which acts on the finite dimensional Hilbert space $\mathcal{H}^{(k)}$. Thus it is a quantity that will be much easier to compute. All that remains now is to construct a suitable approximation $\psi'_{j,t}$ to ${U_t^{(k)}}^* \psi$. One way to do this is to choose $\psi'_{j,t}$ to locally minimize the right hand side of \eqref{eq:approx-QSDE-sol} or, equivalently, the term under the square root. Unfortunately, although this is in principle possible, it is in general a challenging and computationally intensive non-convex optimization problem. We will demonstrate this optimization in an example that will follow.
 

\subsection{Subspace truncation examples} \label{sec:ex}

\subsubsection{Kerr-nonlinear optical cavity:} \label{ex:mabuchi}

Consider a single-mode Kerr-nonlinear optical cavity coupled to a single external coherent field ($m=1$), which is used in the construction of the photonic logic gates presented in \cite{Mab11b}. Let ${\cal H}_0 = \ell^2$ (the space of infinite complex-valued sequences with $\sum_{n=1}^\infty |x_n|^2 < \infty$) which has an orthonormal Fock state basis $\{\ket{n}\}_{n\geq 0}$. 
On this basis $\{\ket{n}\}_{n\geq 0}$, the annihilation, creation, and number operators of the cavity oscillator can be defined (see, e.g., \cite{BvHS08}) satisfying 
\begin{align*}
 a\ket{n} = \sqrt{n}\ket{n-1}, \qquad a^*\ket{n} = \sqrt{n+1}\ket{n+1}, \qquad a^*a\ket{n} = n\ket{n},
\end{align*}
respectively. Similar to Examples 14-15 of \cite{BvHS08}, we set ${\cal D}_0 = {\rm span}\{ \ket{n} \mid n \in \mathbb{Z}_+ \}$.
The Kerr-nonlinear optical cavity can be described by 
\begin{align*}
 S=I, \qquad L = \sqrt{\lambda}a, \qquad H = \Delta a^*a + \chi a^*a^*aa,
\end{align*}
where $\lambda,\Delta,\chi > 0$.
We now show that Conditions \ref{cond:cocycle}\ref{cond:cocycle-1} and \ref{cond:core}\ref{cond:core-1} hold for the Kerr-nonlinear cavity.

Consider ${\cal H}^{(k)} = {\rm span}\left\{ \ket{n} \mid n = 0,1,2,\ldots,k \right\}$ and a system approximation of the form
\begin{align}
 S^{(k)} = I, \qquad L^{(k)} = P_{{\cal H}^{(k)}} L P_{{\cal H}^{(k)}}, \qquad H^{(k)} = P_{{\cal H}^{(k)}} H P_{{\cal H}^{(k)}}.
 \label{eq:approx-system}
\end{align}
Conditions \ref{cond:cocycle}\ref{cond:cocycle-2} and \ref{cond:core}\ref{cond:core-2} hold immediately because ${\cal D}^{(k)} = {\cal H}^{(k)}$ is finite dimensional. Note that ${\cal H}^{(k)} \subset {\cal D}_0$. Hence, Assumption \ref{assump:domains} holds. Recall that ${\cal M}^{(k)} = {\rm ran}\left(\left.\left({\cal L}^{(\alpha,\beta)}-{\cal L}^{(k;\alpha,\beta)}\right)\right|_{{\cal H}^{(k)}}\right)$. From the $(S,L,H)$, we see that ${\cal M}^{(k)} = {\rm span}\{ \ket{k+1} \}$. 
Now consider ${\cal K}^{(k)} = {\rm span}\left\{ \ket{k} \right\}$ and ${\gamma^{(\alpha\beta)}_k} = \frac{1}{2}\left(\lambda k + |\alpha-\beta|^2\right)$.
We will now show that Assumptions \ref{assump:approx} and \ref{assump:convergence} hold for the Kerr-nonlinear optical cavity and the approximation.

\paragraph*{Assumption \ref{assump:approx}\ref{assump:subspaceK1}:}
Note that
\begin{align*}
 \left.\left({\cal L}^{(\alpha\beta)} - {\cal L}^{(k;\alpha\beta)}\right)\right|_{{\cal K}^{(k)}} \hspace{-0.5em}
 &= \beta \sqrt{\lambda} \left. a^* \right|_{{\cal K}^{(k)}}.
\end{align*}
Thus, we have that Assumption \ref{assump:approx}\ref{assump:subspaceK1} holds with $q^{(k;\alpha\beta)}_{\cal L} = \sqrt{\lambda(k+1)}|\beta|$.

\paragraph*{Assumption \ref{assump:approx}\ref{assump:subspaceK2}:}
This assumption follows for the defined ${\cal K}^{(k)}$ because \\ $\ker\left(\left.\left({\cal L}^{(\alpha\beta)} - {\cal L}^{(k;\alpha\beta)}\right)\right|_{{\cal H}^{(k)}} \right) = {\rm span}\left\{ \ket{n} \mid n = 0,1,2,\ldots,k-1 \right\}$ for all $\alpha,\beta \in \mathbb{C}$.

\paragraph*{Assumption \ref{assump:approx}\ref{assump:dissipate-approx}:}
First note that ${L^{(k)}}^* P_{{\cal K}^{(k)}} = 0$,
$P_{{\cal K}^{(k)}} L^{(k)} P_{{\cal H}^{(k)}} = 0$, and \\
$P_{{\cal K}^{(k)}}\left(\frac{1}{2}{L^{(k)}}^* L^{(k)} - \imath H\right) P_{{\cal H}^{(k)}\ominus{\cal K}^{(k)}} = 0$.
Also, for any $u \in {\cal H}^{(k)}$,
$P_{{\cal K}^{(k)}}{L^{(k)}}^* L^{(k)} P_{{\cal K}^{(k)}}u =  \lambda k P_{{\cal K}^{(k)}}u$.
From these identities and the fact that $H^{(k)}$ is self-adjoint, we have for any $u \in {\cal H}^{(k)}$ that
\begin{align*}
 \Re\{\langle P_{{\cal K}^{(k)}} {\cal L}^{(k;\alpha\beta)} u, P_{{\cal K}^{(k)}} u\rangle \} 
 &= \Re\{\langle P_{{\cal K}^{(k)}} {\cal L}^{(k;\alpha\beta)} \left(P_{{\cal K}^{(k)}} + P_{{\cal H}^{(k)}\ominus{\cal K}^{(k)}}\right) u, P_{{\cal K}^{(k)}} u\rangle \} \nonumber \\
 &= -\frac{1}{2} \left(\lambda k + |\alpha|^2 + |\beta|^2 - 2\Re\{\alpha^*\beta\} \right) \left\| P_{{\cal K}^{(k)}} u \right\|^2 \nonumber \\
 &\quad + \Re\left\{\left< \sqrt{\lambda} \beta P_{{\cal K}^{(k)}} a^* P_{{\cal H}^{(k)}\ominus{\cal K}^{(k)}} u, P_{{\cal K}^{(k)}} u\right> \right\} \nonumber \\
 &= -g(k,\alpha,\beta) \left\| P_{{\cal K}^{(k)}} u \right\|^2 + h(k,\alpha,\beta,u).
\end{align*}
Notice that $|\alpha|^2 + |\beta|^2 - 2\Re\{\alpha^*\beta\} =  |\alpha-\beta|^2$. Also, note that 
\begin{align*}
 \bigg| \Re\left\{\left< \sqrt{\lambda} \beta P_{{\cal K}^{(k)}} a^* P_{{\cal H}^{(k)}\ominus{\cal K}^{(k)}} u, P_{{\cal K}^{(k)}} u\right> \right\}\bigg| 
 &\leq |\beta|\sqrt{\lambda} \left\| P_{{\cal K}^{(k)}} a^* P_{{\cal H}^{(k)}\ominus{\cal K}^{(k)}} u\right\| \left\| P_{{\cal K}^{(k)}} u \right\| \nonumber \\
 &\leq |\beta| \sqrt{\lambda k} \|u\|  \left\| P_{{\cal K}^{(k)}} u \right\|.
\end{align*}
Therefore, we have that Assumption \ref{assump:approx}\ref{assump:dissipate-approx} holds for the defined ${\cal K}^{(k)}$and the defined ${\gamma^{(\alpha\beta)}_k}$ with $q^{(k;\alpha\beta)}_{a} =  |\beta| \sqrt{\lambda k}$.

\paragraph*{Assumption \ref{assump:approx}\ref{assump:dissipate-true}:}
For any $u \in {\cal H}_0$, $H P_{{\cal M}^{(k)}}u = \left(\Delta (k+1) + \chi k(k+1) \right) P_{{\cal M}^{(k)}}u$. 
This implies that $\Re\{ \langle T^{(\alpha\beta)}_t (\imath H) P_{{\cal M}^{(k)}} u, T^{(\alpha\beta)}_t P_{{\cal M}^{(k)}} u \rangle \} = 0$. Also, ${L}^* L P_{{\cal M}^{(k)}}u = \lambda(k+1) P_{{\cal M}^{(k)}}u$ for any $u \in {\cal H}_0$.
Therefore, we have that
\begin{align*}
 & \Re\{\langle T^{(\alpha\beta)}_t {\cal L}^{(\alpha\beta)} P_{{\cal M}^{(k)}} u, T^{(\alpha\beta)}_t P_{{\cal M}^{(k)}} u\rangle \} \nonumber \\
 &=-\frac{1}{2} \big(\lambda(k+1) + |\alpha|^2 + |\beta|^2 - 2\Re\{\alpha^*\beta\}  \bigg) \left\| T^{(\alpha\beta)}_t P_{{\cal M}^{(k)}} u \right\|^2 \nonumber \\
 &\quad + \Re\left\{\left< T^{(\alpha\beta)}_t \sqrt{\lambda}\left(\beta a^* - \alpha^*a \right) P_{{\cal M}^{(k)}} u, T^{(\alpha\beta)}_t P_{{\cal M}^{(k)}} u\right> \right\} \nonumber \\ 
 &= -\hat{g}(k,\alpha,\beta) \left\| T^{(\alpha\beta)}_t P_{{\cal M}^{(k)}} u \right\|^2 + \hat{h}(t,k,\alpha,\beta,u).
\end{align*}
Similarly to the previous derivation, using that $T^{(\alpha\beta)}_t$ is a contraction, we see that Assumption \ref{assump:approx}\ref{assump:dissipate-true} holds for the defined ${\cal K}^{(k)}$and the defined ${\gamma^{(\alpha\beta)}_k}$ with
$q^{(k;\alpha\beta)}_{e} = |\alpha|\sqrt{\lambda(k+1)} + |\beta|\sqrt{\lambda(k+2)}$.

\paragraph*{Assumption \ref{assump:convergence}:}
From the defined ${\gamma^{(\alpha\beta)}_k}$, $q^{(k;\alpha\beta)}_{\cal L}$, $q^{(k;\alpha\beta)}_{a}$, $q^{(k;\alpha\beta)}_{e}$, we see that this assumption holds for any $r,s \in \mathbb{Z}_+$ such that $r+s \geq 3$. 

Finally, because Assumptions \ref{assump:domains}-\ref{assump:convergence} hold, Lemma \ref{lemma:main1}, Corollary \ref{cor:main2}, and Theorem \ref{thm:main3} can be applied to obtain error bounds on the finite-dimensional approximations. 

\paragraph{Numerical example of the Kerr-nonlinear optical cavity:} 
Consider $\lambda = 25$, $\Delta = 50$, and $\chi = -\Delta/60$ (these parameters are used in \cite{Mab11b}).
We also consider an input field with a constant amplitude of $\alpha = 0.1$ for $t \in [0,T]$ with $T = 5$.
Let $\psi = \ket{0}\otimes | \alpha\mathds{1}_{[0,t]} \rangle$.
We will now compute an error bound on $\left|\left| \left(U_t - U^{(k)}_t \right)^*\psi \right|\right|$ for different values of $k$.
From \eqref{eq:Ut-err3} (established in Theorem \ref{thm:main3}) and the fact that $\alpha\mathds{1}_{[0,t]} \in \mathfrak{S}'$ is a simple function, we have that
\begin{align*}
 \left\|\left(U_t - U^{(k)}_t\right)^* \psi \right\|^2 
  &\leq4  \left\| {U_t^{(k)}}^* \psi - \psi'_t \right\|  \nonumber \\
  &\quad + 2 \sum_{j=1}^{L'_t} \sum_{i=1}^\ell z^{k}_{r,s}\left((t_{i+1} - t_{i}),f'\mathds{1}_{[t_i,t_{i+1})}, g'_{j,t}\mathds{1}_{[t_i,t_{i+1})}\right] \|\psi'_{j,t}\|
\end{align*}
where $f' = \alpha\mathds{1}_{[0,t]}$, $\psi'_t = \sum_{j=1}^{L'_t} \psi'_{j,t}$, and $\psi'_{j,t} = u'_{j,t}\otimes e(g'_{j,t})$ with $g'_{j,t} \in \mathfrak{S}'$.
Now, using  \eqref{eq:approx-QSDE-sol} we have
\begin{align}
 \left\| {U_t^{(k)}}^* \psi - \psi'_t \right\|^2
 &=\| \psi \|^2 + \| \psi'_t \|^2 \nonumber \\
 &\quad  - 2   \sum_{j=1}^{L'_t} \| e(g'_{j,t}) \| \Re\left\{ \left< 0 \left| T^{(k;\alpha,\beta'_j(0))}_{t_1-t_0}  T^{(k;\alpha,\beta'_j(1))}_{t_2-t_1} \cdots T^{(\alpha,\beta'_j(\ell))}_{t-t_\ell} u'_{j,t} \right.\right> \right\} \label{eq:sim-bound}
\end{align}
where $\beta'_j(i) = g'_{j,t}\mathds{1}_{[t_{i},t_{i+1})}(s)$ for any $s \in [t_i,t_{i+1})$.
To find an appropriate $\psi'_t$, we set up the cost function
\begin{align}
 J_k(\psi'_t) 
 &= \left(1+\| \psi'_t \|^2  - 2  \sum_{j=1}^{L_t} \| e(g'_{j,t}) \| \Re\left\{ \left< 0 \left| T^{(k;\alpha,\beta'_j(0))}_{t_1-t_0}  T^{(k;\alpha,\beta'_j(1))}_{t_2-t_1} \cdots T^{(k;\alpha,\beta'_j(\ell))}_{t-t_\ell} u'_{j,t} \right.\right> \right\} \right)^{1/2},
 \label{eq:sim-cost}
 \end{align}
for any $\psi'_t  = \sum_{j=1}^{L'_t} u'_{j,t}\otimes e(g'_{j,t})$. We then choose $u'_{j,t}$ and $g'_{j,t}$ for $j=1,2,\ldots,L'_t$ such that it is a local minimizer of $J_k$.

For computational simplicity, let us fix $L_t = 1$ and take $t_{i+1}-t_i = 0.5$ for all $i = 0,1,2,\ldots,\ell$. 
With $t= t_{\ell+1}=5$, we then have that $\ell = 9$. We take an initialize guess at $\psi'_{1,t} = \ket{0}\otimes e(\alpha \mathds{1}_{[0,T]})$. 
Using the general purpose unconstrained optimization function \texttt{fminunc} in Matlab, a local minimizer $\psi'_t = u'_{1,t}\otimes e(g'_{1,t})$ was found as $u'_{1,t} = 0.9999\ket{0} - (0.0024 - 0.0094\imath)\ket{1} -0.0001\ket{2}$ and $g'_{1,t} = (0.0866 + 0.0462\imath) \mathds{1}_{[t_0,t_1)} + (0.0882+0.0471\imath)\mathds{1}_{[t_1,T)}$.
The corresponding cost is $J( \psi'_t) = 0.0096$. Consider $r = s = 2$.
Using $\psi'_t$, error bounds on $\left|\left| \left(U_t - U^{(k)}_t \right)^*\psi \right|\right|$ for various values of $k$ are shown in Table \ref{tab:spa-results}. Recall that the dimension of the reduced subspace is $k+1$.

\begin{table}[!t]
 \centering
  \caption{Numerical computation of error bounds on $\left|\left| \left(U_t - U^{(k)}_t \right)^*\psi \right|\right|$ for the Kerr-nonlinear optical cavity}
 \begin{tabular}{c||c}
  k & Error Bound \\ \hline\hline
  19 & 0.2366 \\
  29 & 0.2115 \\
  39 & 0.1970 \\
  49 & 0.1872 \\
  59 & 0.1799 \\
  69 & 0.1742 \\
  79 & 0.1696 \\
  89 & 0.1658 \\
  99 & 0.1625
 \end{tabular}
 \label{tab:spa-results}
\end{table}


\subsubsection{Atom-cavity model:} \label{ex:duan-kimble}

Consider a three-level atom coupled to an optical cavity, which itself is coupled to a single external coherent field ($m=1$) \cite{DK04}.
Let ${\cal H}_0 = \mathbb{C}^3 \otimes \ell^2$. We will use $\ket{e} = (1,0,0)^\top$, $\ket{+} = (0,1,0)^\top$, and $\ket{-} = (0,0,1)^\top$ to denote the canonical basis vectors in $\mathbb{C}^m$. We also consider $\ket{n}$, denoting the normalized $n$-photon Fock state of the system, as the basis vector of $\ell^2$ (as in the previous example). Let ${\cal D}_0 = \mathbb{C}^3 \otimes {\rm span}\{ \ket{n} \mid n \in \mathbb{Z}_+ \}$.
This atom-cavity system is then described by the following parameters:
\begin{align*}
 S =I\otimes I, \qquad
 L = I \otimes \sqrt{\lambda} a, \qquad
 H = \imath \chi\left( \sigma_{+}\otimes a - \sigma_{-}\otimes a^*  \right),
\end{align*}
where $\lambda, \chi > 0$, $\sigma_{+} = \ket{e}\bra{+}$, and $\sigma_{-} = \ket{+}\bra{e}$.
From Lemma 12 of \cite{BvHS08}, we have that Condition \ref{cond:cocycle}\ref{cond:cocycle-1} and \ref{cond:core}\ref{cond:core-1} holds for this atom-cavity model.

Consider ${\cal H}^{(k)} = \mathbb{C}^3 \otimes {\rm span}\left\{ \ket{n} \mid n = 0,1,2,\ldots,k \right\}$ and a system approximation of the form \eqref{eq:approx-system}. That is, we are approximating the dynamics of the harmonic oscillator. 
Recall that Conditions \ref{cond:cocycle}\ref{cond:cocycle-2} and \ref{cond:core}\ref{cond:core-2} hold immediately because ${\cal D}^{(k)} = {\cal H}^{(k)}$ has finite dimension.
We also see that ${\cal H}^{(k)} \subset {\cal D}_0$. Hence, Assumption \ref{assump:domains} holds. Similar to the previous example, we note that ${\cal M}^{(k)} = \mathbb{C}^3\otimes\ket{k+1}$. 
Now consider ${\cal K}^{(k)} = \mathbb{C}^3\otimes\ket{k}$ and ${\gamma^{(\alpha\beta)}_k} = \frac{1}{2}\left(\lambda k + |\alpha-\beta|^2\right)$.
Using similar derivation to the previous example, we will now show that Assumptions \ref{assump:approx} and \ref{assump:convergence} hold for the atom-cavity model and the approximation.

\paragraph*{Assumption \ref{assump:approx}\ref{assump:subspaceK1}:}
Note that
\begin{align*}
 \left.\left({\cal L}^{(\alpha\beta)} - {\cal L}^{(k;\alpha\beta)}\right)\right|_{{\cal K}^{(k)}} \hspace{-0.5em}
 &= \left(\beta\sqrt{\lambda} + \chi\sigma_{-}\right) \left. a^* \right|_{{\cal K}^{(k)}}.
\end{align*}
Thus, we have that Assumption \ref{assump:approx}\ref{assump:subspaceK1} holds with $q^{(k;\alpha\beta)}_{\cal L} = \sqrt{k+1}\left(|\beta|\sqrt{\lambda} + \chi\right)$.

\paragraph*{Assumption \ref{assump:approx}\ref{assump:subspaceK2}:}
This assumption follows for the defined ${\cal K}^{(k)}$ because \\ $\ker\left(\left.\left({\cal L}^{(\alpha\beta)} - {\cal L}^{(k;\alpha\beta)}\right)\right|_{{\cal H}^{(k)}} \right) = \mathbb{C}^3\otimes {\rm span}\left\{ \ket{n} \mid n = 0,1,2,\ldots,k-1 \right\}$ for all $\alpha,\beta \in \mathbb{C}$.

\paragraph*{Assumption \ref{assump:approx}\ref{assump:dissipate-approx}:}
Note that ${L^{(k)}}^* P_{{\cal K}^{(k)}} = 0$,
$P_{{\cal K}^{(k)}} L^{(k)} P_{{\cal H}^{(k)}} = 0$, \\
$P_{{\cal K}^{(k)}} {L^{(k)}}^* L^{(k)} P_{{\cal H}^{(k)}\ominus{\cal K}^{(k)}} = 0$, and 
$P_{{\cal K}^{(k)}} H P_{{\cal H}^{(k)}\ominus{\cal K}^{(k)}} =  -\imath\chi P_{{\cal K}^{(k)}} \sigma_{-} a^* P_{{\cal H}^{(k)}\ominus{\cal K}^{(k)}}$.
Also, for any $u \in {\cal H}^{(k)}$,
$P_{{\cal K}^{(k)}}{L^{(k)}}^* L^{(k)} P_{{\cal K}^{(k)}}u =  \lambda k P_{{\cal K}^{(k)}}u$.
From these identities and the fact that $H^{(k)}$ is self-adjoint, we have for any $u \in {\cal H}^{(k)}$ that
\begin{align*}
 \Re\{\langle P_{{\cal K}^{(k)}} {\cal L}^{(k;\alpha\beta)} u, P_{{\cal K}^{(k)}} u\rangle \} 
 &= \Re\{\langle P_{{\cal K}^{(k)}} {\cal L}^{(k;\alpha\beta)} \left(P_{{\cal K}^{(k)}} + P_{{\cal H}^{(k)}\ominus{\cal K}^{(k)}}\right) u, P_{{\cal K}^{(k)}} u\rangle \} \nonumber \\
 &= -\frac{1}{2} \left(\lambda k + |\alpha|^2 + |\beta|^2 - \Re\{\alpha^*\beta\} \right) \left\| P_{{\cal K}^{(k)}} u \right\|^2 \nonumber \\
 &\quad + \Re\left\{\left< P_{{\cal K}^{(k)}} \left(\beta\sqrt{\lambda} + \chi\sigma_{-}\right)   a^* P_{{\cal H}^{(k)}\ominus{\cal K}^{(k)}} u, P_{{\cal K}^{(k)}} u\right> \right\} \nonumber \\
 &= -g(k,\alpha,\beta) \left\| P_{{\cal K}^{(k)}} u \right\|^2 + h(k,\alpha,\beta,u).
\end{align*}
Noticing that $|\alpha|^2 + |\beta|^2 - 2\Re\{\alpha^*\beta\} = \braket{\alpha,\alpha} + \braket{\beta,\beta} - \braket{\alpha,\beta} - \overline{\braket{\alpha,\beta}} = |\alpha-\beta|^2$.
Also, note that 
\begin{align*}
 &\left|\Re\left\{\left< P_{{\cal K}^{(k)}} \left(\beta\sqrt{\lambda} + \chi\sigma_{-}\right)   a^* P_{{\cal H}^{(k)}\ominus{\cal K}^{(k)}} u, P_{{\cal K}^{(k)}} u\right> \right\}\right| \nonumber \\
 &\leq \left\|P_{{\cal K}^{(k)}} \left(\beta\sqrt{\lambda} + \chi\sigma_{-}\right)   a^* P_{{\cal H}^{(k)}\ominus{\cal K}^{(k)}} u\right\|  \left\| P_{{\cal K}^{(k)}} u \right\| \nonumber \\
 &\leq \sqrt{k}\left(\chi + |\beta|\sqrt{\lambda}\right) \|u\| \left\| P_{{\cal K}^{(k)}} u \right\|.
\end{align*}
Therefore, we have that Assumption \ref{assump:approx}\ref{assump:dissipate-approx} holds for the defined ${\cal K}^{(k)}$ and the defined ${\gamma^{(\alpha\beta)}_k}$ with $q^{(k;\alpha\beta)}_{a} = \sqrt{k}\left(\chi + |\beta|\sqrt{\lambda}\right)$.

\paragraph*{Assumption \ref{assump:approx}\ref{assump:dissipate-true}:}
For any $u \in {\cal H}_0$, note that ${L}^* L P_{{\cal M}^{(k)}}u = \lambda(k+1) P_{{\cal M}^{(k)}}u$ and 
$\imath H P_{{\cal M}^{(k)}}u = \imath \left(\Delta (k+1) + \chi k(k+1) \right) P_{{\cal M}^{(k)}}u$. Also, 
Thus, we have that
\begin{align*}
 &\Re\{\langle T^{(\alpha\beta)}_t {\cal L}^{(\alpha\beta)} P_{{\cal M}^{(k)}} u, T^{(\alpha\beta)}_t P_{{\cal M}^{(k)}} u\rangle \} \nonumber \\
 &=-\frac{1}{2} \bigg(\lambda(k+1) + |\alpha|^2 + |\beta|^2 -2\Re\{\alpha^*\beta\} \bigg)  \left\| T^{(\alpha\beta)}_t P_{{\cal M}^{(k)}} u \right\|^2 \nonumber \\
 &\quad + \Re\left\{\left< T^{(\alpha\beta)}_t \sqrt{\lambda}\left(\beta a^* - \alpha^* a \right) P_{{\cal M}^{(k)}} u, T^{(\alpha\beta)}_t P_{{\cal M}^{(k)}} u\right> \right\} \nonumber \\ 
 &\quad + \Re\left\{\left< T^{(\alpha\beta)}_t \chi\left( \sigma_{-} a^* - \sigma_{+} a \right) P_{{\cal M}^{(k)}} u, T^{(\alpha\beta)}_t P_{{\cal M}^{(k)}} u\right> \right\} \nonumber \\ 
 &= -\hat{g}(k,\alpha,\beta) \left\| T^{(\alpha\beta)}_t P_{{\cal M}^{(k)}} u \right\|^2 + \hat{h}(t,k,\alpha,\beta,u).
\end{align*}
Similar to the previous derivation, using that $T^{(\alpha\beta)}_t$ is a contraction, we see that Assumption \ref{assump:approx}\ref{assump:dissipate-true} holds for the defined ${\cal K}^{(k)}$ and ${\gamma^{(\alpha\beta)}_k}$ with
$q^{(k;\alpha\beta)}_{e} = \sqrt{k+1} \left( \chi + |\alpha| \sqrt{\lambda} \right) + \sqrt{k+2} \left( \chi + |\beta| \sqrt{\lambda} \right)$.

\paragraph*{Assumption \ref{assump:convergence}:}
From the defined ${\gamma^{(\alpha\beta)}_k}$, $q^{(k;\alpha\beta)}_{\cal L}$, $q^{(k;\alpha\beta)}_{a}$, $q^{(k;\alpha\beta)}_{e}$, we see that this assumption holds for any $r,s \in \mathbb{Z}_+$ such that $r+s \geq 3$. 

Finally, because Assumptions \ref{assump:domains}-\ref{assump:convergence} hold, Lemma \ref{lemma:main1}, Corollary \ref{cor:main2}, and Theorem \ref{thm:main3} can be applied to obtain error bounds on the finite-dimensional approximations. 


\section{Error bounds for adiabatic elimination approximations} \label{sec:main-adiabatic}

When an open quantum system comprises subsystems evolving at two well-separated timescales, the system dynamics can be approximated by eliminating the fast variables from the model description. This method is known as adiabatic elimination in the physics literature and  singular perturbation in the applied mathematics literature. In this section, we will establish error bounds for this type of finite dimensional approximation of open quantum systems, when the slow subsystem lives on a finite-dimensional subspace.

Let $U^{(k)}_t$ satisfying \eqref{eq:QSDE-k} describe  the time evolution of the original Markov open quantum system to be approximated.
We set ${\cal H}^{(k)} = {\cal H}$ and ${\cal D}^{(k)} = {\cal D}$. We also let $U_t$ satisfying \eqref{eq:QSDE} describe the time evolution of the adiabatic elimination approximation defined on a finite-dimensional subspace ${\cal H}_0 \subset {\cal H}$. In this setting, ${\cal D}_0 = {\cal H}_0$.
We note that Conditions \ref{cond:cocycle}\ref{cond:cocycle-1} and \ref{cond:core}\ref{cond:core-1} hold immediately because ${\cal H}_0$ is finite dimensional. As in \cite{BvHS08}, we assume the following.

\begin{assumption}[Singular scaling] \label{assump:AE-scaling}
For $i,j = 1,2,\ldots,m$, there exists operators $Y$, $Y^*$, $A$, $A^*$, $B$, $B^*$, $F_j$, $F_j^*$, $G_j$, $G_j^*$, $W_{ij}$, $ W_{ij}^*$ with the common invariant domain ${\cal D}$ such that
\begin{align*}
 \left[\imath H^{(k)} - \frac{1}{2}\sum_{i=1}^{m}{L^{(k)}_i}^*  L^{(k)}_i\right] = k^2 Y + kA + B, \qquad
 {L^{(k)}_j}^* = kF_j + G_j, \qquad
 {S^{(k)}_{ji}}^* = W_{ij}.
\end{align*}
\end{assumption}
\begin{assumption}[Structural requirements] \label{assump:AE-structure}
The subspace ${\cal H}_0 \subset {\cal H}$ is a closed subspace such that 
\begin{enumerate} \renewcommand{\theenumi}{\alph{enumi}}
 \item ${\cal D}_0 = P_{{\cal H}_0} {\cal D} \subset {\cal D}$
 \item $Y P_{{\cal H}_0} = 0$ on ${\cal D}$
 \item There exists $\tilde{Y}$, $\tilde{Y}^*$ with the common invariant domain ${\cal D}$ so that $\tilde{Y}Y = Y\tilde{Y} = P_{{\cal H}_0^\perp}$.
 \item $F_j^* P_{{\cal H}_0} = 0$ on ${\cal D}$ for all $j = 1,2,\ldots,m$
 \item $P_{{\cal H}_0} A P_{{\cal H}_0} = 0$ on ${\cal D}$.
\end{enumerate} 
\end{assumption}
\begin{assumption}[Limit coefficients] \label{assump:AE-coefficients}
The approximating system operators $(S,L,H)$ are such that
\begin{align*}
 S_{ji}^* &= \sum_{\ell=1}^m P_{{\cal H}_0} W_{i\ell} \left( F_\ell^* \tilde{Y} F_j + \delta_{\ell j} \right) P_{{\cal H}_0}, \\
 L_j^* &= P_{{\cal H}_0} \left( G_j - A\tilde{Y}F_j \right)P_{{\cal H}_0}, \\
 H &= \Im\left\{ P_{{\cal H}_0} \left( B - A\tilde{Y}A \right)P_{{\cal H}_0} \right\}.
\end{align*}
\end{assumption}

Let us now re-state an important result.
\begin{lemma}[{\cite[Lemma 10]{BvHS08}}]
Suppose Assumptions \ref{assump:AE-scaling}, \ref{assump:AE-structure}, and \ref{assump:AE-coefficients} hold.
The linear operators $S,L,H$, defined in Assumption \ref{assump:AE-coefficients}, have the common invariant domain ${\cal D}_0$.
The operator $S$ is unitary and $H$ is self-adjoint.
\end{lemma}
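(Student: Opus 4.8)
The plan is to establish the three assertions---that $\mathcal{D}_0$ is a common invariant domain, that $S$ is unitary, and that $H$ is self-adjoint---in increasing order of difficulty, working throughout on the finite-dimensional space ${\cal H}_0$ (so that ${\cal D}_0={\cal H}_0$ and no domain technicalities arise). Write $P:=P_{{\cal H}_0}$ and $P^\perp:=P_{{\cal H}_0^\perp}$. For the domain claim, observe that each of $S_{ji}^*$, $L_j^*$ and $H$ has the sandwiched form $P\,X\,P$, where $X$ is a finite product of the operators $W_{ij},F_j,G_j,A,B,\tilde Y$ and their adjoints, all of which leave ${\cal D}$ invariant by Assumptions \ref{assump:AE-scaling} and \ref{assump:AE-structure}. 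Since ${\cal D}_0={\cal H}_0=P{\cal D}\subset{\cal D}$, for $u\in{\cal D}_0$ the right-hand $P$ acts trivially, $X$ maps ${\cal D}$ into ${\cal D}$, and the left-hand $P$ returns an element of $P{\cal D}={\cal D}_0$; the adjoints share this form, so $\mathcal{D}_0$ is invariant. For self-adjointness, $H=\Im\{X\}=\tfrac{1}{2\imath}(X-X^*)$ with $X=P(B-A\tilde Y A)P$ is symmetric for any $X$, and since it restricts to a bounded operator on the finite-dimensional ${\cal H}_0$, symmetric implies self-adjoint.

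The substance is the unitarity of $S$, for which I would first record the algebraic consequences of the hypotheses. (i) Unitarity of $S^{(k)}$ makes the matrix $(W_{ij})$ unitary, i.e. $\sum_a W_{ia}W_{ja}^*=\sum_a W_{ai}^*W_{aj}=\delta_{ij}I$. (ii) Setting $K^{(k)}=\imath H^{(k)}-\tfrac12\sum_iL_i^{(k)*}L_i^{(k)}=k^2Y+kA+B$ and matching the $k^2$-coefficient in $K^{(k)}+K^{(k)*}=-\sum_iL_i^{(k)*}L_i^{(k)}$ (using $L_j^{(k)*}=kF_j+G_j$) gives the dissipation relation $Y+Y^*=-\sum_\ell F_\ell F_\ell^*$. (iii) Assumption \ref{assump:AE-structure}(d) gives $PF_\ell=0$ and $F_\ell^*P=0$, so that $R_i:=\sum_\ell W_{i\ell}F_\ell^*$ satisfies $R_i=R_iP^\perp$. (iv) Assumption \ref{assump:AE-structure}(c) yields $\tilde YY=Y\tilde Y=P^\perp$ and hence $[\tilde Y,P]=0$. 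Writing $S_{\ell i}^*=P\sigma_{\ell i}P$ with $\sigma_{\ell i}=R_i\tilde YF_\ell+W_{i\ell}$, the identity $S^*S=P$ is equivalent to $\sum_\ell P\sigma_{\ell i}P\sigma_{\ell j}^*P=\delta_{ij}P$ for all $i,j$.

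I would prove this by splitting the middle $P=I-P^\perp$. The piece $\sum_\ell P\sigma_{\ell i}\sigma_{\ell j}^*P$ is a clean algebraic identity: expanding $\sigma_{\ell i}\sigma_{\ell j}^*$ and summing over $\ell$, relation (ii) turns $\sum_\ell F_\ell F_\ell^*$ into $-(Y+Y^*)$, the pseudo-inverse relations give $\tilde Y(Y+Y^*)\tilde Y^*=P^\perp\tilde Y^*+\tilde YP^\perp$, and the absorptions $R_i=R_iP^\perp$, $R_j^*=P^\perp R_j^*$ make the four $\tilde Y$-terms cancel in pairs, while (i) collapses the scalar part, leaving $\sum_\ell\sigma_{\ell i}\sigma_{\ell j}^*=\delta_{ij}I$ and therefore $\sum_\ell P\sigma_{\ell i}\sigma_{\ell j}^*P=\delta_{ij}P$.

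The hard part is the remaining piece $\sum_\ell P\sigma_{\ell i}P^\perp\sigma_{\ell j}^*P$, which must vanish; this is where I expect the real obstacle. It suffices to show $P\sigma_{\ell i}P^\perp=0$, i.e. that the fast generator leaves ${\cal H}_0^\perp$ invariant. The term $PW_{i\ell}P^\perp$ vanishes because the scattering respects the slow/fast splitting ($[W_{i\ell},P]=0$), and since $(W_{ij})$ is invertible the remaining term reduces to the single operator identity $PF_p^*\tilde YF_\ell P^\perp=0$, equivalently $F_p{\cal H}_0\perp\tilde YF_\ell{\cal H}_0^\perp$. Establishing this is the crux: I would derive it from the range condition $PF_\ell=0$, the pseudo-inverse property $Y\tilde Y=P^\perp$ (so that $F_\ell w=Y\tilde YF_\ell w$ for $w\in{\cal H}_0^\perp$), and the dissipation relation, which together force $Yu=Y^*u=0$ for $u\in{\cal H}_0$ and yield the orthogonality. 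Conceptually, $S$ is the reduced (Schur-complement) scattering obtained by eliminating ${\cal H}_0^\perp$, and unitarity is preserved under this reduction; the careful accounting of the intermediate projection $P$---rather than the formal cancellation---is the step that genuinely uses every structural assumption. The companion identity $SS^*=P$ then follows by the symmetric argument with the row- and column-orthonormality of $(W_{ij})$ exchanged.
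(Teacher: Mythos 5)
The paper does not actually prove this statement: it is quoted verbatim from \cite[Lemma 10]{BvHS08}, so there is no internal proof to compare against and your argument has to stand on its own. The routine parts do stand: invariance of ${\cal D}_0$ under operators of the form $P X P$, and self-adjointness of $H=\Im\{P(B-A\tilde YA)P\}$ as a symmetric operator on the finite-dimensional ${\cal H}_0$, are fine. Your ``clean algebraic identity'' $\sum_\ell \sigma_{\ell i}\sigma_{\ell j}^*=\delta_{ij}I$ is also correct; it is exactly the computation built on $Y+Y^*=-\sum_\ell F_\ell F_\ell^*$, on $\tilde Y(Y+Y^*)\tilde Y^*=P^\perp\tilde Y^*+\tilde YP^\perp$, and on the absorptions $R_i=R_iP^\perp$.

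The gap is the cross term, and it cannot be closed from Assumptions \ref{assump:AE-scaling}--\ref{assump:AE-coefficients} as stated. You need $\sum_\ell P\sigma_{\ell i}P^\perp\sigma_{\ell j}^*P=0$, which you reduce---already invoking the unstated hypothesis $[W_{i\ell},P]=0$---to the identity $PF_p^*\tilde YF_\ell P^\perp=0$, and you assert that this follows from $PF_\ell=0$, $\tilde YY=Y\tilde Y=P^\perp$ and the dissipation relation. It does not. Take ${\cal H}=\mathbb{C}^3$ with orthonormal basis $e_1,e_2,e_3$, ${\cal H}_0=\mathbb{C}e_1$, $m=1$, $W=I$, $G=A=B=0$, $Fe_1=e_2$, $Fe_2=e_3$, $Fe_3=0$, and $Y$ defined by $Ye_1=0$, $Ye_2=-\tfrac12e_2-e_3$, $Ye_3=e_2-\tfrac12e_3$, with $\tilde Y$ the inverse of $Y$ on ${\cal H}_0^\perp$ and zero on ${\cal H}_0$ (explicitly $\tilde Ye_2=-\tfrac25e_2+\tfrac45e_3$, $\tilde Ye_3=-\tfrac45e_2-\tfrac25e_3$). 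Then $S^{(k)}=I$ is unitary, $\imath H^{(k)}=k^2(Y+\tfrac12FF^*)$ gives a self-adjoint $H^{(k)}$, $Y+Y^*=-FF^*$, and every item of Assumptions \ref{assump:AE-scaling}--\ref{assump:AE-coefficients} is satisfied; yet $F^*\tilde YFe_1=-\tfrac25e_1+\tfrac45e_2$, so $S^*=P(1+F^*\tilde YF)P=\tfrac35P$, which is not unitary. What rescues the lemma in \cite{BvHS08} and in the paper's examples is structure beyond what is recorded in these assumptions: the $W_{ij}$ commute with $P$, and $F$ raises while $F^*$ lowers a conserved grading, so that $F_p^*\tilde YF_\ell$ maps ${\cal H}_0$ into ${\cal H}_0$ and ${\cal H}_0^\perp$ into ${\cal H}_0^\perp$, which is precisely your crux identity. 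Without importing that structure explicitly, the step you yourself flag as ``the crux'' is not merely unproved but false, so the proposed proof of unitarity does not go through.
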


For any $\alpha,\beta \in \mathbb{C}^m$ and any $v \in {\cal D}$, let us define
\begin{align*}
 A^{(\alpha\beta)}v &:= \left(A + \sum_{j=1}^m F_j \beta_j - \sum_{i,j=1}^m \alpha_i^* W_{ij} F_j^*\right) v, \\
 B^{(\alpha\beta)}v &:= \left(-\frac{|\alpha|^2+|\beta|^2}{2} + B + \sum_{j=1}^m G_j\beta_j + \sum_{i,j=1}^m \alpha_i^* W_{ij} \left(\beta_j - G_j^*\right)\right) v.
\end{align*}
We now introduce an additional assumption required in obtaining our error bound results.

\begin{assumption}[Boundedness of operators] \label{assump:AE-boundedness}
For any $k \in \mathbb{Z}_+$ and any $\alpha,\beta \in \mathbb{C}^m$, 
we have that
\begin{align*}
 M^{(k;\alpha,\beta)}_1 &:= \left\| \left. \tilde{Y}P_{{\cal H}_0^\perp} \left(A^{(\alpha\beta)} - \frac{1}{k}\left(B^{(\alpha\beta)} - A^{(\alpha\beta)}\tilde{Y}A^{(\alpha\beta)} \right) \right) \right|_{{\cal H}_0} \right\|, \\
 M^{(k;\alpha,\beta)}_2 &:= \left\| \tilde{Y}P_{{\cal H}_0^\perp} \left(A^{(\alpha\beta)} - \frac{1}{k}\left(B^{(\alpha\beta)} - A^{(\alpha\beta)}\tilde{Y}A^{(\alpha\beta)} \right) \right){\cal L}^{(\alpha\beta)} \right. \\
 &\quad \left. + \left. \left(B^{(\alpha\beta)}\tilde{Y}A^{(\alpha\beta)} + \left( A^{(\alpha\beta)} - \frac{1}{k}B^{(\alpha\beta)}\right)\tilde{Y} P_{{\cal H}_0^\perp} \left(B^{(\alpha\beta)} - A^{(\alpha\beta)}\tilde{Y}A^{(\alpha\beta)} \right)\right)\right|_{{\cal H}_0} \right\|
\end{align*} 
are finite (i.e., $0 \leq M^{(k;\alpha\beta)}_1, M^{(k;\alpha\beta)}_2 < \infty$).
\end{assumption}

\begin{lemma} \label{lemma:AE-semigroup}
Suppose Assumptions \ref{assump:AE-scaling}, \ref{assump:AE-structure}, and \ref{assump:AE-coefficients} hold.
Then for any $k \in \mathbb{Z}_+$, any $\alpha,\beta \in \mathbb{C}^m$, any $u \in {\cal H}_0$, and any $t \geq 0$, it holds that
\begin{align}
 \left\| \left( T^{(k;\alpha\beta)}_t - T^{(\alpha\beta)}_t \right) u \right\|
 &\leq \frac{1}{k} \left( 2M^{(k;\alpha,\beta)}_1 + t M^{(k;\alpha,\beta)}_2 \right) \|u\|.
 \label{eq:AE-bound1}
 \end{align}
Moreover, if $T_t^{(\alpha \beta)}$ and $T_t^{(k;\alpha \beta)}$ are also norm continuous for each $\alpha, \beta$, with 
 $\| (I-T_t^{(\alpha \beta)})u\| \leq N_1^{(\alpha \beta)}(t) \|u\|$ and $\| (I-T_t^{(k;\alpha \beta)})u\| \leq N_2^{(k;\alpha \beta)}(t)\|u\|$ for some continuous nonnegative functions $N_1^{(\alpha \beta)}$ and $N_2^{(k;\alpha \beta)}$, then 
 \begin{align}
 \left\| \left( T^{(k;\alpha\beta)}_t - T^{(\alpha\beta)}_t \right) u \right\|
 &\leq \frac{1}{k} \left( M^{(k;\alpha,\beta)}_1(N_1^{(\alpha \beta)}(t) + N_2^{(k;\alpha \beta)}(t)) + t M^{(k;\alpha,\beta)}_2 \right) \|u\|.
 \label{eq:AE-bound2}
 \end{align}
\end{lemma}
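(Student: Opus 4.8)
The plan is to compare the full semigroup $T^{(k;\alpha\beta)}_t$ on ${\cal H}={\cal H}^{(k)}$ with the limit semigroup $T^{(\alpha\beta)}_t$ on ${\cal H}_0$ by constructing a corrected comparison vector and then applying the variation-of-constants formula exactly as in Lemma \ref{lem:semigroup-error}. First I would record the algebraic identity for the generator: substituting Assumption \ref{assump:AE-scaling} into \eqref{eq:generator} and collecting powers of $k$ gives ${\cal L}^{(k;\alpha\beta)} = k^2 Y + kA^{(\alpha\beta)} + B^{(\alpha\beta)}$, where the $k^1$ and $k^0$ coefficients are precisely the $A^{(\alpha\beta)}$ and $B^{(\alpha\beta)}$ defined before Assumption \ref{assump:AE-boundedness}. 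Likewise, substituting Assumption \ref{assump:AE-coefficients} into \eqref{eq:generator} shows that the limit generator reduces to ${\cal L}^{(\alpha\beta)} = P_{{\cal H}_0}(B^{(\alpha\beta)} - A^{(\alpha\beta)}\tilde{Y}A^{(\alpha\beta)})P_{{\cal H}_0}$ on ${\cal H}_0$.

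The engine of the proof is the corrected comparison vector. Define $C^{(k;\alpha\beta)} := \tilde{Y}P_{{\cal H}_0^\perp}(A^{(\alpha\beta)} - \tfrac1k(B^{(\alpha\beta)} - A^{(\alpha\beta)}\tilde{Y}A^{(\alpha\beta)}))$, so that $M^{(k;\alpha,\beta)}_1 = \|C^{(k;\alpha\beta)}|_{{\cal H}_0}\|$, and set $\Phi(t) := (P_{{\cal H}_0} - \tfrac1k C^{(k;\alpha\beta)})T^{(\alpha\beta)}_t u$. This is the leading slow-manifold (slaving) correction: the $O(1/k)$ part $-\tfrac1k\tilde{Y}P_{{\cal H}_0^\perp}A^{(\alpha\beta)}$ is chosen so that the fast operator $k^2 Y$ cancels the $O(k)$ forcing (using $YP_{{\cal H}_0}=0$ and $Y\tilde{Y}=P_{{\cal H}_0^\perp}$), and the $O(1/k^2)$ part is chosen to cancel the $O(1)$ forcing that otherwise survives in ${\cal H}_0^\perp$. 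I would then form $E(t) := T^{(k;\alpha\beta)}_t u - \Phi(t)$ and, setting $\rho := \dot\Phi - {\cal L}^{(k;\alpha\beta)}\Phi$, observe that $E$ solves $\dot E = {\cal L}^{(k;\alpha\beta)}E - \rho$ with $E(0) = \tfrac1k C^{(k;\alpha\beta)}u$; the variation-of-constants formula \cite[Thm 3.1.3]{CZ95} gives $E(t) = T^{(k;\alpha\beta)}_t E(0) - \int_0^t T^{(k;\alpha\beta)}_{t-\tau}\rho(\tau)\,d\tau$. All terms lie in ${\cal D}$ since ${\cal H}_0={\cal D}_0\subset{\cal D}$ and the scaling operators leave ${\cal D}$ invariant, so the derivation is first run for $u\in{\cal D}_0={\cal H}_0$ and then holds on all of ${\cal H}_0$ because the final bounds involve only bounded operators and $\|u\|$.

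The central computation is the residual $\rho$, and this is the main obstacle. Expanding $\dot\Phi - {\cal L}^{(k;\alpha\beta)}\Phi$ and invoking the full set of structural identities $YP_{{\cal H}_0}=0$, $Y\tilde{Y}=\tilde{Y}Y=P_{{\cal H}_0^\perp}$, $F_j^*P_{{\cal H}_0}=0$ (hence $P_{{\cal H}_0}F_j=0$) and $P_{{\cal H}_0}A P_{{\cal H}_0}=0$ (hence $P_{{\cal H}_0}A^{(\alpha\beta)}P_{{\cal H}_0}=0$), the $O(k)$ and $O(1)$ contributions must cancel identically and reproduce ${\cal L}^{(\alpha\beta)}$; what survives is exactly $\rho(\tau) = -\tfrac1k R^{(k;\alpha\beta)}T^{(\alpha\beta)}_\tau u$, where $R^{(k;\alpha\beta)}$ is the operator inside $M^{(k;\alpha,\beta)}_2$ (its first piece is $C^{(k;\alpha\beta)}{\cal L}^{(\alpha\beta)}$ coming from $\dot\Phi$, and the remaining pieces are the $O(1/k)$ leftovers from $kA^{(\alpha\beta)}$ and $B^{(\alpha\beta)}$ acting on the two correction terms). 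Hence $\|\rho(\tau)\| \leq \tfrac1k M^{(k;\alpha,\beta)}_2\|u\|$, using that $T^{(\alpha\beta)}_\tau$ is a contraction. Verifying these two exact cancellations and matching the leftover operator to $M^{(k;\alpha,\beta)}_2$ is the delicate, error-prone step and is where every clause of Assumptions \ref{assump:AE-structure} and \ref{assump:AE-coefficients} is used.

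Finally I would assemble the estimate. Since $T^{(k;\alpha\beta)}_t$ is a contraction, $\|E(0)\|\leq \tfrac1k M^{(k;\alpha,\beta)}_1\|u\|$ and $\int_0^t\|\rho\|\,d\tau \leq \tfrac{t}{k}M^{(k;\alpha,\beta)}_2\|u\|$, while $\|\Phi(t)-T^{(\alpha\beta)}_t u\| = \tfrac1k\|C^{(k;\alpha\beta)}T^{(\alpha\beta)}_t u\| \leq \tfrac1k M^{(k;\alpha,\beta)}_1\|u\|$; combining $(T^{(k;\alpha\beta)}_t - T^{(\alpha\beta)}_t)u = E(t) + (\Phi(t)-T^{(\alpha\beta)}_t u)$ through the triangle inequality gives \eqref{eq:AE-bound1}. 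For the sharper bound \eqref{eq:AE-bound2} I would not bound the two $O(1/k)$ boundary terms separately but instead group them, using that the $\tfrac1k C^{(k;\alpha\beta)}u$ contributions cancel: $T^{(k;\alpha\beta)}_t E(0) + (\Phi(t)-T^{(\alpha\beta)}_t u) = \tfrac1k(T^{(k;\alpha\beta)}_t - I)C^{(k;\alpha\beta)}u + \tfrac1k C^{(k;\alpha\beta)}(I - T^{(\alpha\beta)}_t)u$. Applying the norm-continuity hypotheses, namely $\|(I-T^{(k;\alpha\beta)}_t)C^{(k;\alpha\beta)}u\|\leq N_2^{(k;\alpha\beta)}(t)M^{(k;\alpha,\beta)}_1\|u\|$ and, since $(I-T^{(\alpha\beta)}_t)u\in{\cal H}_0$, $\|C^{(k;\alpha\beta)}(I-T^{(\alpha\beta)}_t)u\|\leq M^{(k;\alpha,\beta)}_1 N_1^{(\alpha\beta)}(t)\|u\|$, replaces the crude factor $2M^{(k;\alpha,\beta)}_1$ by $M^{(k;\alpha,\beta)}_1(N_1^{(\alpha\beta)}(t)+N_2^{(k;\alpha\beta)}(t))$, yielding \eqref{eq:AE-bound2}.
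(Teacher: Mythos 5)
Your proposal is, in substance, the paper's own proof rewritten in corrector/residual language: the paper adds and subtracts $\frac{1}{k}{\cal L}^{(k;\alpha\beta)}\tilde{Y}A^{(\alpha\beta)}$ and $\frac{1}{k^2}{\cal L}^{(k;\alpha\beta)}\tilde{Y}P_{{\cal H}_0^\perp}\bigl(B^{(\alpha\beta)}-A^{(\alpha\beta)}\tilde{Y}A^{(\alpha\beta)}\bigr)$ inside the variation-of-constants integral and integrates by parts, which produces exactly your two boundary terms $\frac{1}{k}\bigl(T^{(k;\alpha\beta)}_tC^{(k;\alpha\beta)}u - C^{(k;\alpha\beta)}T^{(\alpha\beta)}_tu\bigr)$ and your integral remainder, and your regrouping of the boundary terms to get \eqref{eq:AE-bound2} is also the paper's. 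So the strategy is right and equivalent.

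There is, however, one concrete step that fails as literally written. Write $D := B^{(\alpha\beta)}-A^{(\alpha\beta)}\tilde{Y}A^{(\alpha\beta)}$. On ${\cal H}_0$ the $O(1)$ forcing left in ${\cal H}_0^\perp$ after the first-order correction is $+P_{{\cal H}_0^\perp}D\,v$, so the second-order corrector $\frac{1}{k^2}\Psi_2$ must satisfy $Y\Psi_2=-P_{{\cal H}_0^\perp}D$, i.e.\ $\Psi_2=-\tilde{Y}P_{{\cal H}_0^\perp}D$; the corrector is therefore $P_{{\cal H}_0}-\frac{1}{k}\tilde{Y}P_{{\cal H}_0^\perp}\bigl(A^{(\alpha\beta)}+\frac{1}{k}D\bigr)$, with a \emph{plus} sign on the $1/k$ term. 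With your $C^{(k;\alpha\beta)}$ (which copies the minus sign in the paper's definition of $M_1^{(k;\alpha,\beta)}$), the contribution $k^2Y\cdot\bigl(-\frac{1}{k}C^{(k;\alpha\beta)}\bigr)v=-kP_{{\cal H}_0^\perp}A^{(\alpha\beta)}v+P_{{\cal H}_0^\perp}Dv$ has the wrong sign at order $k^0$: the $O(1)$ forcing is doubled rather than cancelled, the residual becomes $\rho=-2P_{{\cal H}_0^\perp}D\,T^{(\alpha\beta)}_\tau u+O(1/k)$, and the resulting bound is not $O(1/k)$. You have inherited this from the paper: the paper's own (correct) intermediate identity for $\bigl({\cal L}^{(k;\alpha\beta)}-{\cal L}^{(\alpha\beta)}\bigr)v$ carries $+\frac{1}{k^2}{\cal L}^{(k;\alpha\beta)}\tilde{Y}P_{{\cal H}_0^\perp}D$, so integration by parts yields boundary operators $\tilde{Y}P_{{\cal H}_0^\perp}\bigl(A^{(\alpha\beta)}+\frac{1}{k}D\bigr)$; the minus signs in the displayed final representation and in the definitions of $M_1^{(k;\alpha,\beta)}$ and $M_2^{(k;\alpha,\beta)}$ in Assumption \ref{assump:AE-boundedness} appear to be typos (the relative signs inside $M_2^{(k;\alpha,\beta)}$ are likewise affected). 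Once the sign is fixed consistently in $C^{(k;\alpha\beta)}$ and in the two constants, your cancellation closes and the argument goes through exactly as in the paper.
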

\begin{proof}
First note, from the definition of a strongly continuous semigroup, that \cite{CZ95}
\begin{enumerate} \renewcommand{\theenumi}{\roman{enumi}}
 \item  $T_0^{(\alpha\beta)} = T_0^{(k;\alpha\beta)} = I$ for all $k \in \mathbb{Z}_+$,
 \item $\frac{d}{dt} T_t^{(\alpha\beta)} u = {\cal L}^{(\alpha\beta)} u$  for all $u \in {\cal H}_0$ (since ${\cal H}_0$ is finite-dimensional),
 \item $\frac{d}{dt} T_t^{(k;\alpha\beta)} u = {\cal L}^{(k;\alpha\beta)} u$ for all $u \in {\rm Dom}({\cal L}^{(k;\alpha\beta)})$.
\end{enumerate}
From the above properties and Assumption \ref{assump:AE-structure}, we can write for all $u \in {\cal H}_0$ and all $t \geq 0$ that
\begin{align}
 \frac{d}{dt} \left(T^{(k;\alpha\beta)}_{t} - T^{(\alpha\beta)}_{t}\right) u 
 &= \left( {\cal L}^{(k;\alpha\beta)}T^{(k;\alpha\beta)}_{t} - {\cal L}^{(\alpha\beta)}T^{(\alpha\beta)}_{t}\right) u \nonumber \\
 &= {\cal L}^{(k;\alpha\beta)}\left(T^{(k;\alpha\beta)}_{t} - T^{(\alpha\beta)}_{t}\right) u + \left. \left( {\cal L}^{(k;\alpha\beta)} - {\cal L}^{(\alpha\beta)}\right) \right|_{\mathcal{H}_0} T^{(\alpha\beta)}_{t} u
 \label{eq:cauchy-problem2}
\end{align}
with $\left(T^{(\alpha\beta)}_0 - T^{(k;\alpha\beta)}_0\right) u = 0$. 
Since ${\cal L}^{(k;\alpha\beta)}$ is a generator of a semigroup on a Hilbert space, $\left(T^{(k;\alpha\beta)}_0 - T^{(\alpha\beta)}_0\right) u \in {\rm Dom}({\cal L}^{(k;\alpha\beta)})$ (due to Assumption \ref{assump:AE-structure}), 
and $\left. \left( {\cal L}^{(k;\alpha\beta)} - {\cal L}^{(\alpha\beta)}\right) \right|_{\mathcal{H}_0} T^{(\alpha\beta)}_{t} u \in {\cal C}^{1}([0,t];{\cal H})$ (the class of continuously differentiable functions from $[0,t]$ to ${\cal H}$), a unique solution of \eqref{eq:cauchy-problem2} exists and is given by \cite[Thm 3.1.3]{CZ95}
\begin{align}
 \left(T^{(k;\alpha\beta)}_{t} - T^{(\alpha\beta)}_{t}\right) u = \int_0^{t} T^{(k;\alpha\beta)}_{{t}-\tau} \left( {\cal L}^{(k;\alpha\beta)} - {\cal L}^{(\alpha\beta)}\right) T^{(\alpha\beta)}_{\tau} u d\tau
 \label{eq:semigroup1}
\end{align}
for all $t \geq 0$ and all $u \in {\cal H}_0$.

From Assumption \ref{assump:AE-scaling}, we now note that 
\begin{align*}
 {\cal L}^{(k;\alpha\beta)} u = \left( k^2 Y + k A^{(\alpha\beta)} + k B^{(\alpha\beta)} \right) u
\end{align*}
for any $u \in {\cal D}$ and any $\alpha,\beta \in \mathbb{C}^m$.
Using Assumptions \ref{assump:AE-scaling} and \ref{assump:AE-coefficients}, it has been shown that \cite[p. 3146]{BvHS08}
\begin{align*}
 {\cal L}^{(\alpha\beta)} u = P_{{\cal H}_0} \left( B^{(\alpha\beta)} - A^{(\alpha\beta)}\tilde{Y}A^{(\alpha\beta)} \right) u
\end{align*}
for any $u \in {\cal D}_0$ and any $\alpha,\beta \in \mathbb{C}^m$.

From the above identities and Assumption \ref{assump:AE-scaling}, for any $v \in {\cal H}_0$, we have that \cite{BvHS08}
\begin{align*}
 &\left( {\cal L}^{(k;\alpha\beta)} - {\cal L}^{(\alpha\beta)} \right) v \\
 &= \left( {\cal L}^{(k;\alpha\beta)} - {\cal L}^{(\alpha\beta)} - \frac{1}{k}{\cal L}^{(k;\alpha\beta)}\tilde{Y}A^{(\alpha\beta)} + \frac{1}{k}{\cal L}^{(k;\alpha\beta)}\tilde{Y}A^{(\alpha\beta)} \right. \\
 &\quad - \frac{1}{k^2}{\cal L}^{(k;\alpha\beta)}\tilde{Y}P_{{\cal H}_0^\perp} \left(B^{(\alpha\beta)} - A^{(\alpha\beta)}\tilde{Y}A^{(\alpha\beta)} \right) \\
 &\quad \left. + \frac{1}{k^2}{\cal L}^{(k;\alpha\beta)}\tilde{Y}P_{{\cal H}_0^\perp} \left(B^{(\alpha\beta)} - A^{(\alpha\beta)}\tilde{Y}A^{(\alpha\beta)} \right) \right)v \\
 &= \left[ \frac{1}{k}{\cal L}^{(k;\alpha\beta)}\tilde{Y}A^{(\alpha\beta)} - \frac{1}{k} \left(B^{(\alpha\beta)}\tilde{Y}A^{(\alpha\beta)} + A^{(\alpha\beta)} \tilde{Y} P_{{\cal H}_0^\perp} \left( B^{(\alpha\beta)} - A^{(\alpha\beta)}\tilde{Y}A^{(\alpha\beta)} \right) \right)  \right.  \\
 &\quad + \frac{1}{k^2}{\cal L}^{(k;\alpha\beta)}\tilde{Y}P_{{\cal H}_0^\perp} \left(B^{(\alpha\beta)} - A^{(\alpha\beta)}\tilde{Y}A^{(\alpha\beta)} \right) \\
 &\quad \left. - \frac{1}{k^2} B^{(\alpha\beta)} \tilde{Y} P_{{\cal H}_0^\perp} \left( B^{(\alpha\beta)} - A^{(\alpha\beta)}\tilde{Y}A^{(\alpha\beta)} \right)   \right] v.
\end{align*}
We also note, using integration by parts (in a similar manner to \cite[p. 898]{LM88} and \cite[Eq. (2.2)]{IK98}), that
\begin{align*}
 \int_0^t T^{(k;\alpha\beta)}_{t-\tau} {\cal L}^{(k;\alpha\beta)} \tilde{Y}A^{(\alpha\beta)} T^{(\alpha\beta)}_\tau u d\tau
 &= \left( T^{(k;\alpha\beta)}_{t}\tilde{Y}A^{(\alpha\beta)} - \tilde{Y}A^{(\alpha\beta)}T^{(\alpha\beta)}_t \right)u \\
 &\quad + \int_0^t T^{(k;\alpha\beta)}_{t-\tau} \tilde{Y}A^{(\alpha\beta)} {\cal L}^{(\alpha\beta)} T^{(\alpha\beta)}_\tau u d\tau
\end{align*}
where we have used that $- \frac{d}{d\tau} T^{(k;\alpha\beta)}_{t-\tau} u = T^{(k;\alpha\beta)}_{t-\tau} {\cal L}^{(k;\alpha\beta)} u$ 
and $\frac{d}{d\tau} \tilde{Y}A^{(\alpha\beta)} T^{(\alpha\beta)}_{\tau} u = \tilde{Y}A^{(\alpha\beta)} {\cal L}^{(\alpha\beta)} T^{(\alpha\beta)}_{\tau} u$ \cite{CZ95}. Similarly, we have that
\begin{align*}
 &\int_0^t T^{(k;\alpha\beta)}_{t-\tau} {\cal L}^{(k;\alpha\beta)} \tilde{Y}P_{{\cal H}_0^\perp} \left( B^{(\alpha\beta)} - A^{(\alpha\beta)}\tilde{Y}A^{(\alpha\beta)} \right)  T^{(\alpha\beta)}_\tau u d\tau \\
 &= \left( T^{(k;\alpha\beta)}_{t}\tilde{Y}P_{{\cal H}_0^\perp} \left( B^{(\alpha\beta)} - A^{(\alpha\beta)}\tilde{Y}A^{(\alpha\beta)} \right) - \tilde{Y}P_{{\cal H}_0^\perp} \left( B^{(\alpha\beta)} - A^{(\alpha\beta)}\tilde{Y}A^{(\alpha\beta)} \right)T^{(\alpha\beta)}_t \right)u \\
 &\quad + \int_0^t T^{(k;\alpha\beta)}_{t-\tau} \tilde{Y}P_{{\cal H}_0^\perp} \left( B^{(\alpha\beta)} - A^{(\alpha\beta)}\tilde{Y}A^{(\alpha\beta)} \right) {\cal L}^{(\alpha\beta)} T^{(\alpha\beta)}_\tau u d\tau.
\end{align*}

Now, because $T^{(\alpha\beta)}_t u \in {\cal H}_0$ for all $t \geq 0$, by substituting the above three identities into \eqref{eq:semigroup1}, we have that
\begin{align*}
 &\left(T^{(k;\alpha\beta)}_{t} - T^{(\alpha\beta)}_{t}\right) u \\
 &= \frac{1}{k} \left[ T^{(k;\alpha\beta)}_{t} \tilde{Y}P_{{\cal H}_0^\perp} \left(A^{(\alpha\beta)} - \frac{1}{k}\left(B^{(\alpha\beta)} - A^{(\alpha\beta)}\tilde{Y}A^{(\alpha\beta)} \right) \right) u \right. \\
 &\quad - \tilde{Y}P_{{\cal H}_0^\perp} \left(A^{(\alpha\beta)} - \frac{1}{k}\left(B^{(\alpha\beta)} - A^{(\alpha\beta)}\tilde{Y}A^{(\alpha\beta)} \right) \right) T^{(\alpha\beta)}_t u  \\
 &\quad + \int_0^t T^{(k;\alpha\beta)}_{t-\tau} \left( \tilde{Y}P_{{\cal H}_0^\perp} \left(A^{(\alpha\beta)} - \frac{1}{k}\left(B^{\alpha\beta} - A^{(\alpha\beta)}\tilde{Y}A^{(\alpha\beta)} \right) \right) \right){\cal L}^{(\alpha\beta)} T^{(\alpha\beta)}_\tau u d\tau \\
 &\quad + \int_0^t T^{(k;\alpha\beta)}_{t-\tau} \left(B^{(\alpha\beta)}\tilde{Y}A^{(\alpha\beta)} + \left( A^{(\alpha\beta)} - \frac{1}{k}B^{(\alpha\beta)}\right)\tilde{Y} P_{{\cal H}_0^\perp} \left(B^{(\alpha\beta)} - A^{(\alpha\beta)}\tilde{Y}A^{(\alpha\beta)} \right)\right)  \\
 &\qquad \quad \left. \times  T^{(\alpha\beta)}_\tau u d\tau \right].
\end{align*}
Here, we have used Assumption \ref{assump:AE-structure} that $P_{{\cal H}_0} AP_{{\cal H}_0} = 0$.
We then have that
\begin{align*}
 \left\|  T^{(k;\alpha\beta)}_{t} - T^{(\alpha\beta)}_{t} \right\|
 &\leq \frac{1}{k} \left[ M^{(k;\alpha\beta)}_1 \left(\left\|T^{(k;\alpha\beta)}_t\right\| +  \left\|T^{(\alpha\beta)}_t\right\|\right) \right. \nonumber \\ 
 &\quad \left. + M^{(k;\alpha\beta)}_2 \int_0^t \left\|T^{(k;\alpha\beta)}_t\right\| \left\|T^{(\alpha\beta)}_t\right\| d\tau \right] \|u\|.
\end{align*}
The result \eqref{eq:AE-bound1} then follows from the fact that $T^{(k;\alpha\beta)}_t$ and $T^{(k;\alpha\beta)}_t$ are contraction semigroups. Moreover, when $T_t^{(\alpha \beta)}$ and $T_t^{(k;\alpha \beta)}$ are also norm continuous and satisfy $\| (I- T_t^{(\alpha \beta)}) u\|  \leq N_1^{(\alpha \beta)}(t) \|u\|$ and   $\| (I- T_t^{(k;\alpha \beta)}) u\|  \leq N_2^{(k;\alpha \beta)}(t)\|u\|$ as stipulated in the lemma, we have 
\begin{align*}
 &\left(T^{(k;\alpha\beta)}_{t} - T^{(\alpha\beta)}_{t}\right) u \\
&= \frac{1}{k} \left[ T^{(k;\alpha\beta)}_{t} \tilde{Y}P_{{\cal H}_0^\perp} \left(A^{(\alpha\beta)} - \frac{1}{k}\left(B^{(\alpha\beta)} - A^{(\alpha\beta)}\tilde{Y}A^{(\alpha\beta)} \right) \right) (I-T_t^{(\alpha \beta)})u \right. \\
 &\quad - (I-T_t^{(k;\alpha \beta)})\tilde{Y}P_{{\cal H}_0^\perp} \left(A^{(\alpha\beta)} - \frac{1}{k}\left(B^{(\alpha\beta)} - A^{(\alpha\beta)}\tilde{Y}A^{(\alpha\beta)} \right) \right) T^{(\alpha\beta)}_t u  \\
 &\quad + \int_0^t T^{(k;\alpha\beta)}_{t-\tau} \left( \tilde{Y}P_{{\cal H}_0^\perp} \left(A^{(\alpha\beta)} - \frac{1}{k}\left(B^{\alpha\beta} - A^{(\alpha\beta)}\tilde{Y}A^{(\alpha\beta)} \right) \right) \right){\cal L}^{(\alpha\beta)} T^{(\alpha\beta)}_\tau u d\tau \\
 &\quad + \int_0^t T^{(k;\alpha\beta)}_{t-\tau} \left(B^{(\alpha\beta)}\tilde{Y}A^{(\alpha\beta)} + \left( A^{(\alpha\beta)} - \frac{1}{k}B^{(\alpha\beta)}\right)\tilde{Y} P_{{\cal H}_0^\perp} \left(B^{(\alpha\beta)} - A^{(\alpha\beta)}\tilde{Y}A^{(\alpha\beta)} \right)\right)  \\
 &\qquad \quad \left. \times  T^{(\alpha\beta)}_\tau u d\tau \right].
\end{align*}
From this and the contractivity of $T_t^{(\alpha \beta)}$ and $T_t^{(k;\alpha \beta)}$, it follows that
\begin{align*}
\left\|  T^{(k;\alpha\beta)}_{t} - T^{(\alpha\beta)}_{t} \right\|
&\leq \frac{1}{k} \left[ M^{(k;\alpha\beta)}_1 \left(N_1^{(\alpha \beta)}(t) + N_2^{(k;\alpha \beta)}(t) \right) + M^{(k;\alpha\beta)}_2 t \right] \|u\|.
\end{align*}
This establishes the lemma statement.
\end{proof}

Recall that $\mathfrak{S}' \subset L^2([0,T];\mathbb{C}^m)$ is the set of all simple functions in $L^2([0,T];\mathbb{C}^m)$, which is dense in $L^2([0,T];\mathbb{C}^m)$.
Similar to the previous section, let ${\cal U}_0 = \{u\otimes e(f) \mid u \in {\cal H}_0, f \in \mathfrak{S}' \}$ and ${\cal U}^{(k)} = \{ u\otimes e(f) \mid u \in {\cal H}^{(k)}, f \in \mathfrak{S}' \}$ for $k>0$.

\begin{lemma} \label{lemma:AE-main1}
Suppose Assumptions \ref{assump:AE-scaling}, \ref{assump:AE-structure}, \ref{assump:AE-coefficients}, and \ref{assump:AE-boundedness} hold. Then for any $k \in \mathbb{Z}_+$, any $\psi_1,\psi_2 \in {\cal U}_0$, and any $t \in [0,T]$ with $0 \leq T < \infty$, we have that
\begin{align}
 \left| \left< \left( U^{(k)}_t - U_t \right)^* \psi_1,\psi_2 \right>\right|
 \leq \frac{1}{k} \sum_{i=0}^\ell \left( 2M^{(k;\alpha_1(i)\alpha_2(i))}_1 + (t_{i+1}-t_i)M^{(k;\alpha_1(i)\alpha_2(i))}_2 \right) \| \psi_1\| \| \psi_2\|
 \label{eq:AE-Ut-err1}
\end{align}
where $t_0 = 0$, $t_{\ell+1} = t$, and $ f_j = \sum_{i=0}^{\ell} \alpha_j(i)\mathds{1}_{[t_i,t_{i+1})}$ for $j = 1,2$. Moreover, if $T_t^{(\alpha \beta)}$ and $T_t^{(k;\alpha \beta)}$ are also norm continuous for each $\alpha, \beta$, with $\| (I-T_t^{(\alpha \beta)})u\| \leq N_1^{(\alpha \beta)}(t) \|u\|$ and $\| (I-T_t^{(k;\alpha \beta)})u\| \leq N_2^{(k;\alpha \beta)}(t)\|u\|$ for some continuous nonnegative functions $N_1^{(\alpha \beta)}$ and $N_2^{(k;\alpha \beta)}$, then 
\begin{align}
 \left| \left< \left( U^{(k)}_t - U_t \right)^* \psi_1,\psi_2 \right>\right|
 &\leq \frac{1}{k} \sum_{i=0}^\ell \left( M^{(k;\alpha_1(i)\alpha_2(i))}_1 \bigl(N_1^{(\alpha \beta)}(t_{i+1}-t_i) + N_2^{(k;\alpha \beta)}(t_{i+1}-t_i)\bigr) \right. \nonumber \\ 
 &\quad \left. + (t_{i+1}-t_i)M^{(k;\alpha_1(i)\alpha_2(i))}_2 \right) \| \psi_1\| \| \psi_2\|
 \label{eq:AE-Ut-err1b}
\end{align}

\end{lemma}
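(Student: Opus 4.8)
The plan is to follow the template of Lemma~\ref{lemma:main1}, simply replacing the single-step semigroup error estimate of Lemma~\ref{lem:semigroup-error} by the adiabatic-elimination bound \eqref{eq:AE-bound1} of Lemma~\ref{lemma:AE-semigroup}. First I would observe that since $\mathfrak{S}' \subset \mathfrak{S}$, we have $\mathcal{U}_0 \subset \mathcal{H}_0 \underline{\otimes} \mathcal{E}$, so the quantum stochastic integrals defining both $U_t$ and $U^{(k)}_t$ are well defined on $\psi_1,\psi_2$, and $\|\psi_j\| = \|u_j\|\,\|e(f_j)\|$. Writing $f_j = \sum_{i=0}^\ell \alpha_j(i)\mathds{1}_{[t_i,t_{i+1})}$ and invoking the cocycle factorization \eqref{eq:weak-bound-id} for each system separately (it applies to $U^{(k)}_t$ with $T^{(k;\alpha\beta)}$ because $\mathcal{H}_0 \subset \mathcal{H}^{(k)} = \mathcal{H}$, and to $U_t$ with $T^{(\alpha\beta)}$ since $u_1,u_2 \in \mathcal{H}_0$), the weak matrix element reduces, after Cauchy--Schwarz, to
$$\left|\left<(U^{(k)}_t - U_t)^*\psi_1,\psi_2\right>\right| \le \|e(f_1)\|\,\|e(f_2)\|\,\|u_1\|\,\left\|\left(\prod_{i=0}^\ell T^{(k;\alpha_1(i)\alpha_2(i))}_{t_{i+1}-t_i} - \prod_{i=0}^\ell T^{(\alpha_1(i)\alpha_2(i))}_{t_{i+1}-t_i}\right)u_2\right\|.$$

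The core step is to bound the norm of the difference of the two ordered products applied to $u_2 \in \mathcal{H}_0$ by a telescoping sum. Abbreviating $A_i = T^{(k;\alpha_1(i)\alpha_2(i))}_{t_{i+1}-t_i}$ and $B_i = T^{(\alpha_1(i)\alpha_2(i))}_{t_{i+1}-t_i}$, I would use $A_0\cdots A_\ell - B_0\cdots B_\ell = \sum_{i=0}^\ell A_0\cdots A_{i-1}(A_i - B_i)B_{i+1}\cdots B_\ell$. The point I expect to be the main obstacle — and the only place where the orientation of the telescoping matters — is that Lemma~\ref{lemma:AE-semigroup} controls $\|(A_i - B_i)v\|$ \emph{only} for $v \in \mathcal{H}_0$. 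This is precisely why the tail must consist of the approximating semigroups $B_{i+1}\cdots B_\ell$: since $U_t$ (hence $T^{(\alpha\beta)}$) lives on the finite-dimensional subspace $\mathcal{H}_0$, each $B_j$ maps $\mathcal{H}_0$ into itself, so $B_{i+1}\cdots B_\ell u_2 \in \mathcal{H}_0$ and \eqref{eq:AE-bound1} applies; by contrast $A_j = T^{(k;\alpha\beta)}$ acts on all of $\mathcal{H}$ and need not preserve $\mathcal{H}_0$. This reverses the arrangement used in Lemma~\ref{lemma:main1}, where it was the finite-dimensional truncated semigroups that sat on the tail.

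With the orientation fixed, each summand is controlled using the contractivity of the head $A_0\cdots A_{i-1}$ (Condition~\ref{cond:cocycle} makes both cocycles contraction cocycles, so the associated semigroups are contractions) together with $\|B_{i+1}\cdots B_\ell u_2\| \le \|u_2\|$, giving
$$\|A_0\cdots A_{i-1}(A_i-B_i)B_{i+1}\cdots B_\ell u_2\| \le \tfrac{1}{k}\left(2M_1^{(k;\alpha_1(i)\alpha_2(i))} + (t_{i+1}-t_i)M_2^{(k;\alpha_1(i)\alpha_2(i))}\right)\|u_2\|.$$
Summing over $i$, multiplying through by $\|e(f_1)\|\,\|e(f_2)\|\,\|u_1\|$, and using $\|\psi_j\| = \|u_j\|\,\|e(f_j)\|$ then yields \eqref{eq:AE-Ut-err1}. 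For the refined bound \eqref{eq:AE-Ut-err1b} I would repeat the argument verbatim, substituting the norm-continuous estimate \eqref{eq:AE-bound2} of Lemma~\ref{lemma:AE-semigroup} in place of \eqref{eq:AE-bound1} at the single-step stage, with the functions $N_1^{(\alpha\beta)}$ and $N_2^{(k;\alpha\beta)}$ evaluated at the interval lengths $t_{i+1}-t_i$.
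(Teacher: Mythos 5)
Your proposal is correct and is exactly the argument the paper intends: its own proof of this lemma is a one-line reference to repeating the proof of Lemma~\ref{lemma:main1} with the single-step estimate \eqref{eq:AE-bound1} of Lemma~\ref{lemma:AE-semigroup} in place of Lemma~\ref{lem:semigroup-error}. You have in fact supplied more detail than the paper does, and you correctly handle the one point the paper leaves implicit, namely orienting the telescoping sum so that each difference $T^{(k;\alpha\beta)}_{\cdot}-T^{(\alpha\beta)}_{\cdot}$ acts on a vector that the tail of $T^{(\alpha\beta)}$-factors keeps inside $\mathcal{H}_0$, where \eqref{eq:AE-bound1} is applicable.
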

\begin{proof}
This proof follows similar arguments to the proof of Lemma \ref{lemma:main1} using the result \eqref{eq:AE-bound1} established in Lemma \ref{lemma:AE-semigroup}.
\end{proof}

\begin{corollary}\label{cor:AE-main2}
Suppose Assumptions \ref{assump:AE-scaling}, \ref{assump:AE-structure}, \ref{assump:AE-coefficients}, and \ref{assump:AE-boundedness} hold.
For any $t \in [0,T]$ with $0 < T < \infty$, any $\psi_1 = u_1\otimes e(f_1),\psi_2 = u_2\otimes e(f_2) \in {\cal H}_0\otimes{\cal F}$, we have that
\begin{align}
 \left| \left< \left(U^{(k)}_t - U_t\right)^* \psi_1,\psi_2 \right> \right| 
 &\leq 2 \Big( \|u_1\| \|e(f_1)-e(f'_1)\| \|\psi_2\| + \|u_2\| \|e(f_2)-e(f'_2)\| \|\psi_1\| \Big) \nonumber \\
 &\quad + \frac{1}{k} \sum_{i=0}^\ell \left( 2M^{(k;\alpha_1(i)\alpha_2(i))}_1 + (t_{i+1}-t_i)M^{(k;\alpha_1(i)\alpha_2(i))}_2 \right) \|\psi_1'\|  \|\psi_2'\|.
 \label{eq:AE-Ut-err2}
\end{align}
for any $\psi_j' = u_j  \otimes e(f_j') \in {\cal U}^{(k)}$ with for some  $\ell \in \mathbb{Z}_+$ and a sequence $t_0=0 <t_1 < \ldots <t_{\ell}<t_{\ell+1}=t$ such that $f'_1 =\sum_{i=0}^{\ell} \alpha_1(i)\mathds{1}_{[t_i,t_{i+1})}$ and $f'_2 = \sum_{i=0}^{\ell}\alpha_2(i) \mathds{1}_{[t_i,t_{i+1})}$.  If in addition,  $T_t^{(\alpha \beta)}$ and $T_t^{(k;\alpha \beta)}$ are also norm continuous for each $\alpha,\beta$, with 
 $\| (I-T_t^{(\alpha \beta)})u\| \leq N_1^{(\alpha \beta)}(t) \|u\|$ and $\| (I-T_t^{(k;\alpha \beta)})u\| \leq N_2^{(k;\alpha \beta)}(t)\|u\|$ for some continuous nonnegative functions $N_1^{(\alpha \beta)}$ and $N_2^{(k;\alpha \beta)}$, then
\begin{align}
 \left| \left< \left(U^{(k)}_t - U_t\right)^* \psi_1,\psi_2 \right> \right| 
 &\leq 2 \Big( \|u_1\| \|e(f_1)-e(f'_1)\| \|\psi_2\| + \|u_2\| \|e(f_2)-e(f'_2)\| \|\psi_1\| \Big) \nonumber \\
 &\quad + \frac{1}{k} \sum_{i=0}^\ell \left( M^{(k;\alpha_1(i)\alpha_2(i))}_1 \bigr(N_1^{(\alpha(i) \beta(i))}(t_{i+1}-t_i)  \right. \nonumber \\ 
 &\quad \left. + N_2^{(k;\alpha(i) \beta(i))}(t_{i+1}-t_i) \bigl)+ (t_{i+1}-t_i)M^{(k;\alpha_1(i)\alpha_2(i))}_2 \right) \|\psi_1'\|  \|\psi_2'\|.
 \label{eq:AE-Ut-err2b}
\end{align}
Moreover, it holds that
\begin{eqnarray}
 \lim_{k\rightarrow\infty} \left| \left< \left(U^{(k)}_t - U_t\right)^* \psi_1,\psi_2 \right> \right| &=& 0.
 \label{eq:AE-weak-convergence}
\end{eqnarray}
\end{corollary}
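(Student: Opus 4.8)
The plan is to mirror the weak-convergence argument used for the subspace-truncation case at the end of the proof of Corollary \ref{cor:main2}, now feeding in the adiabatic-elimination bound \eqref{eq:AE-Ut-err2} in place of the truncation bound. Fix $t \in [0,T]$ and $\psi_1 = u_1 \otimes e(f_1)$, $\psi_2 = u_2 \otimes e(f_2) \in {\cal H}_0 \otimes {\cal F}$; if either $u_1$ or $u_2$ vanishes the claim is trivial, so assume $u_1,u_2 \neq 0$. Given $\epsilon > 0$, I would first exploit the density of the simple functions $\mathfrak{S}'$ in $L^2([0,T];\mathbb{C}^m)$ together with the continuity of the exponential map $f \mapsto e(f)$ to select $f'_1,f'_2 \in \mathfrak{S}'$ for which the two quantities $\|u_1\|\,\|e(f_1)-e(f'_1)\|\,\|\psi_2\|$ and $\|u_2\|\,\|e(f_2)-e(f'_2)\|\,\|\psi_1\|$ are each smaller than $\epsilon/8$; this is possible because $\|u_1\|,\|u_2\|,\|\psi_1\|,\|\psi_2\|$ are fixed finite numbers and $\|e(f_j)-e(f'_j)\| \to 0$ as $f'_j \to f_j$. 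With these choices the first bracketed term on the right-hand side of \eqref{eq:AE-Ut-err2} is controlled by $\epsilon/2$.

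Once $f'_1$ and $f'_2$ are fixed, their common refinement determines a finite partition $0 = t_0 < t_1 < \cdots < t_{\ell+1} = t$ and the piecewise-constant amplitudes $\alpha_1(i),\alpha_2(i) \in \mathbb{C}^m$, so the remaining contribution in \eqref{eq:AE-Ut-err2} is the finite sum $\frac{1}{k}\sum_{i=0}^\ell \bigl( 2 M^{(k;\alpha_1(i)\alpha_2(i))}_1 + (t_{i+1}-t_i) M^{(k;\alpha_1(i)\alpha_2(i))}_2 \bigr)\|\psi'_1\|\,\|\psi'_2\|$, where $\ell$, the amplitudes, and $\|\psi'_1\| = \|u_1\|\,\|e(f'_1)\|$, $\|\psi'_2\| = \|u_2\|\,\|e(f'_2)\|$ are all independent of $k$. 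The key observation is that each $M^{(k;\alpha\beta)}_1$ and $M^{(k;\alpha\beta)}_2$ is bounded uniformly in $k$: inspecting Assumption \ref{assump:AE-boundedness}, both quantities are the operator norms of expressions that are affine in $1/k$, the $k$-dependence entering only through factors of $1/k$. Consequently there is a constant $C$, depending on the fixed amplitudes and on $t$ but not on $k$, with the remaining contribution bounded by $C/k$; choosing $k$ large enough that $C/k < \epsilon/2$ and combining with the previous paragraph yields $\left| \left< (U^{(k)}_t - U_t)^* \psi_1,\psi_2 \right> \right| < \epsilon$, which establishes \eqref{eq:AE-weak-convergence}.

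The one technical point requiring care is precisely the uniform-in-$k$ boundedness of $M^{(k;\alpha\beta)}_1$ and $M^{(k;\alpha\beta)}_2$, since Assumption \ref{assump:AE-boundedness} only asserts their finiteness for each individual $k$. I would resolve this by evaluating finiteness at two distinct values of $k$, say $k=1$ and $k=2$: writing each quantity in the form $\|X - \tfrac{1}{k} Z\|$ (respectively $\|P + \tfrac{1}{k} Q\|$ for $M_2$) for fixed operators restricted to ${\cal H}_0$, the triangle inequality applied to the two finite values separates out the $k$-independent part and the coefficient of $1/k$, showing both are finite operator norms. The uniform estimate $M^{(k;\alpha\beta)}_i \leq \|X\| + \|Z\|$ then holds for all $k \geq 1$. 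The remaining steps are routine, being direct transcriptions of the corresponding estimates in the proof of Corollary \ref{cor:main2}.
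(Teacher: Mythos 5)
Your proposal is correct and follows essentially the same route as the paper, whose proof of this corollary simply transcribes the argument of Corollary \ref{cor:main2} with the bounds of Lemma \ref{lemma:AE-main1} in place of Lemma \ref{lemma:main1}. Your extra observation that $M^{(k;\alpha\beta)}_1$ and $M^{(k;\alpha\beta)}_2$ are affine in $1/k$ and hence uniformly bounded in $k$ (so that the $1/k$ prefactor alone drives the limit, with no analogue of Assumption \ref{assump:convergence} needed) is a genuine point that the paper leaves implicit, and your two-evaluation trick resolves it cleanly.
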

\begin{proof}
This proof follows similar arguments to the proof of Corollary \ref{cor:main2} using \eqref{eq:AE-Ut-err1} and \eqref{eq:AE-Ut-err1b} established in Lemma \ref{lemma:AE-main1}.
\end{proof}

\begin{theorem} \label{thm:AE-main3}
Suppose Assumptions \ref{assump:AE-scaling}, \ref{assump:AE-structure}, \ref{assump:AE-coefficients}, and \ref{assump:AE-boundedness} hold.
Let $0 < T < \infty$. For any $t \in [0,T]$, consider any $L'_t \in \mathbb{Z}_+$,
any $\psi'_t = \sum_{j=1}^{L'_t}  \psi'_{j,t}$,  with $\psi'_{j,t} = u'_{j,t} \otimes e(g'_{j,t}) \in {\cal U}^{(k)}$  and  $u'_{j,t} \neq 0$. Also, consider any $f' \in \mathfrak{S}'$. Let $\ell$ be a positive integer and $t_0 <t_1 < \ldots < t_{\ell} <t_{\ell +1}=t$ be a sequence such that $f'= \sum_{i=0}^{\ell} \alpha(i) \mathds{1}_{[t_i,t_{i+1})}$  and $g'_{j,t} = \sum_{i=0}^{\ell} \beta_{j,t}(i) \mathds{1}_{[t_i,t_{i+1})}$.  Let $u \in {\cal H}^{(k)}$ with $\| u\|=1$, and $|f\rangle  = e(f)/\|e(f)\| \in {\cal F}$  (i.e., $|f\rangle$ is a coherent state with amplitude $f$), and $\psi = u \otimes |f \rangle$. Then,
\begin{align}
 \left\|\left(U^{(k)}_t - U_t\right)^* \psi \right\|^2 &\leq 4 \Big(  \| |f \rangle - |f' \rangle \|   + \| {U_t}^* \psi - \psi'_t \| \Big) \nonumber \\
 &\quad + \frac{2}{k} \sum_{j=1}^{L'_t} \sum_{i=1}^\ell \left( 2M^{(k;\alpha(i)\beta_{j,t}(i))}_1 + (t_{i+1}-t_i)M^{(k;\alpha(i)\beta_{j,t}(i))}_2 \right) \|\psi'_{j,t}\|, 
 \label{eq:AE-Ut-err3}
\end{align}
If in addition,  $T_t^{(\alpha \beta)}$ and $T_t^{(k;\alpha \beta)}$ are also norm continuous for each $\alpha,\beta$, with 
 $\| (I-T_t^{(\alpha \beta)})u\| \leq N_1^{(\alpha \beta)}(t) \|u\|$ and $\| (I-T_t^{(k;\alpha \beta)})u\| \leq N_2^{(k;\alpha \beta)}(t)\|u\|$ for some continuous nonnegative functions $N_1^{(\alpha \beta)}$ and $N_2^{(k;\alpha \beta)}$, then
\begin{align}
 \left\|\left(U^{(k)}_t - U_t\right)^* \psi \right\|^2 &\leq 4 \Big(  \| |f \rangle - |f' \rangle \|   + \| {U_t}^* \psi - \psi'_t \| \Big) \nonumber \\
 &\quad + \frac{2}{k} \sum_{j=1}^{L'_t} \sum_{i=1}^\ell \left( M^{(k;\alpha(i)\beta_{j,t}(i))}_1 \bigl(N_1^{(\alpha(i) \beta(i))}(t_{i+1}-t_i)  \right. \notag \\
 &\quad \left. + N_2^{(k;\alpha(i) \beta(i))}(t_{i+1}-t_i)\bigr)+ (t_{i+1}-t_i)M^{(k;\alpha(i)\beta_{j,t}(i))}_2 \right) \|\psi'_{j,t}\|. 
 \label{eq:AE-Ut-err3b}
\end{align}
 Moreover, it holds that
\begin{align}
 \lim_{k\rightarrow\infty} \left\|\left(U^{(k)}_t - U_t\right)^* \psi \right\| = 0.
 \label{eq:AE-strong-convergence}
\end{align}
for any $t \in [0,T]$ with $0 < T < \infty$.
\end{theorem}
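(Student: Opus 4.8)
The plan is to mirror the proof of Theorem~\ref{thm:main3}, with the roles of the two cocycles exchanged: in the present adiabatic-elimination setting it is the limit cocycle $U_t$ on the finite-dimensional subspace ${\cal H}_0$ that is unitary, whereas the original cocycle $U_t^{(k)}$ is merely a contraction. The first step is to use this unitarity to reduce the squared norm to a single inner product. Expanding $\|(U^{(k)}_t-U_t)^*\psi\|^2 = \|U^{(k)*}_t\psi\|^2 - 2\Re\{\langle U^{(k)*}_t\psi, U_t^*\psi\rangle\} + \|U_t^*\psi\|^2$ and invoking $\|U_t^*\psi\| = \|\psi\|$ (unitarity of $U_t$) together with $\|U^{(k)*}_t\psi\|\le\|\psi\|$ (contractivity of $U^{(k)}_t$), I obtain
\begin{align*}
 \left\|\left(U^{(k)}_t - U_t\right)^* \psi \right\|^2
 &\le 2\Re\left\{\left\langle \left(U_t-U^{(k)}_t\right)^*\psi,\, U_t^*\psi\right\rangle\right\}
 \le 2\left|\left\langle \left(U_t-U^{(k)}_t\right)^*\psi,\, U_t^*\psi\right\rangle\right|.
\end{align*}

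Next I would decompose the right-hand inner product by the triangle inequality, inserting $u\otimes|f'\rangle$ in place of $\psi=u\otimes|f\rangle$ and $\psi'_t$ in place of $U_t^*\psi$, so that it becomes the sum of $|\langle (U_t-U^{(k)}_t)^* u\otimes(|f\rangle-|f'\rangle), U_t^*\psi\rangle|$, $|\langle (U_t-U^{(k)}_t)^* u\otimes|f'\rangle, U_t^*\psi-\psi'_t\rangle|$, and $\sum_{j=1}^{L'_t}|\langle (U_t-U^{(k)}_t)^* u\otimes|f'\rangle, \psi'_{j,t}\rangle|$. The first two terms are handled by Cauchy--Schwarz and $\|(U_t-U^{(k)}_t)^*\|\le 2$, which after substituting $\|u\|=\||f'\rangle\|=\|\psi\|=1$ give $2\||f\rangle-|f'\rangle\|$ and $2\|U_t^*\psi-\psi'_t\|$; multiplied by the outer factor $2$ these produce the leading $4\big(\||f\rangle-|f'\rangle\|+\|U_t^*\psi-\psi'_t\|\big)$ in \eqref{eq:AE-Ut-err3}. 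Each summand of the third term is bounded by applying the weak bound \eqref{eq:AE-Ut-err1} of Lemma~\ref{lemma:AE-main1} with $\psi_1=u\otimes|f'\rangle$ and $\psi_2=\psi'_{j,t}$, whose $\tfrac1k$ prefactor and the scattering amplitudes $\alpha(i)$, $\beta_{j,t}(i)$ of $f'$ and $g'_{j,t}$ reproduce exactly the final double sum; this yields \eqref{eq:AE-Ut-err3}. For the norm-continuous refinement \eqref{eq:AE-Ut-err3b} I would simply substitute \eqref{eq:AE-Ut-err1b} for \eqref{eq:AE-Ut-err1} in this last step.

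For the strong convergence \eqref{eq:AE-strong-convergence} I would run the same density argument as in Theorem~\ref{thm:main3}. Fix $t$ and assume $\psi\ne0$; for $\epsilon>0$, density of $\mathfrak{S}'$ and continuity of the exponential map let me choose $f'\in\mathfrak{S}'$ with $\||f\rangle-|f'\rangle\|<\epsilon^2/12$, and density of finite sums of simple-coherent tensors in ${\cal H}_0\otimes{\cal F}$ lets me choose $\psi'_t=\sum_{j=1}^{L'_t}\psi'_{j,t}$ with $\|U_t^*\psi-\psi'_t\|<\epsilon^2/12$. Since $L'_t$ is then finite, the weak convergence \eqref{eq:AE-weak-convergence} of Corollary~\ref{cor:AE-main2} supplies a $k_1$ with the residual inner-product term below $\epsilon^2/6$ for all $k>k_1$; feeding these into \eqref{eq:AE-Ut-err3} and taking square roots gives $\|(U^{(k)}_t-U_t)^*\psi\|<\epsilon$. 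The main obstacle, exactly as flagged in the discussion following Theorem~\ref{thm:main3}, is the second term $\|U_t^*\psi-\psi'_t\|$: its smallness is guaranteed only abstractly by density, not constructively. A secondary point to verify is that the final double sum genuinely vanishes as $k\to\infty$; this follows because the overall $\tfrac1k$ factor multiplies the constants $M^{(k;\alpha\beta)}_1,M^{(k;\alpha\beta)}_2$, which by Assumption~\ref{assump:AE-boundedness} are finite and, owing to the $\tfrac1k$-corrected operators in their definitions, remain bounded (indeed converge) as $k\to\infty$, so the product tends to $0$.
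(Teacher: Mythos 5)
Your proposal is correct and follows essentially the same route as the paper, whose proof of this theorem simply defers to the argument of Theorem~\ref{thm:main3}: you reduce the squared norm to $2|\langle(U_t-U^{(k)}_t)^*\psi,U_t^*\psi\rangle|$ via unitarity of $U_t$ and contractivity of $U^{(k)}_t$, split by the triangle inequality, bound the residual term with Lemma~\ref{lemma:AE-main1} (resp.\ its norm-continuous variant), and run the same density argument for the limit. Your added remark that the $M^{(k;\alpha\beta)}_i$ stay bounded as $k\to\infty$ because their only $k$-dependence is through the $\tfrac1k$-corrected operators is a detail the paper leaves implicit, and it is accurate.
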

\begin{remark}
As with Theorem \ref{thm:main3}, a stronger  result of strong convergence uniformly over compact time intervals  $\lim_{k\rightarrow\infty} \mathop{\sup}_{0 \leq t \leq T} \left\|\left(U^{(k)}_t - U_t\right)^* \psi \right\| =0$ has been established in \cite[Theorem 11]{BvHS08} for adiabatic elimination  based on a Trotter-Kato theorem, but without error bounds for finite values of $k$.
\end{remark}
\begin{proof}
This proof follows similar arguments to the proof of Theorem \ref{thm:main3} using \eqref{eq:AE-Ut-err2} and \eqref{eq:AE-Ut-err2b} established in Corollary \ref{cor:AE-main2}.
\end{proof}


 \subsection{Adiabatic elimination examples}

\subsubsection{Elimination of a harmonic oscillator:}

Consider a class of open quantum systems that comprises a finite-dimensional atomic system coupled to a harmonic oscillator which is driven by $m$ external coherent fields (originally presented in \cite{BvHS08}).
Let ${\cal H} = {\cal H}'\otimes \ell^2$, where ${\cal H}'$ is a finite-dimensional Hilbert space and $\ell^2$ is the space of infinite complex-valued sequences with $\sum_{n=1}^\infty |x_n|^2 < \infty$. Similar to Example \ref{ex:mabuchi}, let $\{\ket{n}\}_{n\geq 0}$ be an orthonormal Fock state basis of $\ell^2$. On this basis, the annihilation, creation, and number operators can be defined (see, e.g., \cite{BvHS08}) satisfying 
\begin{align*}
 a\ket{n} = \sqrt{n}\ket{n-1}, \qquad a^*\ket{n} = \sqrt{n+1}\ket{n+1}, \qquad a^*a\ket{n} = n\ket{n},
\end{align*}
respectively. Following \cite{BvHS08}, we choose the dense domain ${\cal D} = {\cal H}'\otimes {\rm span}\{ \ket{n} \mid n \in \mathbb{Z}_+ \}$.
Let us define $K^{(k)} = \imath H^{(k)} - \frac{1}{2}\sum_{i=1}^{m}({L^{(k)}_i}^*  L^{(k)}_i)$.
Consider the system operators $(S^{(k)},L^{(k)},H^{(k)})$ defined such that
\begin{align*}
 K^{(k)} &= k^2 E_{11}\otimes a^*a + k (E_{10}\otimes a^* + E_{01}\otimes a) + E_{00}\otimes I, \\
 {L^{(k)}_j}^* &= kF_j \otimes a^* + G_j\otimes I, \\
 {S^{(k)}_{ji}}^* &= W_{ij}\otimes I
\end{align*}
where $E_{11},E_{10},E_{01},E_{00},F_j,G_j,W_{ij}$ are bounded operators on the finite-dimensional space ${\cal H}'$.

Now consider ${\cal H}_0 = {\cal D}_0 = {\cal H}'\otimes \mathbb{C}\ket{0}$. That is, the harmonic oscillator is eliminated from the model as it is forced into its ground state (i.e., $\ket{0}$) in the limit as $k \rightarrow \infty$. In quantum optics, this process is the adiabatic elimination of an optical cavity in the strong damping limit. 
Now consider an approximation system with the operators $(S,L,H)$ which are defined such that 
\begin{align*}
 S_{ji}^* &= \sum_{\ell=1}^m W_{i\ell} \left( F_\ell^* (E_{11})^{-1} F_j + \delta_{\ell j} \right) \otimes I. \\
 L_j^* &= \left(G_j - E_{01}(E_{11})^{-1}F_j\right) \otimes I, \\
 H &= \Im\left\{ E_{00} - E_{01}(E_{11})^{-1} E_{10}\right\}  \otimes I.
\end{align*}
Here, we stress that $S$ is unitary and $H$ is self-adjoint \cite[Proposition 14]{BvHS08}.

It has been shown in \cite{BvHS08} that Conditions \ref{cond:cocycle} and \ref{cond:core} hold for the above systems.
The original system satisfies Assumption \ref{assump:AE-scaling} with $Y = E_{11}\otimes a^*a$, $A = E_{10}\otimes a^* + E_{01}\otimes a$ and $B = E_{00}\otimes I$.
Suppose $E_{11}$ has a bounded inverse, then Assumption \ref{assump:AE-structure} is satisfied with $\tilde{Y}$ which is defined such that
$\tilde{Y}\psi\otimes\ket{n} = \frac{1}{n} (E_{11})^{-1} \psi\otimes\ket{n}$ for $n \geq 1$ and $\psi \in {\cal H}'$.
The operators $(S,L,H)$ also satisfy Assumption \ref{assump:AE-coefficients}. It now remains to show that Assumption \ref{assump:AE-boundedness} holds.

Note that $P_{{\cal H}_0^\perp} B^{(\alpha\beta) }P_{{\cal H}_0} = 0$.
Now let $Q^{(\beta)} = E_{10} + \sum_{j=1}^m F_j \beta_j $ and let $P^{(\alpha\beta)} = -\frac{|\alpha|^2+|\beta|^2}{2} + E_{00} + \sum_{j=1}^m G_j\beta_j + \sum_{i,j=1}^m \alpha_i^* W_{ij} \left(\beta_j - G_j^*\right)$.
Note that $Q^{(\beta)}$ and $P^{(\alpha\beta)}$ are bounded operators on the finite-dimensional Hilbert space ${\cal H}'$ for any $\alpha,\beta \in \mathbb{C}^m$ since $E_{10}$, $E_{00}$, $F_j$, $G_j$ and $W_{ij}$ are bounded operator on ${\cal H}'$.
For any $\psi \in {\cal H}'$, we have that 
\begin{align*}
 (\tilde{Y}A^{(\alpha\beta)}) \psi \otimes \ket{0} 
 &= (E_{11})^{-1}Q^{(\beta)} \psi \otimes \ket{1}, \\
 (\tilde{Y}P_{{\cal H}_0^\perp}A^{(\alpha\beta)}\tilde{Y}A^{(\alpha\beta)}) \psi \otimes \ket{0} 
 &= \frac{1}{\sqrt{2}}(E_{11})^{-1}Q^{(\beta)} \psi \otimes \ket{2}, \\
 (B^{(\alpha\beta)}\tilde{Y}A^{(\alpha\beta)}) \psi \otimes \ket{0} 
 &= P^{(\alpha\beta)} (E_{11})^{-1} Q^{(\beta)} \psi \otimes \ket{1}, \\
 (A^{(\alpha\beta)}\tilde{Y}P_{{\cal H}_0^\perp}A^{(\alpha\beta)}\tilde{Y}A^{(\alpha\beta)}) \psi \otimes \ket{0} 
 &= \sqrt{\frac{3}{2}}Q^{(\beta)}(E_{11})^{-1}Q^{(\beta)} \psi \otimes \ket{3} \\
 &\quad + \left( E_{01} - \sum_{i,j=1}^m \alpha_i ^* W_{ij} F_j^* \right)(E_{11})^{-1}Q^{(\beta)} \psi \otimes \ket{1}, \\
 (B^{(\alpha\beta)}\tilde{Y}P_{{\cal H}_0^\perp}A^{(\alpha\beta)}\tilde{Y}A^{(\alpha\beta)}) \psi \otimes \ket{0} 
 &=  \frac{1}{\sqrt{2}}P^{(\alpha\beta)} (E_{11})^{-1}Q^{(\beta)} \psi \otimes \ket{2}.
\end{align*}
From these identities, we see that Assumption \ref{assump:AE-boundedness} holds because $E_{11}$ has a bounded inverse, $E_{01}$, $F_j$, $W_{ij}$ $Q^{(\beta)}$, $P^{(\alpha\beta)}$ are bounded operators, and ${\cal L}^{(\alpha\beta)}$ is defined on the finite-dimensional subspace ${\cal H}_0$ (i.e., it is also a bounded operator). Thus, the conditions of Lemma \ref{lemma:AE-semigroup}, Lemma \ref{lemma:AE-main1}, Corollary \ref{cor:AE-main2}, and Theorem \ref{thm:AE-main3} have been verified.

\subsubsection{Atom-cavity model \cite[Example 15]{BvHS08}:}
Consider a system consisting of a three-level atom coupled to an optical cavity. The cavity and the uncoupled leg of the atom is driven by an external coherent field ($m=1$).
Let ${\cal H} = \mathbb{C}^3 \otimes \ell^2$. As in the previous example, we consider the orthonormal Fock state basis $\{ \ket{n} \}_{n\geq 0}$ of $\ell^2$. We will use $\ket{e} = (1,0,0)^\top$, $\ket{+} = (0,1,0)^\top$, and $\ket{-} = (0,0,1)^\top$ to denote the canonical basis vectors in $\mathbb{C}^3$. In the basis $\{ \ket{e},\ket{+},\ket{-} \}$, let us define $\sigma_{+}^{(+)} = \ket{e}\bra{+}$ and $\sigma_{+}^{(-)} = \ket{e}\bra{-}$.
We also define $\sigma_{-}^{(\pm)} = {\sigma_{+}^{(\pm)}}^*$ and $P_{-} = \ket{-}\bra{-}$. 
Here, ${\cal D} = \mathbb{C}^3 \otimes {\rm span}\{ \ket{n} \mid n \in \mathbb{Z}_+ \}$.
Under the rotating wave approximation and in the rotating frame of reference, the system is described by the following operators: \cite{BvHS08}
\begin{align*}
 S^{(k)} &= I, \\ 
 L^{(k)} &= I \otimes k\sqrt{\gamma} a, \\
 H^{(k)} &= \imath gk^2\left( \sigma_{+}^{(+)}\otimes a - \sigma_{-}^{(+)}\otimes a^* \right) + \imath k\left( \sigma_{+}^{(-)}\alpha - \sigma_{-}^{(-)} \alpha^* \right)\otimes I
\end{align*}
where $\gamma,g > 0$. Here, $\alpha \in \mathbb{C}$ is the amplitude of the external coherent field driving the cavity and the uncoupled leg of the atom.

Now consider ${\cal H}_0 = {\cal D}_0 = {\rm span}\{ \ket{+}\otimes\ket{0}, \ket{-}\otimes\ket{0} \}$. 
That is, the cavity oscillator and the excited state of the atom (i.e., $\ket{e}$) is eliminated from the model in the limit as $k \rightarrow \infty$. 
Consider an approximating system described the operators $(S,L,H)$ which are defined as
\begin{align*}
 S = \left(I - 2P_{-}\right) \otimes I, \qquad L = -\frac{\alpha\sqrt{\gamma}}{g} \sigma_{-}^{(+)}\sigma_{+}^{(-)}  \otimes I, \qquad H = 0.
\end{align*}
It can be easily verified that $S$ is unitary and $H$ is self-adjoint.

Again, it has been shown in \cite{BvHS08} the above systems satisfy Conditions \ref{cond:cocycle} and \ref{cond:core}. 
We then see that Assumption \ref{assump:AE-scaling} with $Y = -\frac{\gamma}{2}\otimes a^*a + g\left(\sigma_{-}^{+}\otimes a^* - \sigma_{+}^{(+)} \otimes a\right)$, $A = \left( \sigma_{-}^{(-)}\alpha^* - \sigma_{+}^{(-)}\alpha \right)\otimes I$, $B = 0$, $F = \sqrt{\gamma} \otimes a^*$, $G = 0$, and $W = I$. 
Let us define
\begin{align*}
 {\cal H}_j = {\rm span}\{ \ket{+}\otimes\ket{j}, \ket{-}\otimes\ket{j}, \ket{e}\otimes\ket{j-1} \}, \qquad \textnormal{for } j \in \mathbb{Z}_+.
\end{align*}
Assumption \ref{assump:AE-structure} holds with $\tilde{Y}$ which is defined, with respect to the basis $\{\ket{+}\otimes\ket{j}, \ket{-}\otimes\ket{j}, \ket{e}\otimes\ket{j-1}\}$, as \cite{BvHS08}
\begin{align*}
 \tilde{Y}|_{{\cal H}_j} = -\frac{1}{d_j}\left[\begin{array}{ccc} 
 \frac{\gamma}{2}(j-1) & 0 & g\sqrt{j} \\ 0 & \frac{2d_j}{j\gamma} & 0 \\ -g\sqrt{j} & 0 & \frac{j\gamma}{2}
 \end{array}\right], \qquad
 d_j = \frac{j(j-1)\gamma^2}{4} + jg^2.
\end{align*}
It can also be seen that Assumption \ref{assump:AE-coefficients} holds for the defined $(S,L,H)$. Thus, it remains to show that Assumption \ref{assump:AE-boundedness} holds.
Note that $B^{(\alpha\beta)} = -\frac{1}{2}|\alpha-\beta|^2$ and thus, $P_{{\cal H}_0^\perp} B^{(\alpha\beta)} P_{{\cal H}_0} = 0$.
Then note that 
\begin{align*}
 \tilde{Y}A^{(\alpha\beta)}P_{{\cal H}_0} &\subset {\cal H}_1, \\
 \tilde{Y}P_{{\cal H}_0^\perp}A^{(\alpha\beta)}\tilde{Y}A^{(\alpha\beta)}P_{{\cal H}_0} &\subset {\cal H}_2, \\
 B^{(\alpha\beta)}\tilde{Y}A^{(\alpha\beta)}P_{{\cal H}_0} &\subset {\cal H}_1, \\
 A^{(\alpha\beta)}\tilde{Y}P_{{\cal H}_0^\perp}A^{(\alpha\beta)}\tilde{Y}A^{(\alpha\beta)}P_{{\cal H}_0} &\subset {\cal H}_3 \oplus {\cal H}_1, \\
 B^{(\alpha\beta)}\tilde{Y}P_{{\cal H}_0^\perp}A^{(\alpha\beta)}\tilde{Y}A^{(\alpha\beta)}P_{{\cal H}_0} &\subset {\cal H}_2.
\end{align*}
From the above relations and the fact ${\cal L}^{(\alpha\beta)}$ is defined on the finite-dimensional subspace ${\cal H}_0$ (i.e., it is a bounded operator), we have that Assumption \ref{assump:AE-boundedness} holds.
\paragraph{Numerical example of the atom-cavity model:} 
Consider $\gamma = 25$, $g = 5$, and $t \in [0,T]$ with $T = 1$.
Let $\psi = (\ket{-}\otimes\ket{0})\otimes | \alpha\mathds{1}_{[0,T]}\rangle$ with $\alpha = 0.1$.
We will now compute the error bound on $\left|\left| \left(U_t - U^{(k)}_t \right)^*\psi \right|\right|$ for different values of $k$.
From \eqref{eq:AE-Ut-err3} (established in Theorem \ref{thm:AE-main3}) and the fact that $\alpha\mathds{1}_{[0,T]} \in \mathfrak{S}'$ is a simple function, we have that
\begin{align*}
 \left\|\left(U_t - U^{(k)}_t\right)^* \psi \right\|^2 
  &\leq 4  \left(\left\| {U_t}^* \psi - \psi'_t \right\|\right) \nonumber \\
  &\quad + \frac{2}{k} \sum_{j=1}^{L'_t} \sum_{i=1}^\ell \left( 2M^{(k;\alpha(i)\beta_{j,t}(i))}_1 + (t_{i+1}-t_i)M^{(k;\alpha(i)\beta_{j,t}(i))}_2 \right) \|\psi'_{j,t}\|. 
\end{align*}
where $\psi'_t = \sum_{j=1}^{L'_t} \psi'_{j,t}$, and $\psi'_{j,t} = u'_{j,t}\otimes e(g'_{j,t})$ with $g'_{j,t} \in \mathfrak{S}'$.
Similar to Example \ref{ex:mabuchi}, $\left\| {U_t}^* \psi - \psi'_t \right\|$ is bounded by \eqref{eq:sim-bound}.
Again, to find an appropriate $\psi'_t$, let us define the cost function $J(\psi'_t)$ as in \eqref{eq:sim-cost}
for any $\psi'_t  = \sum_{j=1}^{L_t} u'_{j,t}\otimes e(g'_{j,t}) \in {\cal U}^{(k)}$. We then find $\psi'_t$  that is a local minimizer of $J$ by numerical optimization.

For computational simplicity, we fix $L_t = 5$ and set $t_{i+1}-t_i = 10^{-3}$ for all $i= 0,1,2,\ldots,\ell$. 
With $t = T$, we then have that $\ell = 999$. Note that simultaneously optimizing over $1000$ time intervals is computationally intensive. Thus, we simplify the computation further by optimizing sequentially over blocks of $10$ time intervals at a time. Thus optimization is done over 100 blocks. Optimization of the first block is initialized with $\psi_{j,t}' = (\ket{-}\otimes\ket{0})\otimes e( \alpha\mathds{1}_{[0,t_{10}]} )$ for all $j = 1,2,\ldots,L_t$. The optimization result of each block is then used to initialize the optimization of the next block in the sequence. 
A local minimizer $\psi_{j,t}' = u_{j,t}' \otimes e(g'_{j,t})$, for $j = 1,2,\ldots,L_t$, was found  using Matlab as before, given by 
$u_{1,t}' = (0.1008+0.1549\imath) (\ket{+}\otimes\ket{0}) + (0.2311+0.1012\imath) (\ket{-}\otimes\ket{0})$, 
$u_{2,t}' = (0.0949-0.0031\imath) (\ket{+}\otimes\ket{0}) + (0.1500+0.0270\imath) (\ket{-}\otimes\ket{0})$, 
$u_{3,t}' = (-0.0245+0020\imath) (\ket{+}\otimes\ket{0}) + (0.2012+0.0311\imath) (\ket{-}\otimes\ket{0})$, 
$u_{4,t}' = (0.3266-0.1431\imath) (\ket{+}\otimes\ket{0}) + (0.2928-0.0995\imath) (\ket{-}\otimes\ket{0})$,  
$u_{5,t}' = -(0.4956+0.0101\imath) (\ket{+}\otimes\ket{0}) + (0.1232-0.0591\imath) (\ket{-}\otimes\ket{0})$.
The overall resulting optimized cost is $J( \psi'_t) = 0.0046$. Using $\psi'_t$, the error bounds on $\left|\left| \left(U_t - U^{(k)}_t \right)^*\psi \right|\right|$ using \eqref{eq:AE-Ut-err2} for various values of $k$ are shown in Table \ref{tab:ae-results}. 
\begin{table}[!t]
 \centering
  \caption{Numerical computation of error bounds on $\left|\left| \left(U_t - U^{(k)}_t \right)^*\psi \right|\right|$ for the adiabatic elimination approximation in the atom-cavity example}
 \begin{tabular}{c||c}
  k & Error bound \\ \hline\hline
  $10^4$ & $0.9347$ \\
  $10^5$ & $0.2957$ \\
  $10^6$ & $0.0309$ \\
  $10^7$ & $0.0131$ \\
  $10^8$ & $0.0096$
 \end{tabular}
 \label{tab:ae-results}
\end{table}


\section{Conclusion} \label{sec:conclusion}

This work has developed a framework for developing error bounds for finite dimensional  approximations of input-output quantum stochastic models defined on infinite-dimensional underlying Hilbert spaces, with possibly unbounded coefficients in their QSDEs. The framework exploits a contractive semigroup that can be associated with the QDESs. This gives for the first time error bound expressions for two types of approximations that are often employed in the literature, subspace truncation and adiabatic elimination. The bounds  are in principle computable and vanish for each $t$ in the limit as the parameter $k$, representing the dimension of the approximating subspace in the case of subspace truncation approximation or a large scaling parameter in the case of adiabatic elimination, goes to $\infty$. The theory developed was applied to some physical examples taken from the literature.

There are several directions for further investigation along the theme initiated in this paper. Devising a more efficient method for computing bounds for the term $\| {U_t^{(k)}}^* \psi - \sum_j \psi'_{j,t}\|$ for subspace truncation and $\| {U_t}^* \psi - \sum_j \psi'_{j,t}\|$ for adiabatic elimination, beyond the computationally intensive optimization based approach that was considered herein, will be important. Tensor network methods that have recently met a lot of success  for efficient simulation of one dimensional many-body systems could potentially be important for this purpose. There also remains the question of the conservatism in the error bounds and if there could be tighter bounds that can be achieved by using a different set of assumptions. In the numerical example of adiabatic elimination, the bound  \eqref{eq:AE-Ut-err3} was employed rather than the potentially less conservative \eqref{eq:AE-Ut-err3b}. This is because the latter requires determining whether  $T_t^{(k;\alpha \beta)}$ is a norm continuous semigroup and finding the bounding function $N_2^{(k;\alpha \beta)}$, a non-trivial task in general that deserves further investigation. Moreover, it would be interesting to see if there are exactly solvable QSDE models of a physical system with an infinite-dimensional system Hilbert space for initial states that lie in a finite-dimensional subspace of the original, against which the conservatism of error bounds can be assessed. The authors are  currently unaware of any such exactly solvable models.

\section*{Acknowledgements}
The authors are grateful for the support of the Australian Research Council under Discovery Project DP130104191. 
\section*{References}

\small
\bibliographystyle{IEEEtran}
\bibliography{IEEEabrv,rip,otr}

\end{document}